\newcolumntype{M}[1]{>{\vspace{3pt}\raggedleft\arraybackslash}m{#1}}
\pgfplotsset{compat=1.9,samples=1000,every axis/.append style={
font=\large,
line width=1pt,
tick style={line width=0.8pt}}}
\def\XXint#1#2#3{{\setbox0=\hbox{$#1{#2#3}{\int}$}
     \vcenter{\hbox{$#2#3$}}\kern-.5\wd0}}
\theoremstyle{plain}
\newtheorem{thm}{Theorem}[section]
\newtheorem{lem}[thm]{Lemma}
\newtheorem{prop}[thm]{Proposition}
\newtheorem{ass}{Assumption}
\theoremstyle{remark}
\numberwithin{equation}{section}
\newcommand{\R}{\mathds{R}}
\newcommand{\C}{\mathds{C}}
\newcommand{\N}{\mathds{N}}
\newcommand{\Z}{\mathds{Z}}
\newcommand{\Sym}[1]{\textrm{Sym}(#1)}
\renewcommand{\div}{\textrm{div }}
\newcommand{\review}[1]{{\color{black}{#1}}}
\newcommand{\multiplication}{m}
\newcommand{\Multiplication}{\bm{\tau}}
\DeclareMathOperator{\eps}{\varepsilon}
\newcommand{\sty}[1]{\bm{#1}}
\newcommand{\fb}{\sty{ b}}
\newcommand{\fd}{\sty{ d}}
\newcommand{\fu}{\sty{ u}}
\newcommand{\fv}{\sty{ v}}
\newcommand{\fx}{\sty{ x}}
\newcommand{\fy}{\sty{ y}}
\newcommand{\fG}{\sty{ G}}
\newcommand{\fK}{\sty{ K}}
\newcommand{\fW}{\sty{ W}}
\newcommand{\feps}{\bm{\varepsilon}}
\newcommand{\fsigma}{\bm{\sigma}}
\newcommand{\fxi}{\bm{\xi}}
\newcommand{\ftau}{\bm{\tau}}
\newcommand{\feta}{\mbox{\boldmath $\eta$}}
\newcommand{\ffGamma}{\mathbb{\Gamma}}
\title{Universal Fourier Neural Operators for \\\review{periodic homogenization problems in linear elasticity}}
\author[1]{Binh Huy Nguyen}
\author[1,2,3,*]{Matti Schneider}
\affil[1]{University of Duisburg-Essen, Institute of Engineering Mathematics}
\affil[2]{Center for Nanointegration Duisburg-Essen (CENIDE)}
\affil[3]{Fraunhofer Institute for Industrial Mathematics ITWM, Kaiserslautern}
\affil[*]{correspondence to: \texttt{matti.schneider@uni-due.de}}
\date{\today}
\begin{document}
\maketitle 

\begin{abstract}
\noindent Solving cell problems in homogenization is hard, and available deep-learning frameworks fail to match the speed and generality of traditional computational frameworks. More to the point, it is generally unclear what to expect of machine-learning approaches, let alone single out which approaches are promising. In the work at hand, we advocate Fourier Neural Operators (FNOs) for micromechanics, empowering them by insights from computational micromechanics methods based on the fast Fourier transform (FFT).\\
We construct an FNO surrogate mimicking the basic scheme foundational for FFT-based methods and show that the resulting operator predicts solutions to cell problems with \emph{arbitrary} stiffness distribution only subject to a material-contrast constraint up to a desired accuracy. In particular, there are no restrictions on the material symmetry like isotropy, on the number of phases and on the geometry of the interfaces between materials. Also, the provided fidelity is sharp and uniform, providing explicit guarantees leveraging our physical empowerment of FNOs.\\
To show the desired universal approximation property, we construct an FNO explicitly that requires no training to begin with. Still, the obtained neural operator complies with the same memory requirements as the basic scheme and comes with runtimes proportional to classical FFT solvers. In particular, large-scale problems with more than 100 million voxels are readily handled.\\
The goal of this work is to underline the potential of FNOs for solving micromechanical problems, linking FFT-based methods to FNOs. This connection is expected to provide a fruitful exchange between both worlds.\\
%We show that Fourier Neural Operators (FNOs) can learn the solutions of \emph{all} homogenization problems in linear elasticity on a fixed periodic unit cell up to an error that depends only on the (finite) material contrast, establishing uniform and universal expressivity of such neural operators.\\
%Traditionally, FNOs are trained on solutions of micromechanics problems, and the existing approximation theory is restricted to \emph{compact subsets} of coefficient fields. We remove this limitation, covering the case of all elastic coefficients with pre-specified material contrast.\\
%To handle this non-compact and non-separable topological space, we introduce novel tools to design and analyze FNOs. We draw connections to Lippmann-Schwinger solvers whose capabilities are well-known in micromechanics. More precisely, we point out the similarities of Lippmann-Schwinger solvers and FNOs, and provide a constructive proof of our claim by defining suitable Lippmann-Schwinger FNOs, i.e., neural operators mimicking the Lippmann-Schwinger fixed-point operators. Our specific construction depends on the ReLU activation function, but any other reasonable activation function may be used, as well. Representative computational examples are given.\\

%\\
%Beyond the theoretical understanding, this work draws connections between areas that were previously not connected. A mutual interaction is expected to prove fruitful.
\quad
\\
{\noindent\textbf{Keywords:} Fourier Neural Operator; Lippmann-Schwinger Solvers; FFT-based Computational Micromechanics; Universal Approximation; Deep Learning}
\end{abstract}

%\todo{
%\begin{itemize}
%	\item computational examples
%	\item images?
%	\item conclusion
%	\item introduction
%	\item abstract
%\end{itemize}
%}

%\newpage

%\tableofcontents

\newpage

\section{Introduction}
\label{sec:intro}

\subsection{State of the art}
\label{sec:intro_state}

To design components and structures made of microstructured materials, in particular in case of long-term use or when operating in a nonlinear regime, computational approaches based on the mathematical theory of homogenization~\cite{BlancLeBris2023Book} form an integral part. Such techniques split the physical problem into a \review{macroscopic} problem which involves a \emph{homogenized} model that is computationally tractable and a microscopic model which resolves the microstructure and permits to identify the parameters of the homogenized model appropriately. This splitting approach hides a significant "{}amount"{} of difficulty, encoded by the complexity of the microstructure, in the resolution of the microscopic problem. With solid mechanics applications in mind, these challenges include:
\begin{enumerate}
	\item Discontinuous materials, leading to solution fields with (weak) discontinuities across interfaces,
	\item Complex arrangement of the microstructure phases, in particular for materials used in industry,
	\item Randomness of the microstructure.
\end{enumerate}
Computational multiscale methods have a long history, and we refer to the review article~\cite{MatousSummary} for an in-depth overview. We just summarize some key findings. Due to challenges one and two, traditional interface-conforming finite elements may not be the best choice when optimizing runtime vs. accuracy. Rather, regular-grid based methods~\cite{RegularGrid,RegularGrid2,RegularGrid3} lead the way. When restricted to regular grid, traditional finite element (FE) discretizations and solvers lose their edge~\cite{Schneider2022HGC}, and spectral~\cite{MoulinecSuquet1994,MoulinecSuquet1998,Schneider2023MSconvergence} or finite-difference discretizations~\cite{Willot2015,StaggeredGrid,Finel2025TET} combined with solvers based on the fast Fourier transform (FFT)~\cite{EyreMilton1999,MichelMoulinecSuquet2001,ZemanCG2010,BB2019} give rise to highly efficient options for dealing with micromechanics problems. To handle item 3, the representative volume element (RVE) method is used~\cite{HillRVE,Sab1992,DruganWillis}, which requires solving \emph{multiple} microscopic cell problems on domains of varying size~\cite{KanitRVE,Otto2021}.\\
This work is concerned with neural techniques for solving homogenization problems. The advantages of approaches based on neural networks were demonstrated for high-dimensional and parametric PDEs~\cite{Karniadakis2021PINNOverview,Cuomo2022PINN,Beck2023OverviewDeepLearningPDEs} where they may overcome the "{}curse of dimensionality"{}~\cite{Hutzenthaler2020OvercomingCurseOfDimensionality}. However, for low to moderate dimensions, traditional discretizations and solvers typically outperform neural-network approaches. Rather, deep learning may be used to improve the associated discretizations and solvers~\cite{Azulay2022MultigridDeepLearningHelmholtz,Han2024UGrid,Huang2020IntDeep}.\\
In the context of homogenization problems, the use of neural networks was reported. Due to the considerable size of the domains to be treated, convolutional neural networks are favored and applied to metal-matrix composites~\cite{Rao2020CNN}, porous materials~\cite{Peng2022Phnet} and random checkerboards~\cite{Zhu2024CNN}. However, the trained networks use a fixed resolution, require excessive training and are unlikely to compete with dedicated solvers~\cite{Segurado2020CPReview,LebensohnRolletCPReview,FFTReview2020}. Only under very restrictive circumstances, such techniques may shine. For example, Zheng et al.~\cite{Zheng2023TrussesKochmann} consider the inverse design problem for truss-based cellular metamaterials, and report a significant speed up compared to full-field simulations.\\
To address the shortcomings of traditional deep-learning approaches, the attention shifted from neural networks -- which approximate \emph{functions} -- to neural operators which should approximate \emph{operators}, i.e., mappings between function spaces. From an abstract point of view, such a strategy appears to be more natural when solving ordinary and partial differential equation, as typically the solution is sought in dependence of the loading and the boundary conditions. Such operator-learning frameworks include the Deep-O-Net~\cite{Lu2021DeepONet} and neural operators~\cite{Kovachki2023NO,Cao2024LaplaceNO}. Whereas the former represents operators between infinite-dimensional spaces via collocation, i.e., point evaluation of functions, the latter augments the weights and biases traditionally used in neural network architectures by an integral operator. This non-local term complements the action of the activation function and the weights/biases which is applied locally, i.e., at every continuum point. For reasons of efficiency and practical applicability of neural operator, instead of a general integral operator, a \emph{convolution operator} may be used, whose action is readily computed by the FFT. The so-called Fourier Neural Operators (FNOs)~\cite{Li2021FNO}
appear to balance simplicity with sufficient expressivity, as demonstrated by mathematical approximation results~\cite{Kovachki2021FNOApproximation}. Also, such operator-learning techniques may be infused with physical knowledge, giving rise to so-called physics-informed operator learning~\cite{Wang2021PIDeepONet,Goswami2022PIDeepONet,Li2024PINO}.\\
Recently, FNOs were applied to homogenization problems in conductivity and linear elasticity. Bhattacharya et al.~\cite{Bhattacharya2024LearningHomogenization} consider the task of learning the temperature field solving the cell problem as a function of the microscopic conductivity field. They provide both a theoretical framework and computational examples to demonstrate their findings. Their approximation result ensures that the temperature field may be learned either uniformly on a compact set of coefficients or on average for an ensemble of coefficients. Wang et al.~\cite{Wang2025PretrainingFNO} train an FNO to provide an initial guess for micromechanical cell problems with linear elastic materials to be solved by the finite-element method. 
Harandi et al.~\cite{Harandi2025SPiFOL} use the FFT to train a conventional neural network with a Lippmann-Schwinger based loss function. Contributing to physics-informed neural operators, Zhang-Guilleminot~\cite{zhang2024operator} devise a label-free FNO for homogenizing hyperelastic materials with the stored elastic energy as the loss function. \review{The FNO is also used as an encoder in a transformer-based deep learning model to predict the mechanical response of heterogeneous composites, as demonstrated by Wang et al. \cite{wang2026micrometer}.}

\subsection{Contributions}
\label{sec:intro_contributions}

The work at hand targets the principal difficulties encountered when using conventional deep-learning frameworks for micromechanics problems in elasticity:
\begin{enumerate}
	\item Severe restrictions on the local elastic stiffness tensors, e.g., isotropy.
	\item Grid dependence of the trained network or the ability to solve rather small grid sizes only.
	\item Excessive resources required for the training -- both in terms of time and energy usage -- in particular compared to the online use.
	\item Restrictions to specific microstructures materials, e.g., porous materials or matrix-inclusion composites with specific type of reinforcements.
\end{enumerate}
To address these issues, we consider one of the most efficient frameworks currently available for computational homogenization, FFT-based micromechanics~\cite{Segurado2020CPReview,LebensohnRolletCPReview,FFTReview2020}, and transfer insights from their construction to the design of an FNO.  The main message of the article is the following:
\begin{center}
	\fbox{What can be done with FFT solvers, can also be done with FNOs.}
\end{center}
As FFT-based solvers serve as the current gold standard for computational homogenization, the framework of FNOs has the potential to serve as a real improvement to current computational practice. As will become clear below, we take the simplest FFT solver, the basic scheme~\cite{MoulinecSuquet1994,MoulinecSuquet1998,PorousSLE2020}, and construct a corresponding FNO. However, also alternative linear and nonlinear solvers like polarization schemes~\cite{EyreMilton1999,MichelMoulinecSuquet2001,Donval2024Polarization}, conjugate gradient methods~\cite{ZemanCG2010,BrisardDormieux2010,NonlinearCG2020} or Quasi-Newton approaches~\cite{Chen2019Anderson,QuasiNewton2019,BB2019} could be re-cast as FNOs.\\
As our main result, we establish the following: We fix a material contrast $\kappa$ a priori and construct, for pre-defined error margin $\delta$,  an FNO $\mathcal{F}_\theta$ with the following properties:
\begin{enumerate}
	\item For every (unit) macroscopic strain $\bar{\feps}$ and \emph{every} stiffness distribution $\C$ with a contrast not greater than $\kappa$, the output $\mathcal{F}_\theta(\bar{\feps},\C)$ provides a displacement field which is $\delta$-close to the exact solution. In particular, there are no restrictions other than the contrast bound on the form of the elasticity tensor, the arrangement and the number of phases and the geometry of the interface.
	\item The FNO $\mathcal{F}_\theta$ may be applied to any voxel grid.
	\item There is no training involved in setting up the operator $\mathcal{F}_\theta$.
\end{enumerate}
The first item provides, in particular, a \emph{uniform} approximation results for the solutions of micromechanical problems for general coefficients. This result differs strongly from Bhattacharya et al.~\cite{Bhattacharya2024LearningHomogenization} who provide a result for a compact subset of coefficients only. The third item is not a direct feature, but rather a byproduct of our construction: As the homogenization-learning problem is complicated and non-convex, showing approximation results requires constructing a competitor more or less explicitly. It is expected that actual training would significantly enhance the capabilities of the framework.\\
Our main idea is to interpret the basic scheme of Moulinec-Suquet~\cite{MoulinecSuquet1994,MoulinecSuquet1998} -- which has the form
\begin{equation}\label{eq:intro_contributions_basic}
	\feps^{k+1} = \bar{\feps} - \ffGamma^0 : (\C - \C^0) : \feps^k
\end{equation}
for a reference material $\C^0$ and the Eshelby-Green operator $\ffGamma^0$ -- as a single layer of an FNO. As the basic scheme \eqref{eq:intro_contributions_basic} converges for $k \rightarrow \infty$, stacking such layers should give rise to approximations with increasing fidelity. For the convenience of the reader, the properties of the basic scheme \eqref{eq:intro_contributions_basic} are recalled in section \ref{sec:homogenization}.\\
It appears natural to interpret the action of the basic scheme \eqref{eq:intro_contributions_basic} as an FNO - the operation is affine linear in the strain $\feps^k$, and a single convolution operator is evaluated per iteration \eqref{eq:intro_contributions_basic}. However, there is a caveat. The goal of the sought FNO is to take the local field of stiffness tensors $\C$ as the input and provide the solution of the micromechanics problem as the output. Storing the stiffness $\C$ in a register gives rise to the update
\begin{equation}\label{eq:intro_contributions_basic_agumented}
	\left[
	\begin{array}{l}
		\feps^{k+1}\\
		\C^{k+1}\\
	\end{array}
	\right] = 
	\left[
	\begin{array}{l}
		\bar{\feps} - \ffGamma^0 : (\C^k - \C^0) : \feps^k\\
		\C^k
	\end{array}
	\right],
\end{equation}
which is nonlinear due to the double contraction 
\begin{equation}\label{eq:intro_contributions_learnME}
	(\C^k, \feps^k) \mapsto \C^k : \feps^k.
\end{equation}
\review{To elaborate this statement, traditionally, the stiffness tensor (field) is fixed, and the operation $\feps \mapsto \C : \feps$ is a linear operation. However, when learning homogenization, the stiffness tensor field needs to be treated as a \emph{variable} and no longer as a mere \emph{parameter}. Indeed, in linear elastic homogenization problems, one seeks the strain field $\feps$ as a function of the input stiffness $\C$. Therefore, during the iterative scheme \eqref{eq:intro_contributions_basic_agumented}, both the strain field $\feps$ and the input stiffness field $\C$ need to be treated independently. In our setup \eqref{eq:intro_contributions_basic_agumented} -- which we consider to be minimal -- the components of the stiffness $\C$ and of the strain $\feps$ are treated as the components of a suitably high-dimensional vector field. Then, the operation \eqref{eq:intro_contributions_learnME} involves multiplying components of a specific vector field, which is indeed a nonlinear operation.}\\
\review{Thus, to cope with linear elastic homogenization problems}, the sought FNO needs to \review{learn the nonlinearity \eqref{eq:intro_contributions_learnME}}. In section \ref{sec:FNO}, we provide conditions on the neural-network approximation to the function \eqref{eq:intro_contributions_learnME} which permit us to conclude the desired statement  on the constructed FNO:
\begin{thm}\label{thm:main}
	For every periodic unit cell $Y$, every pair of lower and upper bounds $0<\alpha_- < \alpha_+$ on the material stiffness, every strain bound $\eps_0 > 0$ and every error margin $\delta^{\texttt{target}}>0$, there is an FNO
	\begin{equation}\label{eq:intro_contributions_FNO}
		\mathcal{F}_\theta: \Sym{d} \times \mathcal{M}_Y(\alpha_-,\alpha_+) \rightarrow H^1_\#(Y)^d,
	\end{equation}
	s.t. the estimate 
	\begin{equation}\label{eq:intro_contributions_result}
		\left\| \mathcal{F}_\theta(\bar{\feps},\C) - \mathcal{S}(\bar{\feps},\C)\right\|_{H^1_\#(Y)^d} \leq \delta^{\texttt{target}}
	\end{equation}
	holds for all macroscopic strains $\bar{\feps} \in \Sym{d}$ obeying the bound
	\begin{equation}
		\| \bar{\feps} \| \leq \eps_0
	\end{equation}
	and all stiffness distributions 
	\begin{equation}\label{eq:intro_contributions_MAlphaPM}
		\C \in \mathcal{M}_Y(\alpha_-,\alpha_+),
	\end{equation}
	and where $\mathcal{S}$ refers to the exact solution operator, i.e., for all macroscopic strains $\bar{\feps} \in \Sym{d}$ and all stiffnesses \eqref{eq:intro_contributions_MAlphaPM}, the displacement field $\fu = \mathcal{S}(\bar{\feps},\C)\in H^1_\#(Y)^d$  solves the corrector equation
	\begin{equation}\label{eq:intro_contributions_corrector_eqn}
		\div \C:\left( \bar{\feps} + \nabla^s \fu \right) = \bm{0}.
	\end{equation}
\end{thm}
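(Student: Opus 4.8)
The plan is to have $\mathcal{F}_\theta$ realise, one block at a time, a finite truncation of the Moulinec--Suquet basic scheme~\eqref{eq:intro_contributions_basic}, with its only nonlinear ingredient --- the pointwise double contraction~\eqref{eq:intro_contributions_learnME} that appears in the augmented update~\eqref{eq:intro_contributions_basic_agumented} --- replaced by a neural-network surrogate $\mathcal{N}$ of the kind constructed in Section~\ref{sec:FNO}, and then to control the two resulting error contributions: the truncation of the convergent iteration, and the inaccuracy of $\mathcal{N}$.

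First I would collect two facts about the scheme. By Section~\ref{sec:homogenization} one fixes a reference stiffness $\C^0 = \C^0(\alpha_-,\alpha_+)$ for which $\feps \mapsto \bar{\feps} - \ffGamma^0 : (\C - \C^0) : \feps$ is a contraction with factor $q = q(\alpha_-,\alpha_+,d) < 1$ uniform over $\C \in \mathcal{M}_Y(\alpha_-,\alpha_+)$, on $L^2_\#(Y;\Sym{d})$ equipped with the $\C^0$-weighted inner product (equivalent to the usual one, with constants depending only on $\alpha_\pm$); this gives $\|\feps^k - \feps^\star\|_{L^2} \le q^k \|\bar{\feps} - \feps^\star\|_{L^2}$ for the exact iterates $\feps^k$, and hence a layer count $N = N(\alpha_-,\alpha_+,\eps_0,\delta^{\texttt{target}},Y)$ beyond which the resulting truncation error --- measured, after the displacement recovery described below, in $H^1_\#(Y)^d$ --- is at most $\delta^{\texttt{target}}/2$. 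Secondly, since $\ffGamma^0$ is a Calder\'on--Zygmund operator it is bounded on $L^p_\#(Y;\Sym{d})$ for every $1 < p < \infty$; fixing some $p > 2$ and using that $N$ is finite, the recursion bounds all iterates $\feps^k$ with $k \le N$ --- and, as it turns out below, the perturbed iterates as well --- in $L^p$ by a finite constant $M = M(\alpha_-,\alpha_+,\eps_0,Y,p,N)$ that is uniform over the admissible data.

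Next, the construction. I would use channels carrying the triple $(\feps^k,\C,\bar{\feps})$, the last two transported unchanged (the auxiliary channels pass through the activations by the usual channel-doubling device). One block then realises a single step of~\eqref{eq:intro_contributions_basic_agumented} as $(\feps^k,\C,\bar{\feps}) \mapsto (\bar{\feps} - \ffGamma^0 : (\mathcal{N}(\C,\feps^k) - \C^0 : \feps^k),\, \C,\, \bar{\feps})$: the surrogate $\mathcal{N}$ of~\eqref{eq:intro_contributions_learnME} is a short stack of Fourier layers with vanishing convolution kernel (hence purely pointwise), the application of the Eshelby--Green operator $\ffGamma^0$ is exactly one Fourier-multiplier layer, and the affine pieces ($-\C^0 : \cdot$ and $+\bar{\feps}$) are absorbed into the layers' weights and the $\bar{\feps}$-channel. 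Stacking $N$ such blocks and appending one final Fourier-multiplier layer that recovers the displacement from a compatible strain field --- in frequency space, inverting the symbol of $\nabla^s$ on the nonzero modes and setting the mean to zero --- produces $\mathcal{F}_\theta$; its parameters depend only on $d$, on $\ffGamma^0$, and on the fixed weights of $\mathcal{N}$, not on the geometry of $Y$, so the same operator is evaluated on any voxel grid. Crucially, every iterate of the scheme, exact or perturbed, lies in $\bar{\feps} + \nabla^s H^1_\#(Y)^d$, because $\bar{\feps} - \ffGamma^0 : (\,\cdot\,)$ maps into that affine subspace whatever its argument; hence the recovery layer is a well-defined bounded linear map $L^2_\#(Y;\Sym{d}) \to H^1_\#(Y)^d$ (Korn's inequality on the quotient by infinitesimal rigid motions), turning an $L^2$ strain error into an $H^1_\#(Y)^d$ displacement error with a fixed constant $C_{\mathrm{rec}} = C_{\mathrm{rec}}(d,Y)$.

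It then remains to propagate the error. Writing $\tilde{\feps}^k$ for the network iterates and $e^k = \tilde{\feps}^k - \feps^k$ (so $e^0 = 0$), one has $e^{k+1} = -\ffGamma^0 : (\mathcal{N}(\C,\tilde{\feps}^k) - \C : \tilde{\feps}^k) - \ffGamma^0 : (\C - \C^0) : e^k$, hence (in the weighted norm) $\|e^{k+1}\|_{L^2} \le q \|e^k\|_{L^2} + \|\ffGamma^0\| \, \|\mathcal{N}(\C,\tilde{\feps}^k) - \C : \tilde{\feps}^k\|_{L^2}$ and, on summing, $\|e^N\|_{L^2} \le (1-q)^{-1} \|\ffGamma^0\| \, \max_{k<N} \|\mathcal{N}(\C,\tilde{\feps}^k) - \C : \tilde{\feps}^k\|_{L^2}$. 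Requiring, as formalised in Section~\ref{sec:FNO}, that $\mathcal{N}$ be globally Lipschitz of order $\alpha_+$, vanish at zero strain, and be $\eta$-close to the double contraction on the admissible stiffness values times the ball of radius $R$, one estimates the last factor by $\lesssim \eta \, |Y|^{1/2}$ on the region where the pointwise bound $|\tilde{\feps}^k| \le R$ holds and by $\lesssim M^{p/2} R^{(2-p)/2}$ on its complement, using the $L^p$-bound above and $p > 2$; the second term tends to $0$ as $R \to \infty$. Choosing, in this order, $N$ (from $q$ and $\delta^{\texttt{target}}$), then the target size $\epsilon_1$ of $\|\mathcal{N}(\C,\tilde{\feps}^k) - \C : \tilde{\feps}^k\|_{L^2}$ (from $q$, $\|\ffGamma^0\|$ and $C_{\mathrm{rec}}$), then $R$ large, then $\eta$ small, drives $C_{\mathrm{rec}} \|e^N\|_{L^2} \le \delta^{\texttt{target}}/2$; combined with the truncation estimate and the boundedness of the recovery map, the triangle inequality yields~\eqref{eq:intro_contributions_result}. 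The main obstacle is precisely this last point: the iterates never enjoy a uniform $L^\infty$ bound --- only $L^2$, and $L^p$ with $p > 2$ over the finitely many steps --- so the surrogate error cannot be controlled pointwise and must instead be handled through the $L^p$-tail estimate, which forces the careful ordering of $N$, $R$ and $\eta$ and makes the interplay between local accuracy and global Lipschitz control of $\mathcal{N}$, as set up in Section~\ref{sec:FNO}, the heart of the argument.
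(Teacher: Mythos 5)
Your proposal is correct and follows the same overall strategy as the paper: emulate the basic scheme by stacking Fourier layers, replace the pointwise double contraction by a surrogate that is locally accurate, globally Lipschitz and of linear growth, and tame the unbounded strains via the Meyers-type $L^p$-bound ($p>2$) combined with a Chebyshev tail estimate, choosing the depth, the accuracy radius and the surrogate tolerance in that order. The one genuine difference lies in the error decomposition. The paper splits $\feps^{K+1}_\theta-\feps^*$ through the fixed point $\feps^*_\theta$ of the \emph{perturbed} Lippmann--Schwinger operator and invokes the perturbation lemma \eqref{eq:fixed_point_estimates_main_estimate}, so that the $L^p$-estimate is applied only to the exact fixed point $\feps^*$, for which Theorem \ref{thm:homogenization_basic_contraction_estimates_EshelbyGreen} supplies a depth-independent bound. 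You instead compare the network iterates to the \emph{exact} iterates and sum the resulting geometric recursion, which forces you to bound the perturbed iterates $\tilde{\feps}^k$ in $L^p$ uniformly over the admissible data --- the step you defer with ``as it turns out below.'' That step does go through (linear growth of the surrogate plus $L^p$-boundedness of $\ffGamma$ give a finite bound for each of the finitely many layers), but the resulting constant may grow geometrically in the depth $N$, since the perturbed map need not be an $L^p$-contraction; consequently your cut-off $R$ must be chosen much larger than the paper's $M$, which only has to dominate the depth-independent $L^p$-norm of $\feps^*$. Both routes prove the theorem; the paper's fixed-point detour buys cleaner, depth-independent constants and spares the extra lemma on the perturbed iterates.
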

\review{Fourier neural operators were established~\cite{Li2021FNO,Kovachki2021FNOApproximation} as \emph{universal approximators} for general PDE solution operators. As periodic homogenization problems with linear elastic phases represent a subclass of such problems, the interested reader may wonder about the significance of the established theorem. The approximation statements provided for the general case concern proximity guarantees on compact subsets of input fields only, i.e., those which can be covered by finitely many balls of pre-defined radius. Then, in case an approximation property is established at the centers of these balls, a suitable continuity property of the operator to be approximated guarantees the approximation property on the entire compact set to be monitored. This strategy was carried out for (thermal) periodic homogenization problems in Bhattacharya et al.~\cite{Bhattacharya2024LearningHomogenization}, where establishing the continuity properties of the to-be-learned operator \eqref{eq:intro_contributions_FNO} turned out to be challenging. We refer the reader to the discussion after Proposition 1.1.~in Bhattacharya et al.~\cite[p.~1849]{Bhattacharya2024LearningHomogenization}\\
Our approach is completely different. The principal challenge is that the "{}space of stiffnesses"{} in its natural $L^\infty$-topology is rather large, even when imposing lower and upper bounds on the eigenvalues of the local stiffnesses as done above. The space $\mathcal{M}_Y(\alpha_-,\alpha_+)$ contains an open ball of an infinite-dimensional topological vector space, and is therfore not compact. It is even not separable, i.e., the set cannot be covered by a countable number of balls of fixed positive radius. Put differently, the space is \emph{huge}. Still, FNOs work on this space, and with guaranteed fidelity. To establish such a statement, we depart from the original compactness plus continuity strategy and rather look at how the operator \eqref{eq:intro_contributions_FNO} is formed, i.e., we disregard continuity entirely and look at the "{}construction plan"{} of the operator under scrutiny. As the operator is implicitly defined as the solution to an integral equation, the Lippmann-Schwinger equation \eqref{eq:intro_contributions_basic}, we rather mimic the iterative scheme to obtain the solution to this equation, i.e., we provide FNO proxies to all the components of the iterative scheme, eventually establishing the desired statement.}\\
\review{To realize this roadmap,} the main technical tools here are perturbation estimates which turn the constructed FNO into a contraction. The principal difficulty is that neural networks are typically only learned on compact domains. The operation \eqref{eq:intro_contributions_learnME} may be confronted with strains of arbitrary magnitude, for instance due to a singularity at a material interface with a corner or a kink. Thus, the error made by disregarding such large strain values needs to be quantified with care.\\
We discuss a specific approximation to the double contraction \eqref{eq:intro_contributions_learnME} in section \ref{sec:activations}, where we make use of the rectified linear unit (ReLU) as a reasonably simple and explicit activation function. In fact, conventional sigmoid activations may be used for learning the mapping \eqref{eq:intro_contributions_learnME} via conventional training, as well, but we chose to remain fully explicit. More precisely, our construction is based on the approximation of the multiplication of two real numbers by a ReLU-network introduced by Yarotsky~\cite{Yarotsky2017ReLU}.\\
We showcase our construction in section \ref{sec:computations} via dedicated computational examples. We demonstrate that the construction -- which was designed for theoretical purposes in the first place -- gives rise to an FNO which comes with an execution time that is on the same order of magnitude as FFT-based solvers, the current gold standard for computational homogenization.

%%%%%%%%%%%%%%%%%%%%%%%%%%%%%%%%%%%%%%%%%%%%%%%%%%%%%%%%%%%%%%%%%%%%%%%%%%%%%%%%%%%%%%%%%%%%%%%%%%%%%%%%%%%%%%%%%%%%%%%%%%%%%%%%%%%%%%%%%%%%%%%%%%%%%%%%%%%%%%%%%%%%%%%%%%%%%%%%%%%%%%%%%%%%%
%\newpage

\section{Homogenization in linear elasticity}
\label{sec:homogenization}

\subsection{What is given and what is sought}
\label{sec:homogenization_setup}

We consider a fixed periodic cell
\begin{equation}\label{eq:homogenization_setup_cell}
	Y = [0,L_1] \times [0,L_2] \times \cdots \times [0,L_d] 
\end{equation}
in $d$ spatial dimensions. We denote by $\Sym{d}$ the vector space of symmetric $d\times d$ tensors, and we let $\Sym{\Sym{d}}$ refer to the vector space of linear mappings on the vector space $\Sym{d}$ which are furthermore self-adjoint w.r.t. the Frobenius inner product
\begin{equation}\label{eq:homogenization_setup_Frobenius}
	\langle \feps_1, \feps_2 \rangle = \textrm{tr}(\feps_1 \feps_2), \quad \feps_1, \feps_2 \in \Sym{d},
\end{equation}
where $\textrm{tr}$ refers to the trace. The central object of study is the following. For positive constants $\alpha_\pm$, we denote by
\begin{equation}\label{eq:homogenization_setup_microstructures}
	\mathcal{M}(\alpha_-,\alpha_+) = \left\{ \C \in L^\infty(Y;\Sym{\Sym{d}}) \,\middle| \, \alpha_- \, \|\feps\|^2 \leq \feps : \C(\fx) : \feps \leq \alpha_+ \, \|\feps\|^2, \quad \text{a.e. }\fx \in Y, \quad \feps \in \Sym{d}\right\}
\end{equation}
the set of periodic fields of elasticity tensors with a material contrast
\begin{equation}\label{eq:homogenization_setup_material_contrast}
	\kappa = \frac{\alpha_+}{\alpha_-}.
\end{equation}
As the set \eqref{eq:homogenization_setup_microstructures} is empty unless the inequality $\alpha_- \leq \alpha_+$ is satisfied, the material contrast \eqref{eq:homogenization_setup_material_contrast} is always greater or equal to unity.\\
For every macroscopic strain $\bar{\feps} \in \Sym{d}$ and fixed stiffness $\C \in \mathcal{M}(\alpha_-,\alpha_+)$, we seek the periodic displacement field
\begin{equation}\label{eq:homogenization_setup_H1per}
	\fu \in H^1_\#(Y)^d \equiv \left\{ \fu \in H^1(Y) \text{ periodic}\,\middle| \, \int_Y \fu(\fx)\, d\fx = \bm{0} \right\}
\end{equation}
with vanishing mean, s.t. the balance equation
\begin{equation}\label{eq:homogenization_setup_equilibrium}
	\div \C:\left[ \bar{\feps} + \nabla^s \fu \right] = \bm{0}
\end{equation}
is satisfied in $H^{-1}_\#(Y)^d$, where $\nabla^s$ stands for the symmetrized gradient. By standard arguments, it may be shown the the solution \eqref{eq:homogenization_setup_H1per} is unique, and satisfies the a-priori estimate
\begin{equation}\label{eq:homogenization_setup_a_priori_L2}
	\|\nabla^s \fu\|_{L^2} \leq \sqrt{\frac{\alpha_+}{\alpha_-}} \|\bar{\feps}\|
\end{equation}
with the $L^2$-norm
\begin{equation}\label{eq:homogenization_setup_L2}
	\|\feps\|_{L^2} = \left( \frac{1}{\text{vol}(Y)} \int_Y \| \feps(\fx)\|^2 \, d\fx \right)^{\frac{1}{2}}, \quad \feps \in L^2(Y;\Sym{d}).
\end{equation}

\subsection{The Lippmann-Schwinger framework}
\label{sec:homogenization_basic}

Since its inception by Zeller-Dederichs~\cite{LSE} and Kröner~\cite{LSE2}, the Lippmann-Schwinger equation is at the heart of "{}modern"{} micromechanics~\cite{LSE3}. With the advent of digital microstructure images, the Lippmann-Schwinger formulation also found its way into computer codes, following the seminar articles by Moulinec-Suquet~\cite{MoulinecSuquet1994,MoulinecSuquet1998}. For the work at hand, we require some basic analytic properties of the Lippmann-Schwinger operator~\cite{Schneider2015Convergence,MSConvergence2023}, where we restrict to a reference material proportional to the identity~\cite{PorousSLE2020}.\\
For any prescribed strain $\bar{\feps} \in \Sym{d}$, stiffness field $\C \in \mathcal{M}(\alpha_-,\alpha_+)$ and positive constant $\alpha_0> 0$, we consider the associated Lippmann-Schwinger operator
\begin{equation}\label{eq:homogenization_basic_LSO1}
	\mathcal{L}(\cdot; \bar{\feps},\C,\alpha_0) \in L(L^2(Y;\Sym{d})),
\end{equation}
acting via
\begin{equation}\label{eq:homogenization_basic_LSO2}
	\mathcal{L}(\feps; \bar{\feps},\C,\alpha_0) = \bar{\feps} - \frac{1}{\alpha_0}\ffGamma : (\C - \C^0):\feps, \quad \feps \in L^2(Y;\Sym{d}),
\end{equation}
with the reference material $\C^0 \in \Sym{\Sym{d}}$ with action
\begin{equation}\label{eq:homogenization_basic_refMat}
	\C^0:\feps = \alpha_0 \, \feps, \quad \feps \in \Sym{d},
\end{equation}
and the Eshelby-Green operator $\ffGamma$ is defined via 
\begin{equation}\label{eq:homogenization_basic_EshelbyGreen}
	\ffGamma:\ftau (\fx) = \sum_{\fxi \in \Z^d} \hat{\ffGamma}(\tilde{\fxi}):\hat{\ftau}(\fxi) e^{\texttt{i}\, \tilde{\fxi} \cdot \fx}, \quad \fx \in Y, \quad \ftau \in L^2(Y;\Sym{d}),
\end{equation}
with the complex unit $\texttt{i}$, the Fourier coefficients
\begin{equation}\label{eq:homogenization_basic_EshelbyGreen_Fourier_coefficients}
	\hat{\ftau}(\fxi) = \frac{1}{\textrm{vol}(Y)} \int_Y \ftau(\fx) e^{-\texttt{i}\, \tilde{\fxi} \cdot \fx}\, d\fx,
\end{equation}
the scaled frequencies
\begin{equation}\label{eq:homogenization_basic_frequencies}
	\tilde{\xi}_j = \frac{2 \pi}{L_j} \, \xi_j, \quad j=1,2,\ldots,d,
\end{equation}
and the tensor-valued function
\begin{equation}\label{eq:homogenization_basic_EshelbyGreen_Fourier_coefficients_general}
	\hat{\ffGamma}: \R^d \rightarrow L(\Sym{d}),
\end{equation}
defined by $\hat{\ffGamma}(\bm{0}) = \bm{0}$ for vanishing vector $\fd$ and by
\begin{equation}\label{eq:homogenization_basic_EshelbyGreen_Fourier_coefficients_explicit}
	\hat{\ffGamma}(\fd):\ftau = \frac{\fd \otimes \left( \ftau \fd \right)}{\|\fd\|^2} + \frac{\left( \ftau \fd \right) \otimes \fd}{\|\fd\|^2} - \frac{ (\fd \cdot \ftau) \fd \, \fd \otimes \fd }{\|\fd\|^4}
\end{equation}
for non-zero $\fd \in \R^d$.\\
After these definitions, we list the main properties of the Lippmann-Schwinger operator \eqref{eq:homogenization_basic_LSO2} in the following (well-known) result. The first two items are evident from the connection of the Lippmann-Schwinger operator \eqref{eq:homogenization_basic_LSO2} to the fixed point operator associated to a gradient descent method for the average elastic energy, e.g., discussed in Schneider~\cite[§3.1]{Schneider2017FastGradient}. The third part about the $L^p$-estimates, where we use the following $L^p$-norm on strain fields
\begin{equation}\label{eq:homogenization_basic_Lp}
	\|\feps\|_{L^p} = \left( \frac{1}{\text{vol}(Y)} \int_Y \| \feps(\fx)\|^p \, d\fx \right)^{\frac{1}{p}}, \quad \feps \in L^p(Y;\Sym{d}), \quad 1 \leq p < \infty,
\end{equation}
i.e., we use the Frobenius norm \eqref{eq:homogenization_setup_Frobenius} inside the $L^p$-norm for convenience, is also known~\cite{Schneider2015Convergence}.
\begin{thm}\label{thm:homogenization_basic_contraction_estimates_EshelbyGreen}
The following statements hold for the Lippmann-Schwinger operator \eqref{eq:homogenization_basic_LSO2}:
\begin{enumerate}
	\item For any prescribed strain $\bar{\feps} \in \Sym{d}$, stiffness field $\C \in \mathcal{M}(\alpha_-,\alpha_+)$ and reference constant $\alpha_0$, the Lippmann-Schwinger operator \eqref{eq:homogenization_basic_LSO1} has precisely one fixed point $\feps^*$ which has the form
	\begin{equation}\label{eq:homogenization_basic_fixed_point}
		\feps^* = \bar{\feps} + \nabla^s \fu,
	\end{equation}
	where the displacement fluctuation $\fu \in H^1_\#(Y)^d$ is the unique solution to the equilibrium equation \eqref{eq:homogenization_setup_equilibrium}. In particular, the fixed point \eqref{eq:homogenization_basic_fixed_point} does not depend on the reference constant $\alpha_0$.
	\item 
	For the choice
	\begin{equation}\label{eq:homogenization_basic_refMat_basic}
		\alpha_0 = \frac{\alpha_+ + \alpha_-}{2},
	\end{equation}
	the Lippmann-Schwinger operator defines an $L^2$-contraction with constant
	\begin{equation}\label{eq:homogenization_basic_contraction_constant}
		\gamma = \frac{\alpha_+ - \alpha_-}{\alpha_+ + \alpha_-} \equiv 1 - \frac{2}{\kappa + 1},
	\end{equation}
	i.e., the estimate
	\begin{equation}\label{eq:homogenization_basic_contraction_estimate}
		\left\| \mathcal{L}(\feps_1; \bar{\feps},\C,\alpha_0) - \mathcal{L}(\feps_2; \bar{\feps},\C,\alpha_0) \right\|_{L^2} \leq \gamma\,\|\feps_1 - \feps_2\|_{L^2}
	\end{equation}
	holds for all $\feps_1, \feps_2 \in L^2(Y;\Sym{d})$.
	\item There are numbers $p \in (2,\infty)$ and $C_p > 0$, depending only on the pair $(\alpha_-,\alpha_+)$, s.t. the estimate
	\begin{equation}\label{eq:homogenization_basic_Lp_bound}
		\|\feps^*\|_{L^p} \leq C_p \, \|\bar{\feps}\|
	\end{equation}
	holds for the fixed point \eqref{eq:homogenization_basic_fixed_point}.
\end{enumerate}
\end{thm}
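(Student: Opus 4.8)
The plan is to derive all three items from a single structural fact: since the reference stiffness $\C^{0}$ is a scalar multiple of the identity, the Eshelby--Green operator $\ffGamma$ of \eqref{eq:homogenization_basic_EshelbyGreen} is the $L^{2}(Y;\Sym{d})$-orthogonal projection onto the closed subspace $\mathcal{G}=\{\nabla^{s}\fu\mid\fu\in H^{1}_{\#}(Y)^{d}\}$ of mean-free compatible strains. I would establish this first by a short Fourier-side computation: each symbol $\hat{\ffGamma}(\fd)$, $\fd\neq\bm0$, is self-adjoint and idempotent on $(\Sym{d},\langle\cdot,\cdot\rangle)$ with range $\{\tfrac12(\fd\otimes\fv+\fv\otimes\fd)\mid\fv\in\R^{d}\}$, while $\hat{\ffGamma}(\bm0)=\bm0$. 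Hence $\ffGamma$ is a norm-one orthogonal projection on $L^{2}$, it acts as the identity on $\mathcal{G}$, it kills constant fields, and $\ffGamma:\fsigma=\bm0$ holds precisely when $\hat{\fsigma}(\fxi)\,\tilde{\fxi}=\bm0$ for all $\fxi\neq\bm0$, i.e.\ precisely when $\div\fsigma=\bm0$ in the periodic sense.

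For the first item, I would rewrite the fixed-point equation $\feps=\mathcal{L}(\feps;\bar{\feps},\C,\alpha_{0})$ as $\feps-\bar{\feps}=-\tfrac1{\alpha_{0}}\ffGamma:(\C-\C^{0}):\feps$; the right-hand side lies in the range $\mathcal{G}$ of $\ffGamma$, so any fixed point is automatically of the form $\feps=\bar{\feps}+\nabla^{s}\fu$ with a unique $\fu\in H^{1}_{\#}(Y)^{d}$, and taking the mean gives $\mean{\feps}=\bar{\feps}$. Substituting this ansatz and using $\ffGamma:\feps=\feps-\bar{\feps}$ together with $\tfrac1{\alpha_{0}}\ffGamma:\C^{0}=\ffGamma$ collapses the equation to $\ffGamma:\bigl(\C:(\bar{\feps}+\nabla^{s}\fu)\bigr)=\bm0$, which by the divergence characterisation above is exactly the balance equation \eqref{eq:homogenization_setup_equilibrium}; running the same identities backwards shows every solution of \eqref{eq:homogenization_setup_equilibrium} is a fixed point. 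Uniqueness of the fixed point then follows from the well-posedness of \eqref{eq:homogenization_setup_equilibrium} recalled in the setup, and since that equation does not see $\alpha_{0}$, neither does $\feps^{*}$.

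For the second item I would observe that $\mathcal{L}(\feps_{1};\bar{\feps},\C,\alpha_{0})-\mathcal{L}(\feps_{2};\bar{\feps},\C,\alpha_{0})=-\tfrac1{\alpha_{0}}\ffGamma:(\C-\C^{0}):(\feps_{1}-\feps_{2})$ depends on $\feps_{1},\feps_{2}$ only through the linear operator $\feta\mapsto\tfrac1{\alpha_{0}}\ffGamma:(\C-\C^{0}):\feta$, which factors as the orthogonal projection $\ffGamma$ (of $L^{2}$-operator norm one) composed with pointwise multiplication by $\tfrac1{\alpha_{0}}\bigl(\C(\fx)-\C^{0}\bigr)$. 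As $\C(\fx)$ has spectrum in $[\alpha_{-},\alpha_{+}]$ on $(\Sym{d},\langle\cdot,\cdot\rangle)$, the self-adjoint tensor $\C(\fx)-\alpha_{0}\Id$ has spectrum in $[\alpha_{-}-\alpha_{0},\alpha_{+}-\alpha_{0}]$, so $\alpha_{0}=(\alpha_{+}+\alpha_{-})/2$ makes its operator norm equal to $(\alpha_{+}-\alpha_{-})/2$ at every $\fx$; the multiplication operator then has norm at most $(\alpha_{+}-\alpha_{-})/(\alpha_{+}+\alpha_{-})=\gamma$ and \eqref{eq:homogenization_basic_contraction_estimate} follows. (Equivalently, on the affine space $\bar{\feps}+\mathcal{G}$ the map $\mathcal{L}$ is step-$1/\alpha_{0}$ gradient descent for the average elastic energy, as noted in the text, and $(\alpha_{+}+\alpha_{-})/2$ is the optimal step for a Hessian with spectrum in $[\alpha_{-},\alpha_{+}]$.)

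For the third item I would invoke the Calderón--Zygmund/Meyers mechanism. The operator $\ffGamma$ is a Fourier multiplier whose symbol is bounded and homogeneous of degree zero, hence bounded on $L^{p}(Y;\Sym{d})$ for every $p\in(1,\infty)$ with some norm $M_{p}$; since $M_{2}=1$ and $p\mapsto\log M_{p}$ is convex in $1/p$ by Riesz--Thorin interpolation, $M_{p}\to1$ as $p\to2$. I would then fix an exponent $p>2$, depending only on $(\alpha_{-},\alpha_{+})$, close enough to $2$ that $M_{p}\,\gamma<1$. The pointwise bound of the previous paragraph remains valid on $L^{p}$, so with $\alpha_{0}=(\alpha_{+}+\alpha_{-})/2$ the map $\feps\mapsto\mathcal{L}(\feps;\bar{\feps},\C,\alpha_{0})$ is an $M_{p}\gamma$-contraction on $L^{p}(Y;\Sym{d})$; its unique $L^{p}$ fixed point coincides, via the embedding $L^{p}(Y)\hookrightarrow L^{2}(Y)$ and the $L^{2}$-uniqueness from the first item, with $\feps^{*}$. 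The Banach fixed-point estimate, together with $\|\bar{\feps}\|_{L^{p}}=\|\bar{\feps}\|$ since $\bar{\feps}$ is constant, then yields \eqref{eq:homogenization_basic_Lp_bound} with $C_{p}=(1-M_{p}\gamma)^{-1}$, and both $p$ and $C_{p}$ depend only on $(\alpha_{-},\alpha_{+})$. I expect the only real obstacle --- and the single ingredient I would cite rather than check by hand --- to be exactly this $L^{p}$-continuity statement $M_{p}\to1$ as $p\to2$; everything else reduces to elementary linear algebra and Hilbert-space geometry.
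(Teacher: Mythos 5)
Your proposal is correct and takes essentially the same route as the paper's Appendix B.2 argument: part~1 via the orthogonal-projection characterization of $\ffGamma$ (the paper phrases it through $\ffGamma = \nabla^s \fG\,\div$, you through the symbol and the kernel--range description, but these are the same fact); part~2 via the $L^2$-non-expansivity of $\ffGamma$ combined with a pointwise spectral bound on $(\C-\C^0)/\alpha_0$; and part~3 via Riesz--Thorin interpolation of the Eshelby--Green multiplier norm (Lemma~\ref{lem:homogenization_basic_Lp_estimates_EshelbyGreen}) to get an $L^p$-contraction for $p>2$ close to $2$, followed by the Banach a-priori estimate to produce $C_p=(1-M_p\gamma)^{-1}$. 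The only cosmetic difference is the bookkeeping of constants in part~3, where the paper fixes $C=\check\gamma/\gamma$ explicitly whereas you just ask for $M_p\gamma<1$.
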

The key (non-trivial) arguments for Thm.~\ref{thm:homogenization_basic_contraction_estimates_EshelbyGreen} are related to certain properties of the Eshelby-Green operator \eqref{eq:homogenization_basic_EshelbyGreen}, which we collect in the following:
\begin{lem}\label{lem:homogenization_basic_Lp_estimates_EshelbyGreen}
	The Eshelby-Green operator \eqref{eq:homogenization_basic_EshelbyGreen} has following properties:
	\begin{enumerate}
		\item The tensor-valued Fourier multiplier gives rise to a bounded linear operator on $L^p(Y;\Sym{d})$ for exponents $1<p<\infty$, i.e., 
		\begin{equation}\label{eq:homogenization_basic_LpBound_EshelbyGreen}
			\ffGamma \in L(L^p(Y;\Sym{d})), \quad 1 < p < \infty,
		\end{equation}
		\item For the exponent $p=2$, the Eshelby-Green operator \eqref{eq:homogenization_basic_EshelbyGreen} defines an orthogonal projector on the $L^2$-space \eqref{eq:homogenization_setup_L2} onto the (closed) subspace of compatible strains. In particular, the norm equality
		\begin{equation}\label{eq:homogenization_basic_L2Bound_EshelbyGreen}
			\| \ffGamma \|_{L(L^2(Y;\Sym{d}))} = 1
		\end{equation}
		holds.
		\item For every constant $C > 1$, there is an exponent $\check{p}(C) > 2$, s.t., the inequality
		\begin{equation}\label{eq:homogenization_basic_ExplicitLpBound_EshelbyGreen}
			\| \ffGamma \|_{L(L^p(Y;\Sym{d}))} \leq C \quad \text{holds for all} \quad 2 \leq p \leq \check{p}(C).
		\end{equation}
	\end{enumerate}
\end{lem}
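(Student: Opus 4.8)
I would establish the three assertions in the order (ii), (i), (iii), since the last one interpolates between the first two. For (ii), fix a non-zero frequency $\fd\in\R^d$ and write $P_\fd:=\hat{\ffGamma}(\fd)\in L(\Sym{d})$. A short computation from the explicit formula \eqref{eq:homogenization_basic_EshelbyGreen_Fourier_coefficients_explicit} shows that $(P_\fd\ftau)\,\fd=\ftau\fd$ for every $\ftau\in\Sym{d}$, whence $P_\fd^2=P_\fd$, and that $\langle P_\fd\ftau_1,\ftau_2\rangle = 2\,\|\fd\|^{-2}(\ftau_1\fd)\cdot(\ftau_2\fd) - \|\fd\|^{-4}(\fd\cdot\ftau_1\fd)(\fd\cdot\ftau_2\fd)$, which is manifestly symmetric in $\ftau_1,\ftau_2$; thus $P_\fd$ is a self-adjoint idempotent, i.e. the orthogonal projector onto its range, and the latter is readily identified as $\{\fd\otimes\fw+\fw\otimes\fd\mid\fw\in\R^d\}$, exactly the set of Fourier symbols, at frequency $\fd$, of symmetrized gradients of periodic vector fields. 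Passing to physical space via Parseval's identity, $\ffGamma$ becomes a bounded self-adjoint idempotent on $L^2(Y;\Sym{d})$ --- hence the orthogonal projector onto the necessarily closed subspace of compatible strains --- and since this projector is non-zero, the norm equality \eqref{eq:homogenization_basic_L2Bound_EshelbyGreen} follows.

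For (i), observe that every scalar entry of the map $\fd\mapsto\hat{\ffGamma}(\fd)$ is a rational function that is positively homogeneous of degree $0$ and smooth on $\R^d\setminus\{\bm 0\}$; consequently it obeys the Mikhlin--Hörmander bounds $\|\fd\|^{|\bm\alpha|}\,\bigl|\partial^{\bm\alpha}_\fd[\hat{\ffGamma}(\fd)]_{ij}\bigr|\le C_{\bm\alpha}$ for every multi-index $\bm\alpha$. Applying the classical Mikhlin--Hörmander multiplier theorem together with the transference principle of de~Leeuw --- or, directly, the discrete Marcinkiewicz multiplier theorem on the torus --- entrywise yields $\ffGamma\in L(L^p(Y;\Sym{d}))$ for all $1<p<\infty$. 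I would either cite the corresponding statement already recorded in the FFT-homogenization literature, e.g. \cite{Schneider2015Convergence}, or, for a self-contained treatment, carry out the (routine) verification of the multiplier hypotheses for the explicit symbol \eqref{eq:homogenization_basic_EshelbyGreen_Fourier_coefficients_explicit} and invoke the vector-valued form of the theorem. This is the only step where genuine harmonic analysis enters, and I expect it to be the main obstacle; everything else is bookkeeping.

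Finally, (iii) follows from (i) and (ii) by the Riesz--Thorin interpolation theorem. Fix $q\in(2,\infty)$, say $q=4$, and put $M:=\|\ffGamma\|_{L(L^q(Y;\Sym{d}))}$, which is finite by (i) and depends only on $d$ and $Y$, not on any material data. If $M\le C$ we may take $\check{p}(C)=q$ and are done, so assume $M>C>1$. For $2\le p\le q$ write $\tfrac1p=\tfrac{1-\theta}{2}+\tfrac{\theta}{q}$ with $\theta=\theta(p)\in[0,1]$; interpolating the bounds $\|\ffGamma\|_{L(L^2)}=1$ and $\|\ffGamma\|_{L(L^q)}=M$ gives $\|\ffGamma\|_{L(L^p)}\le 1^{1-\theta}M^{\theta}=M^{\theta(p)}$. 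Since $\theta(p)\to0$ as $p\to2^{+}$, the right-hand side tends to $1$; setting $\theta_0:=\ln C/\ln M\in(0,1)$ and letting $\check{p}(C)>2$ be the exponent determined by $\tfrac{1}{\check{p}(C)}=\tfrac{1-\theta_0}{2}+\tfrac{\theta_0}{q}$ yields $\|\ffGamma\|_{L(L^p)}\le M^{\theta(p)}\le M^{\theta_0}=C$ for all $2\le p\le\check{p}(C)$, which is the assertion.
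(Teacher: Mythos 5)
Your proposal is correct, and its overall architecture --- first the $L^2$ projector property, then $L^p$-boundedness, then Riesz--Thorin interpolation --- coincides with the paper's. Parts (ii) and (iii) match the paper essentially step for step: the paper also reduces (ii) to the symbol $\hat{\ffGamma}(\fd)$ being a self-adjoint idempotent (it only sketches this, deferring details to Schneider~\cite{Schneider2015Convergence}, whereas you carry out the computation), and its interpolation argument for (iii) is the same as yours, merely locating your exponent $\theta_0=\ln C/\ln M$ via the intermediate value theorem rather than in closed form. The one genuine divergence is in part (i): the paper does not use the Mikhlin--H\"ormander theorem. Instead it factors the Eshelby--Green symbol into compositions of two and four periodic Riesz transforms $R_j$ and then invokes the Iwaniec--Martin bound $\|R_j\|_{L(L^p(Y))}\leq\cot\left(\pi/(2p^*)\right)$ together with the transference principle. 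The paper's route buys an explicit, quantitative dependence of $\|\ffGamma\|_{L(L^p)}$ on $p$ (a product of cotangents), which in principle makes $\check{p}(C)$ in part (iii) computable --- a point the authors themselves raise when discussing how to make their construction quantitative. Your route via a general multiplier theorem plus de~Leeuw transference is equally valid (the symbol is indeed homogeneous of degree zero and smooth off the origin, so the multiplier hypotheses hold) and is arguably the more standard off-the-shelf argument, but it delivers less explicit constants. Both routes require the same transference step from $\R^d$ to the torus, which you correctly flag.
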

The arguments~\cite{Schneider2015Convergence} for Lemma \ref{lem:homogenization_basic_Lp_estimates_EshelbyGreen} are collected in Appendix \ref{apx:basic_EshelbyGreen}.

\section{Fourier neural operators via Lippmann-Schwinger solvers}
\label{sec:FNO}

%\todo{do we need images illustrating how the FNOs work, like the standard ones for NNs with all the arrows?}
 
\subsection{Fourier neural operators}
\label{sec:FNO_defn}
\begin{figure}
	\centering
	\includegraphics[width=\textwidth]{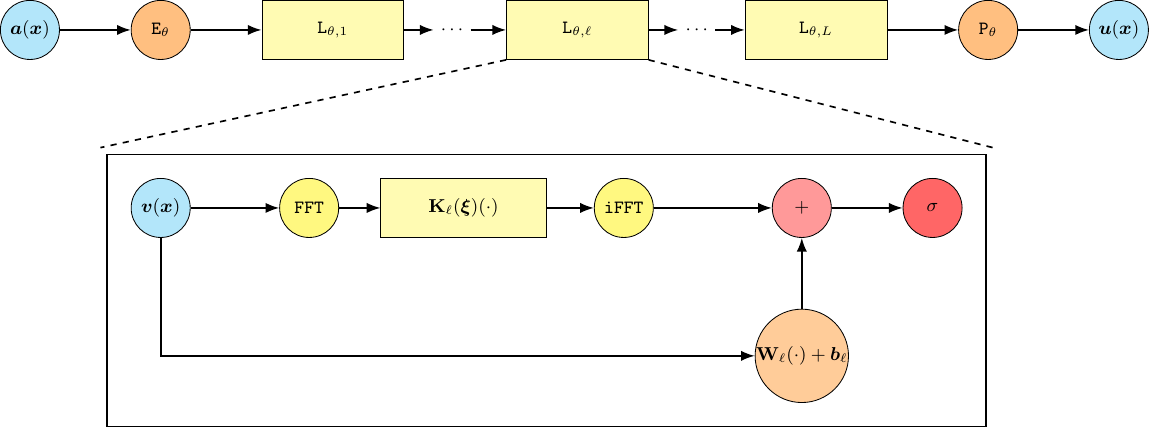}
	\caption{Schematic representation of the FNO topology \eqref{eq:FNO_defn_FNO_topology}.}
	\label{fig:fno_schematic}
\end{figure}
For a given domain $D \subseteq \R^d$, a Fourier Neural Operator (FNO)~\cite{Li2021FNO,Kovachki2021FNOApproximation,Bhattacharya2024LearningHomogenization} is a mapping
\begin{equation}\label{eq:FNO_defn_FNO_domain}
	\mathcal{F}_\theta: \mathcal{A}(D;\R^{d_a}) \rightarrow \mathcal{U}(D;\R^{d_u}), \quad d_a, d_u \in \N_{+},
\end{equation}
between Banach spaces of vector-valued functions on the domain $D$ which has the form of a composition
\begin{equation}\label{eq:FNO_defn_FNO_topology}
	\mathcal{F}_\theta = \texttt{P}_\theta \circ \texttt{L}_{\theta,L} \circ \cdots \texttt{L}_{\theta,2} \circ \texttt{L}_{\theta,1} \circ \texttt{E}_\theta, \quad L \in \N_+,
\end{equation}
of a lifting/extension operator
\begin{equation}\label{eq:FNO_defn_lifting_domain}
	\texttt{E}_\theta: \mathcal{A}(D;\R^{d_a}) \rightarrow \mathcal{U}(D;\R^{d_v})
\end{equation}
with $d_v \in \N_+$, $d_v \geq d_u$, neural-operator layers 
\begin{equation}\label{eq:FNO_defn_layer_domain}
	\texttt{L}_{\theta,\ell}: \mathcal{U}(D;\R^{d_v}) \rightarrow \mathcal{U}(D;\R^{d_v})
\end{equation}
of the form
\begin{equation}\label{eq:FNO_defn_layer_form}
	\texttt{L}_{\theta,\ell} (\fv) = \sigma(\fW_\ell \, \fv + \fb_\ell + \fK_\ell \star \fv), \quad \fv \in \mathcal{U}(D;\R^{d_v}),
\end{equation}
with an activation function $\sigma$, and a projection operator
\begin{equation}\label{eq:FNO_defn_projection_domain}
	\texttt{P}_\theta:\mathcal{U}(D;\R^{d_v}) \rightarrow \mathcal{U}(D;\R^{d_u}).
\end{equation}
Here, the weights $\fW_\ell \in \R^{d_v \times d_v}$ are matrices, the biases $\fb_\ell \in \mathcal{U}(D;\R^{d_v})$ are functions and the kernel operator
\begin{equation}\label{eq:FNO_defn_kernel_convolution}
	\left(\fK_\ell \star \fv\right) (\fx) = \int_D \fK_\ell(\fx - \fy)\,\fv(\fy)\, d\fy
\end{equation}
acts via convolution with a $\R^{d_v \times d_v}$-matrix valued function $\fK_\ell$, for $\ell = 1,2,\ldots, L$. The input- and output layers \eqref{eq:FNO_defn_lifting_domain} and \eqref{eq:FNO_defn_projection_domain} are typically taken as locally acting neural networks.\\
The name FNOs stems from the fact that the convolution \eqref{eq:FNO_defn_layer_form} with the kernel $\fK_\ell$ may be equivalently expressed via a multiplication in Fourier space with a $\R^{d_v \times d_v}$-matrix valued function Fourier multipliers $\hat{\fK}_\ell$. Fig.~\ref{fig:fno_schematic} depicts a schematic of the FNO construction.\\
A few remarks are at hand.
\begin{enumerate}
	\item We will always consider a rectangular domain $Y$, see eq.~\eqref{eq:homogenization_setup_cell}, and consider periodic boundary conditions when evaluating the convolution \eqref{eq:FNO_defn_kernel_convolution}.
	\item For the application at hand, we will consider a few simplifications to reduce the complexity of the FNO. For a start, we will assume that the activation function $\sigma$ is applied to \emph{some} of the components in the Fourier layer \eqref{eq:FNO_defn_layer_form} only. This assumption does not restrict generality, as the identity mapping may be approximated to arbitrary precision by a neural network. Secondly, we will use a neural network for the input layer \eqref{eq:FNO_defn_lifting_domain} instead of the local lifting operation originally proposed. As discussed in Kovachki et al.~\cite{Kovachki2021FNOApproximation}, such an operation may be represented by a deeper FNO at the expense of simplicity. Similarly, we will consider an output layer \eqref{eq:FNO_defn_projection_domain} of the form
\begin{equation}\label{eq:FNO_defn_projection_specific}
	\texttt{E}_\theta (\fv) = \fW_{L+1} \, \fv + \fb_{L+1} + \fK_{L+1} \star \fv, \quad \fv \in \mathcal{U}(D;\R^{d_v}),
\end{equation}
with a matrix $\fW_{L+1} \in \R^{d_u \times d_v}$, a function $\fb_{L+1} \in \mathcal{U}(D;\R^{d_u})$ and a $\R^{d_u \times d_v}$-matrix valued function $\fK_{L+1}$. Such a layer could also be approximated by a deeper FNO with only local and affine input as well as output layers.
	\item Expanding a general function $\fv \in \mathcal{U}(D;\R^{d_v})$ into a Fourier series
	\begin{equation}\label{eq:FNO_defn_Fourier_series_general}
		\fv (\fx) = \sum_{\fxi \in \Z^d} \hat{\fv}(\tilde{\fxi})\, e^{\texttt{i}\, \tilde{\fxi} \cdot \fx}, \quad \fx \in Y
	\end{equation}
	with Fourier coefficients
	\begin{equation}\label{eq:FNO_defn_Fourier_coefficients}
		\hat{\fv}(\fxi) = \frac{1}{\textrm{vol}(Y)} \int_Y \fv(\fx) e^{-\texttt{i}\, \tilde{\fxi} \cdot \fx}\, d\fx,
	\end{equation}
	and the scaled frequencies \eqref{eq:homogenization_basic_frequencies},
	the action of the convolution operator \eqref{eq:FNO_defn_kernel_convolution} in Fourier space reads
	\begin{equation}\label{eq:FNO_defn_convolution_Fourier}
		\left( \fK_\ell \star \fv \right) (\fx) = \sum_{\fxi \in \Z^d} \hat{\fK_\ell}(\fxi) \hat{\fv}(\fxi) e^{\texttt{i}\, \tilde{\fxi} \cdot \fx}, \quad \fx \in Y, 
	\end{equation}
	with suitable Fourier coefficients
	\begin{equation}\label{eq:FNO_defn_kernel_Fourier_coefficients}
		\hat{\fK_\ell}: \Z^d \rightarrow \C^{d_v \times d_v}
	\end{equation}
	which satisfy the condition
	\begin{equation}\label{eq:FNO_defn_keep_it_real}
		\hat{\fK_\ell}(-\fxi) = \overline{\hat{\fK_\ell}(\fxi)}, \quad \fxi \in \Z^d,
	\end{equation}
	ensuring that the function $\fK_\ell$ has real-valued components, and where we use the complex conjugation $z \mapsto \bar{z}$. 
	\item We follow Kovachki~\cite{Kovachki2021FNOApproximation} and do not assume a compact support of the Fourier coefficients \eqref{eq:FNO_defn_kernel_Fourier_coefficients}, in line with the original neural operator definition~\cite{Kovachki2023NO} in contrast to other treatments~\cite{Li2021FNO,Bhattacharya2024LearningHomogenization}. A restriction
	\begin{equation}\label{eq:FNO_defn_restricted_lattice}
		\hat{\fK_\ell}(\fxi) = \bm{0} \text{ for all frequencies $\fxi \in \Z^d$ with } |\xi_j | \leq \frac{N}{2}, \quad j=1,2,\ldots,d,
	\end{equation}
	is not able to handle all input stiffnesses $\C$ in the set $\mathcal{M}(\alpha_-,\alpha_+)$, for trivial reasons. Suppose that a particular inhomogeneous stiffness $\C \in \mathcal{M}(\alpha_-,\alpha_+)$ is fixed, and let $\feps = \bar{\feps} + \nabla^s \fu$ denote the unique solution to the Lippmann-Schwinger equation, i.e., a fixed point of the Lippmann-Schwinger operator \eqref{eq:homogenization_basic_LSO2}. For $K \in \N$, consider the rescaled stiffness
	\begin{equation}\label{eq:FNO_defn_rescaled_stiffness}
		\tilde{\C}(\fx) = \C(2^K \, \fx), \quad \fx \in Y.
	\end{equation}
	The stiffness $\tilde{\C}$ is also periodic and defines an element of the space $\mathcal{M}(\alpha_-,\alpha_+)$. The corresponding (unique) solution $\tilde{\feps}$ has the form
	\begin{equation}\label{eq:FNO_defn_rescaled_strain}
		\tilde{\feps}(\fx) = \feps(2^K \, \fx), \quad \fx \in Y.
	\end{equation}
	In particular, the Fourier coefficients of the field $\tilde{\feps}$ are non-zero only in the case
	\begin{equation}\label{eq:FNO_defn_rescaled_Fourier_coefficients}
		\widehat{\tilde{\feps}}(2^K\fxi) = \hat{\feps}(\fxi), \quad \fxi \in \Z^d.
	\end{equation}
	As a result, in case the power $2^K$ is chosen to exceed $N$, we have
	\begin{equation}\label{eq:FNO_defn_vanishing strains}
		\widehat{\tilde{\feps}}(\feta) = \bm{0}\text{ for all frequencies $\fxi \in \Z^d \backslash\{\bm{0}\}$ with } |\xi_j | \leq \frac{N}{2}, \quad j=1,2,\ldots,d.
	\end{equation}
	In particular, the solution $\widehat{\tilde{\feps}}$ cannot be represented on the restricted lattice \eqref{eq:FNO_defn_restricted_lattice}, no matter how large the parameter $N$ is chosen in advance! To sum up: Restricting the lattice \eqref{eq:FNO_defn_restricted_lattice} to be compactly supported in the FNO layer \eqref{eq:FNO_defn_layer_form} implies a resolution constraint on the input stiffnesses \eqref{eq:homogenization_setup_microstructures} to be treated. In case all stiffnesses \eqref{eq:homogenization_setup_microstructures} are to be treated, such a constraint must not be enforced!
\end{enumerate}

\subsection{A neural fixed-point method}
\label{sec:FNO_LS}

To construct an expressive FNO, we take the Lippmann-Schwinger operator \eqref{eq:homogenization_basic_LSO2}
\begin{equation}\label{eq:FNO_LS_LSO}
	\mathcal{L}(\feps; \bar{\feps},\C,\alpha_0) = \bar{\feps} - \ffGamma : \frac{1}{\alpha_0}(\C - \C^0):\feps, \quad \feps \in L^2(Y;\Sym{d}),
\end{equation}
as our blueprint. More precisely, we approximate the double-contraction operator
\begin{equation}\label{eq:FNO_LS_doubleContraction}
	\Multiplication: L(\Sym{d}) \times \Sym{d} \rightarrow \Sym{d}, \quad (\mathds{T},\feps)\mapsto \mathds{T}:\feps,
\end{equation}
by a neural network, i.e., work with a more or less abstract (finite-dimensional) operator
\begin{equation}\label{eq:FNO_LS_doubleContractionNN}
	\Multiplication_\theta: L(\Sym{d}) \times \Sym{d} \rightarrow \Sym{d}.
\end{equation}
Then, we set up the Fourier neural operator
\begin{equation}\label{eq:FNO_LS_FLSO}
	\mathcal{L}_\theta(\feps; \bar{\feps},\C,\alpha_0) = \bar{\feps} - \ffGamma : \Multiplication_\theta \left( \frac{1}{\alpha_0}(\C - \C^0), \feps \right), \quad \feps \in L^2(Y;\Sym{d}).
\end{equation}
To proceed, we consider a few assumptions satisfied by the approximation \eqref{eq:FNO_LS_doubleContractionNN}.
\begin{ass}\label{ass:mtheta}
	For any positive constant $M \geq 1$ and every error margin $\delta>0$, there is a positive constant $C>0$, independent of the parameters $M$ and $\delta$, and a neural network \eqref{eq:FNO_LS_doubleContractionNN}, s.t., 
	\begin{enumerate}[(i)]
	\item The inequality
		\begin{equation}\label{eq:ass_mtheta_smallInputEstimate}
			\left\| \Multiplication_\theta(\mathds{T},\feps) - \mathds{T}:\feps \right\| \leq \delta \, \|\feps\|
		\end{equation}
		holds for all tensors $\mathds{T} \in L(\Sym{d})$ and $\feps \in \Sym{d}$ satisfying the constraints
		\begin{equation}\label{eq:ass_mtheta_smallInputEstimate_prereq}
			\| \mathds{T} \| \leq 1 \quad \text{and} \quad \| \feps\| \leq M.
		\end{equation}
	\item The bound 
		\begin{equation}\label{eq:ass_mtheta_largeInputEstimate}
			\left\| \Multiplication_\theta(\mathds{T},\feps) - \mathds{T}:\feps \right\| \leq C \, \|\feps\|
		\end{equation}
		is valid for all tensors $\mathds{T} \in L(\Sym{d})$ and $\feps \in \Sym{d}$ satisfying the constraints
		\begin{equation}\label{eq:ass_mtheta_largeInputEstimate_prereq}
			\| \mathds{T} \| \leq 1 \quad \text{and} \quad \| \feps\| \geq M.
		\end{equation}
	\item The Lipschitz estimate
		\begin{equation}\label{eq:ass_mtheta_contraction}
			\left\| \Multiplication_\theta(\mathds{T},\feps_1) - \Multiplication_\theta(\mathds{T},\feps_2) \right\| \leq \left( \|\mathds{T}\| + \delta \right)\, \| \feps_1 - \feps_2\| 
		\end{equation}
		holds for all strains $\feps_1, \feps_2 \in \Sym{d}$ and all tensors $\mathds{T} \in L(\Sym{d})$ obeying the constraint
		\begin{equation}\label{eq:ass_mtheta_contraction_prereq}
			\| \mathds{T} \| \leq 1.
		\end{equation}
	\end{enumerate}
\end{ass}
For the work at hand, we use the rectified linear unit (ReLU) activation function to construct a multiplication operator \eqref{eq:FNO_LS_doubleContractionNN} which satisfies the assumption \ref{ass:mtheta}, see section \ref{sec:activations}. However, there are numerous other activation functions used to construct neural networks \eqref{eq:FNO_LS_doubleContractionNN}. Thus, for the work at hand, we separated the assumption \ref{ass:mtheta} from the concrete activation function to show that our construction of the neural fixed-point operator \eqref{eq:FNO_LS_FLSO} depends on the stated estimates only, and does not involve a specific activation function. In particular, in case the bounds \eqref{eq:ass_mtheta_smallInputEstimate}, \eqref{eq:ass_mtheta_largeInputEstimate} and \eqref{eq:ass_mtheta_contraction} are established for alternative activation functions would  lead to the same conclusions about the expressivity of the involved FNOs for these scenarios, as well.\\
Under assumption \ref{ass:mtheta}, we establish the following result.
\begin{prop}\label{prop:FNO_LS_FLNO}
	Suppose assumption \ref{ass:mtheta} is satisfied, and we fix a field
	\begin{equation}\label{eq:FNO_LS_StiffnessSpace}
		\C \in \mathcal{M}_Y(\alpha_-,\alpha_+)
	\end{equation}
	of linear elastic stiffness tensors and a macroscopic strain $\bar{\feps} \in \Sym{d}$. Then, the following holds:
	\begin{enumerate}[(i)]
		\item The neural operator \eqref{eq:FNO_LS_FLSO} is well-defined and satisfies the estimate
		\begin{equation}\label{eq:FNO_LS_FLNO_estimate}
			\left\| \mathcal{L}_\theta(\feps_1; \bar{\feps},\C,\alpha_0) - \mathcal{L}_\theta(\feps_2; \bar{\feps},\C,\alpha_0) \right\|_{L^2} \leq (\gamma_0 + \delta) \|\feps_1 - \feps_2\|_{L^2}
		\end{equation}
		in case the quantity
		\begin{equation}\label{eq:FNO_LS_FLNO_contraction_constant}
			\gamma_0 = \max \left( \left| \frac{\alpha_+ - \alpha_0}{\alpha_0}\right|, \left| \frac{\alpha_- - \alpha_0}{\alpha_0}\right| \right)
		\end{equation}
		does not exceed unity.
		\item In case the quantity
		\begin{equation}\label{eq:FNO_LS_FLNO_combined_contraction_constant}
			\gamma_\theta = \gamma_0 + \delta
		\end{equation}
		is less than unity, the unique fixed point $\feps_{\theta}^*$ of the contractive neural operator \eqref{eq:FNO_LS_FLSO} satisfies the proximity estimate
		\begin{equation}\label{eq:FNO_LS_FLNO_proximity_estimate}
			\| \feps^* - \feps^*_\theta\|_{L^2} \leq \frac{1}{1 - \gamma_\theta} \left[ \delta\,\| \feps^* \|_{L^2} + C \, \frac{\|\feps^* \|_{L^p}^{\frac{p}{2}}}{M^{\frac{p-2}{2}}} \right]
		\end{equation}
		to the (unique) fixed point $\feps^*$ of the Lippmann-Schwinger operator \eqref{eq:homogenization_basic_LSO2} which resides in an $L^p$ space with an exponent $p>2$ that depends only on the material contrast $\kappa = \alpha_+ / \alpha_-$, see the statement \eqref{eq:homogenization_basic_Lp_bound}.
	\end{enumerate}
\end{prop}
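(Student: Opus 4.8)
The plan is to establish (i) and (ii) in turn, using as analytic input only two facts from Section \ref{sec:homogenization}: the Eshelby--Green operator $\ffGamma$ is an orthogonal projector on $L^2(Y;\Sym{d})$, in particular $\|\ffGamma\|_{L(L^2)}=1$ (Lemma \ref{lem:homogenization_basic_Lp_estimates_EshelbyGreen}), and the exact Lippmann--Schwinger fixed point enjoys the higher integrability $\feps^*\in L^p$ with $p>2$ depending only on $\kappa$ (Theorem \ref{thm:homogenization_basic_contraction_estimates_EshelbyGreen}). The preliminary observation used throughout is that for $\C\in\mathcal{M}_Y(\alpha_-,\alpha_+)$ the pointwise tensor $\mathds{T}(\fx):=\tfrac1{\alpha_0}(\C(\fx)-\C^0)$ is self-adjoint on $\Sym{d}$ with spectrum contained in $[\tfrac{\alpha_--\alpha_0}{\alpha_0},\tfrac{\alpha_+-\alpha_0}{\alpha_0}]$, so that $\|\mathds{T}(\fx)\|\le\gamma_0$ for a.e.\ $\fx\in Y$; the hypothesis $\gamma_0\le1$ is exactly what makes the prerequisites $\|\mathds{T}\|\le1$ in Assumption \ref{ass:mtheta} applicable.

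For part (i), I would argue well-definedness by noting that $\Multiplication_\theta$ is continuous (being a neural network), so $\fx\mapsto\Multiplication_\theta(\mathds{T}(\fx),\feps(\fx))$ is measurable for $\feps\in L^2(Y;\Sym{d})$, and that items (i)--(ii) of Assumption \ref{ass:mtheta}, together with $\Multiplication_\theta(\mathds{T},\bm{0})=\bm{0}$ (which is \eqref{eq:ass_mtheta_smallInputEstimate} at $\feps=\bm{0}$), yield a pointwise linear-growth bound $\|\Multiplication_\theta(\mathds{T}(\fx),\feps(\fx))\|\le(1+\max(C,\delta))\,\|\feps(\fx)\|$; applying the bounded operator $\ffGamma$ then keeps the image in $L^2$. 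For the Lipschitz estimate \eqref{eq:FNO_LS_FLNO_estimate}, I would write $\mathcal{L}_\theta(\feps_1;\cdot)-\mathcal{L}_\theta(\feps_2;\cdot)=-\ffGamma:[\Multiplication_\theta(\mathds{T},\feps_1)-\Multiplication_\theta(\mathds{T},\feps_2)]$, bound its $L^2$-norm by that of the bracket using $\|\ffGamma\|_{L(L^2)}=1$, and invoke the pointwise Lipschitz bound \eqref{eq:ass_mtheta_contraction} with $\|\mathds{T}(\fx)\|\le\gamma_0$ before integrating, which produces the constant $\gamma_0+\delta$.

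For part (ii), once $\gamma_\theta<1$ the operator $\mathcal{L}_\theta(\cdot;\bar{\feps},\C,\alpha_0)$ is a contraction on the Banach space $L^2(Y;\Sym{d})$, so Banach's fixed-point theorem yields the unique fixed point $\feps_\theta^*$. For the proximity estimate I would start from the perturbation identity $\feps^*-\feps_\theta^*=[\mathcal{L}(\feps^*)-\mathcal{L}_\theta(\feps^*)]+[\mathcal{L}_\theta(\feps^*)-\mathcal{L}_\theta(\feps_\theta^*)]$, which after taking $L^2$-norms and using \eqref{eq:FNO_LS_FLNO_estimate} gives $\|\feps^*-\feps_\theta^*\|_{L^2}\le\tfrac{1}{1-\gamma_\theta}\|\mathcal{L}(\feps^*)-\mathcal{L}_\theta(\feps^*)\|_{L^2}$, and by $\|\ffGamma\|_{L(L^2)}=1$ it remains to control the $L^2$-norm of the consistency error $e(\fx):=\mathds{T}(\fx):\feps^*(\fx)-\Multiplication_\theta(\mathds{T}(\fx),\feps^*(\fx))$. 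This last bound is the only non-routine step and is where the higher integrability of $\feps^*$ enters: I would split $Y$ into $\{\|\feps^*\|\le M\}$ and $\{\|\feps^*\|>M\}$. On the first set \eqref{eq:ass_mtheta_smallInputEstimate} gives $\|e(\fx)\|\le\delta\|\feps^*(\fx)\|$, contributing at most $\delta\|\feps^*\|_{L^2}$. On the second set only the crude bound \eqref{eq:ass_mtheta_largeInputEstimate}, $\|e(\fx)\|\le C\|\feps^*(\fx)\|$, is available, and since this set need not be small in measure I would use the elementary pointwise inequality $\|\feps^*(\fx)\|^2\le M^{2-p}\|\feps^*(\fx)\|^p$ valid there, together with $\feps^*\in L^p$ from Theorem \ref{thm:homogenization_basic_contraction_estimates_EshelbyGreen} (whose proof rests on the $L^p$-boundedness of $\ffGamma$ in Lemma \ref{lem:homogenization_basic_Lp_estimates_EshelbyGreen}), to obtain a contribution of order $C\,\|\feps^*\|_{L^p}^{p/2}/M^{(p-2)/2}$. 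Adding the two pieces via $\sqrt{a^2+b^2}\le a+b$ yields $\|\mathcal{L}(\feps^*)-\mathcal{L}_\theta(\feps^*)\|_{L^2}\le\delta\|\feps^*\|_{L^2}+C\,\|\feps^*\|_{L^p}^{p/2}/M^{(p-2)/2}$, and division by $1-\gamma_\theta$ gives \eqref{eq:FNO_LS_FLNO_proximity_estimate}.

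The main obstacle is conceptual rather than computational: the neural multiplication is trustworthy only on a bounded strain range, so the tail from strains above $M$ — potentially created by corner or kink singularities at material interfaces — must be absorbed, and the Meyers-type self-improving integrability of the Lippmann--Schwinger fixed point is precisely the mechanism that makes this absorption quantitatively possible. Everything else (the perturbed-fixed-point lemma, the unit norm of $\ffGamma$, and the domain splitting) is standard.
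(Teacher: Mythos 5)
Your proposal is correct and follows essentially the same route as the paper: non-expansivity of $\ffGamma$ on $L^2$ plus Assumption~\ref{ass:mtheta}(iii) for the Lipschitz bound, the perturbed fixed-point inequality (which the paper states as a separate lemma in Appendix~\ref{apx:fixed_point_estimates}), and the Chebyshev-type split of $Y$ into $\{\|\feps^*\|\le M\}$ and $\{\|\feps^*\|>M\}$ controlled via the $L^p$-bound \eqref{eq:homogenization_basic_Lp_bound}. The only cosmetic deviations are that you derive the perturbed fixed-point estimate inline rather than invoking the appendix lemma, and you invoke $\Multiplication_\theta(\mathds{T},\bm{0})=\bm{0}$ for well-posedness where the paper simply adds and subtracts the exact double contraction; both are equivalent.
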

For increased readability, the arguments were out-sourced to Apx.~\ref{apx:LSNO}.

\subsection{The strategy for establishing the main result}
\label{sec:FNO_result}

\begin{figure}
	\centering
	\includegraphics[width=1.0\textwidth]{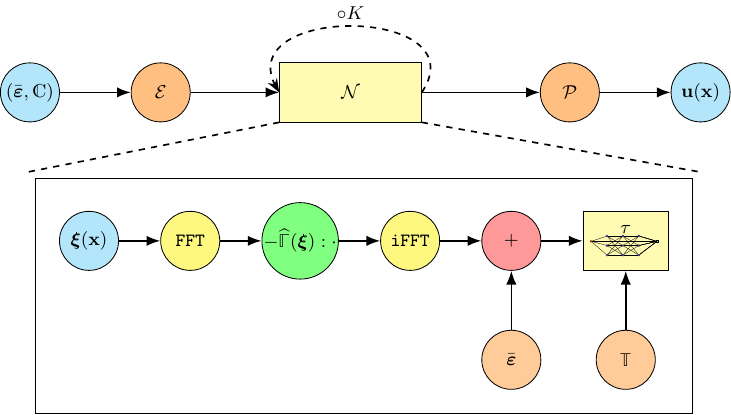}
	\caption{Schematic of the LS-FNO construction \eqref{eq:FNO_result_FNO_construction_finalExpression}.}
	\label{fig:recurrent_fno_schematic}
\end{figure}

The purpose of this section is to provide arguments for the validity of Thm.~\ref{thm:main}, i.e., for every periodic unit cell $Y$, every pair of lower and upper bounds $0<\alpha_- < \alpha_+$ on the material stiffness, every strain bound $\eps_0 > 0$ and every error margin $\delta_{\texttt{target}}>0$, there is an FNO
\begin{equation}\label{eq:FNO_result_contributions_FNO}
	\mathcal{F}_\theta: \Sym{d} \times \mathcal{M}_Y \rightarrow H^1_\#(Y)^d,
\end{equation}
s.t. the estimate 
\begin{equation}\label{eq:FNO_result_contributions_result}
	\left\| \mathcal{F}_\theta(\bar{\feps},\C) - \mathcal{S}(\bar{\feps},\C)\right\|_{H^1_\#(Y)^d} \leq \delta_{\texttt{target}}
\end{equation}
holds for all macroscopic strains $\bar{\feps} \in \Sym{d}$ obeying the bound
\begin{equation}\label{eq:FNO_result_contributions_assumedBarEpsBound}
	\| \bar{\feps} \| \leq \eps_0
\end{equation}
and all stiffness distributions 
\begin{equation}\label{eq:FNO_result_contributions_MAlphaPM}
	\C \in \mathcal{M}_Y(\alpha_-,\alpha_+),
\end{equation}
and where $\mathcal{S}$ refers to the exact solution operator, i.e., for all macroscopic strains $\bar{\feps} \in \Sym{d}$ and all stiffnesses \eqref{eq:intro_contributions_MAlphaPM}, the displacement field $\fu = \mathcal{S}(\bar{\feps},\C)\in H^1_\#(Y)^d$  solves the corrector equation
\begin{equation}\label{eq:FNO_result_contributions_corrector_eqn}
	\div \C:\left( \bar{\feps} + \nabla^s \fu \right) = \bm{0}.
\end{equation}
We will follow the framework of section \ref{sec:FNO_defn} only loosely -- it turns out to be more convenient to work with the full neural network \eqref{eq:FNO_LS_doubleContractionNN} representing the double contraction rather than considering neural network layers individually. We construct the FNO \eqref{eq:FNO_result_contributions_FNO} in the form
\begin{equation}\label{eq:FNO_result_FNO_construction_finalExpression}
	\mathcal{F}_\theta = \mathcal{P} \circ \mathcal{N}^{\circ K} \circ \,\mathcal{E},
\end{equation}
with the embedding operator
\begin{equation}\label{eq:FNO_result_FNO_construction_embedding}
	\begin{array}{rrcl}
		\mathcal{E}: & \Sym{d} \times \mathcal{M}_Y &\rightarrow & Z,\\
		& (\bar{\feps}, \C) & \mapsto & \left(\bar{\feps}, \frac{\C - \C^0}{\alpha_0} , \Multiplication_\theta\left( \frac{\C - \C^0}{\alpha_0}, \feps \right)\right), \quad \alpha_0= \frac{\alpha_- + \alpha_+}{2},
	\end{array}
\end{equation}
mapping to the Banach space
\begin{equation}\label{eq:FNO_result_FNO_construction_BanachSpace}
	Z = \Sym{d} \oplus L^\infty(Y;L(\Sym{d})) \oplus L^2(Y;\Sym{d}),
\end{equation}
the Fourier neural layer
\begin{equation}\label{eq:FNO_result_FNO_construction_layerOperator}
	\begin{array}{rrcl}
		\mathcal{N}: & Z &\rightarrow & Z,\\
		& (\bar{\feta}, \mathds{T}, \fxi) & \mapsto & \left(\bar{\feta}, \mathds{T}, \Multiplication_\theta\left( \mathds{T}, \bar{\feta} - \ffGamma:\fxi \right) \right),
	\end{array}	
\end{equation}
which is applied $K \in \N$ times in the expression \eqref{eq:FNO_result_FNO_construction_finalExpression}, and the projection operator
\begin{equation}\label{eq:FNO_result_FNO_construction_projection}
	\begin{array}{rrcl}
		\mathcal{P}: & Z &\rightarrow & H^1_\#(Y)^d,\\
		& (\bar{\feta}, \mathds{T}, \fxi) & \mapsto & -\fG \div \fxi,
	\end{array}	
\end{equation}
where Green's operator \eqref{eq:homogenization_basic_thm_proof_part1_2}
\begin{equation}\label{eq:FNO_result_FNO_construction_projection_Green}
	\fG \in L(H^{-1}_\#(Y)^d, H^{1}_\#(Y)^d)
\end{equation}
arises as the inverse of the bounded linear operator \eqref{eq:homogenization_basic_thm_proof_part1_3}
\begin{equation}\label{eq:FNO_result_FNO_construction_projection_Green2}
	H^1_\#(Y)^d \rightarrow H^{-1}_\#(Y)^d, \quad \fu \mapsto \div \nabla^s \fu
\end{equation}
and may be represented as a Fourier multiplier. Our construction depends on three parameters:
\begin{equation}\label{eq:FNO_result_parameters}
	(K,M,\delta) \in \N \times \R_{>0} \times \R_{>0},
\end{equation}
which need to be chosen judiciously: $K$ counts the number of times the central operator \eqref{eq:FNO_result_FNO_construction_layerOperator} is applied, $M$ parametrizes that bound on numbers where the double contraction \eqref{eq:FNO_LS_doubleContractionNN} is approximated with fidelity $\delta$. Fig.~\ref{fig:recurrent_fno_schematic} illustrates the construction of the FNO \eqref{eq:FNO_result_contributions_FNO}.\\
The construction \eqref{eq:FNO_result_FNO_construction_finalExpression} is a bit awkward because it needed to fit the setup of FNOs \eqref{eq:FNO_defn_FNO_topology}, where the central neural layer \eqref{eq:FNO_defn_layer_form} applies the activation function last. In the Lippmann-Schwinger neural operator \eqref{eq:FNO_LS_FLSO}, the neural network is applied \emph{first}, and the Fourier action proceeds afterwards. To make the connection and aid in the analysis, for $k=0,1,\ldots,K$ we introduce the quantity
\begin{equation}
	\feps^{k+1}_{\theta} = \bar{\feps} - \ffGamma : \fxi^k_\theta \quad \text{where} \quad \fxi^k_\theta = \texttt{pr}_3 \left( \mathcal{N}^{\circ k} \circ \,\mathcal{E}(\bar{\feps}, \C) \right)
\end{equation}
arises as the third component of an element of the space \eqref{eq:FNO_result_FNO_construction_BanachSpace}. Then, we observe, on the one hand
\begin{equation}\label{eq:FNO_result_FNO_iteration}
	\feps^{k+1}_\theta = \mathcal{L}_\theta\left(\feps^k_\theta; \bar{\feps},\C, \frac{\alpha_- + \alpha_+}{2}\right), \quad \feps^0_\theta = \bar{\feps},
\end{equation}
involving the Lippmann-Schwinger neural operator \eqref{eq:FNO_LS_FLSO},
and, on the other hand, the expression
\begin{equation}\label{eq:FNO_result_FNO_output_as_iterate}
	\feps^{K+1}_\theta = \bar{\feps} + \nabla^s \fu \quad \text{with} \quad \fu = \mathcal{F}_\theta(\bar{\feps}, \C)
\end{equation}
for the output of the FNO \eqref{eq:FNO_LS_FLSO}. For the convenience of the reader, the detailed arguments for establishing the fidelity estimate \eqref{eq:FNO_result_contributions_result} are provided in Appendix \ref{apx:main_result_arguments}. We will just provide a brief sketch here.\\
The FNO \eqref{eq:FNO_result_FNO_construction_finalExpression} is designed in such a way that its output \eqref{eq:FNO_result_FNO_output_as_iterate} equals the $(K+1)$-th iterate of the Lippmann-Schwinger FNO \eqref{eq:FNO_LS_FLSO}, inductively defined by eq.~\eqref{eq:FNO_result_FNO_iteration}. For properly chosen parameters \eqref{eq:FNO_result_parameters}, the Lippmann-Schwinger FNO \eqref{eq:FNO_LS_FLSO} defines an $L^2$-contraction, in particular there is a unique fixed point $\feps^*_\theta$
\begin{equation}\label{eq:FNO_result_FNO_fixed_point}
	\feps^*_\theta = \mathcal{L}_\theta\left(\feps^*_\theta; \bar{\feps},\C,\frac{\alpha_- + \alpha_+}{2}\right).
\end{equation}
Then, we may estimate
\begin{equation}\label{eq:FNO_result_error_decomposition}
	\left\| \feps_\theta^{K+1} - \feps^* \right\|_{L^2} \leq \left\| \feps_\theta^{K+1} - \feps^*_\theta \right\|_{L^2} + \left\| \feps^*_\theta - \feps^* \right\|_{L^2}
\end{equation}
The first term
\begin{equation}\label{eq:FNO_result_error_first}
	\left\| \feps_\theta^{K+1} - \feps^*_\theta \right\|_{L^2}
\end{equation}
equals the \emph{iteration error}. Due to the contractivity property of the Lippmann-Schwinger FNO \eqref{eq:FNO_LS_FLSO}, this error becomes infinitesimal as the parameter $K$ goes to infinity. The second error
\begin{equation}\label{eq:FNO_result_error_second}
	\left\| \feps^*_\theta - \feps^* \right\|_{L^2}
\end{equation}
measures the \emph{approximation error} that was induced by replacing the double contraction \eqref{eq:FNO_LS_doubleContraction} by a neural network \eqref{eq:FNO_LS_doubleContractionNN}. By appropriately choosing the parameters $M$ and $\delta$, this error may be made as small as desired, possibly at the expense of more layers to be used in the network.\\
There are two principal difficulties to be dealt with. For a start, all the arguments need to be \emph{uniform} in the input stiffness $\C$, i.e., \emph{the same} FNO needs has to provide the desired accuracy \eqref{eq:FNO_result_contributions_result}. The second difficulty concerns linear elasticity itself. Solutions to the balance equation \eqref{eq:homogenization_setup_equilibrium} are much less regular than in case of scalar elliptic equations~\cite{Necas1976CounterExample}, in particular the gradients of the displacement are generally unbounded and only $L^p$-estimates for an exponent $p$ slightly larger than two hold~\cite{ShiWright1994,Herzog2011}. Neural networks, on the other hand, are typically only trained on compact sets. Thus, a particular way needs to be found to estimate the "{}remainder"{} not well-approximated by the neural network \eqref{eq:FNO_LS_doubleContractionNN}. As already announced, the details are found in Appendix \ref{apx:main_result_arguments}

\section{Construction of the double-contraction operator with ReLU}
\label{sec:activations}

\begin{figure}[h!]
	\centering
		\begin{minipage}{0.36\textwidth}
			\centering
			\includegraphics[width=\linewidth]{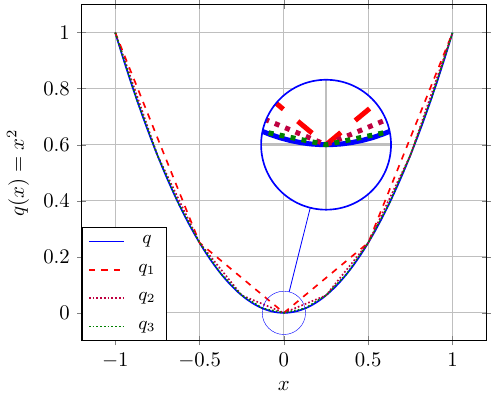}
			\subcaption{}
		\end{minipage}
		\hfill
		\begin{minipage}{0.36\textwidth}
			\centering
			\includegraphics[width=\linewidth]{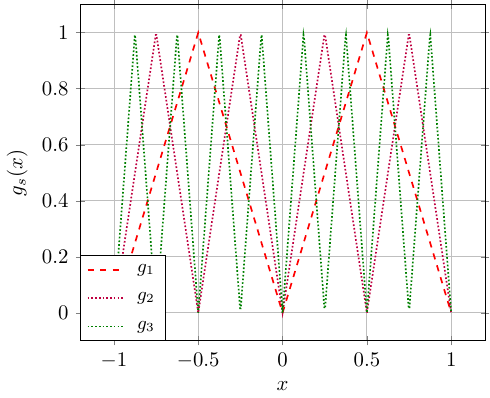}
			\subcaption{}
		\end{minipage}
		\hfill
		% Third column: Subfigures 3, 4, and 5 stacked
		\begin{minipage}{0.24\textwidth}
			\centering
			\includegraphics[width=0.6\linewidth]{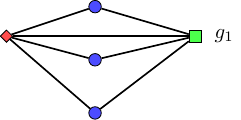}
			\subcaption{}			
%			\vspace{0.2cm} % Vertical space between subfigures			
			\includegraphics[width=0.8\linewidth]{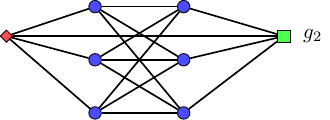}
			\subcaption{}			
%			\vspace{0.2cm} % Vertical space between subfigures			
			\includegraphics[width=\linewidth]{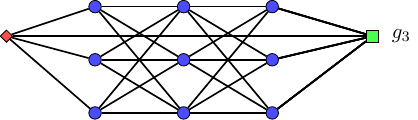}
			\subcaption{}
		\end{minipage}		
		\caption{Approximating the square function $q(x) = x^2$ following Yarotski~\cite{Yarotsky2017ReLU}: (a) ReLU approximations $q_{\mathfrak{m}}(x) = x - \sum_{k=1}^{\mathfrak{m}} \frac{g_k(x)}{4^k}$ by piece-wise linear functions shown in (b), which are constructed by the (c-d-e) ReLU network of depth $\mathfrak{m}=1$, $\mathfrak{m}=2$ and $\mathfrak{m}=3$.}
		\label{fig:ReLUapproximation}
\end{figure}

The purpose of this section is to provide an explicit construction of a neural network \eqref{eq:FNO_LS_doubleContractionNN}
\begin{equation}\label{eq:activations_doubleContractionNN}
	\Multiplication_\theta: L(\Sym{d}) \times \Sym{d} \rightarrow \Sym{d},
\end{equation}
which approximates the double-contraction operator \eqref{eq:FNO_LS_doubleContraction} in such a way that all the conditions in Assumption \ref{ass:mtheta} are satisfied. In a suitable vector notation, the double-contraction operator \eqref{eq:FNO_LS_doubleContraction} is represented by a matrix-vector multiplication. Thus, it appears imperative to \emph{learn} the multiplication of two real numbers. For the sake of being explicit, we consider the ReLU activation function
\begin{equation}\label{eq:activations_ReLU}
	\sigma: \R \rightarrow \R, \quad x \mapsto \left\{
		\begin{array}{rl}
			x, & x \geq 0,\\
			0, & \text{otherwise}.
		\end{array}
	\right.
\end{equation}
To approximate the multiplication of two real numbers, we follow Yarotsky~\cite{Yarotsky2017ReLU} and consider a ReLU approximation of the square function
\begin{equation}\label{eq:activations_square}
	q:[-1,1] \rightarrow [0,1], \quad q(x) = x^2,
\end{equation}
illustrated in Fig.~\ref{fig:ReLUapproximation}.
The following result holds, which is based on Yarotsky~\cite{Yarotsky2017ReLU}, who considered only the uniform norm, and Gühring et al.~\cite{Guehring2020ReLUWsp}, who provided additional estimates for Yarotsky's construction.
\begin{prop}\label{prop:activations_square_approximation}
	For any fidelity level $\delta > 0$, there is a neural network
	\begin{equation}\label{eq:activations_square_approximation}
		q_\theta: [-1,1] \rightarrow [0,1]
	\end{equation}
	based on the ReLU activation function \eqref{eq:activations_ReLU} with the following properties:
	\begin{enumerate}[(i)]
		\item The network $q_\theta$ is Lipschitz continuous and satisfies the estimate
		\begin{equation}\label{eq:activations_square_approximation_estimate}
			\|q - q_\theta\|_{W^{1,\infty}(-1,1)} \leq \delta.
		\end{equation}
		\item The function $q_\theta$ vanishes at the origin
		\begin{equation}\label{eq:activations_square_approximation_origin}
			q_\theta(0) = 0.
		\end{equation}
	\end{enumerate}
\end{prop}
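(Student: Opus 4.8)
The plan is to reduce the statement to the classical sawtooth construction of Yarotsky~\cite{Yarotsky2017ReLU} on $[0,1]$ and then transport it to $[-1,1]$ by precomposing with the absolute value, which is represented exactly by ReLU via $|x| = \sigma(x) + \sigma(-x)$. Concretely, one starts from the tent map $g:[0,1]\to[0,1]$ of Fig.~\ref{fig:ReLUapproximation}(c), given on $[0,1]$ by the ReLU expression $g(x) = 2\sigma(x) - 4\sigma(x-\tfrac12)$, i.e.\ $g(x)=2x$ on $[0,\tfrac12]$ and $g(x)=2(1-x)$ on $[\tfrac12,1]$, forms its $s$-fold composition $g_s = g^{\circ s}$, and sets
\[
q_{\mathfrak m}(x) \;=\; x - \sum_{k=1}^{\mathfrak m} \frac{g_k(x)}{4^k}, \qquad x\in[0,1].
\]
The basic fact, due to Yarotsky, is that $q_{\mathfrak m}$ is exactly the piecewise-linear interpolant of $q(x)=x^2$ at the dyadic nodes $x_j = j\,2^{-\mathfrak m}$, $j=0,\dots,2^{\mathfrak m}$; it is a genuine ReLU network (the $g_k$ are compositions of ReLU layers, assembled into a single network of depth $O(\mathfrak m)$ by carrying the running partial sum, followed by an affine output), it is Lipschitz (finitely many affine pieces), it maps $[0,1]$ into $[0,1]$ (linear interpolation of values in $[0,1]$), and it satisfies $q_{\mathfrak m}(0)=0$ because $g(0)=0$ forces $g_k(0)=0$ for all $k$.

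The next step is to quantify the two error contributions on a generic subinterval $I_j = [x_j,x_{j+1}]$ of length $h=2^{-\mathfrak m}$. Since $q$ is convex the chord lies above the parabola, which gives the standard estimate $\|q - q_{\mathfrak m}\|_{L^\infty(0,1)} \le h^2/4 = 2^{-2\mathfrak m - 2}$. For the derivative one exploits the special structure of a quadratic: the (constant) slope of $q_{\mathfrak m}$ on $I_j$ equals $\big(x_{j+1}^2 - x_j^2\big)/h = 2\cdot\tfrac{x_j+x_{j+1}}{2}$, i.e.\ exactly $q'$ evaluated at the midpoint of $I_j$, so that $|q'(x) - q_{\mathfrak m}'(x)| = 2\,|x - \mathrm{mid}(I_j)| \le h = 2^{-\mathfrak m}$ for a.e.\ $x$. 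Hence $\|q - q_{\mathfrak m}\|_{W^{1,\infty}(0,1)} \le 2^{-\mathfrak m} + 2^{-2\mathfrak m - 2}$, and it suffices to pick $\mathfrak m$ so large that this is at most $\delta$; the Sobolev-norm bookkeeping here is precisely what Gühring et al.~\cite{Guehring2020ReLUWsp} supplied on top of Yarotsky's uniform-norm result.

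Finally one defines $q_\theta(x) := q_{\mathfrak m}\big(\sigma(x)+\sigma(-x)\big) = q_{\mathfrak m}(|x|)$, which is again a ReLU network, maps $[-1,1]$ to $[0,1]$, and vanishes at the origin since $q_{\mathfrak m}(0)=0$. Because $q(x)=|x|^2=q(|x|)$, the uniform bound transfers verbatim; and since $q_\theta$ is Lipschitz with $q_\theta'(x) = \mathrm{sgn}(x)\,q_{\mathfrak m}'(|x|)$ for a.e.\ $x$, one gets $|q'(x)-q_\theta'(x)| = |q'(|x|) - q_{\mathfrak m}'(|x|)| \le 2^{-\mathfrak m}$, so that $\|q-q_\theta\|_{W^{1,\infty}(-1,1)}\le\delta$, which in particular re-proves Lipschitz continuity (with constant $\le 2+\delta$). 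This establishes both (i) and (ii).

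The only genuinely new ingredient compared with Yarotsky's uniform-norm statement is the derivative estimate: one has to notice that the interpolation slopes are precisely the midpoint values of $q'$ — this is what makes the $W^{1,\infty}$ error decay like $2^{-\mathfrak m}$ rather than remaining $O(1)$ — and to check that precomposition with $|\cdot|$, though non-differentiable at the origin, does not spoil this, because $q'$ is itself linear through the origin so the two one-sided slopes of $q_\theta$ near $0$ are small and symmetric. The remainder (verifying that every operation above stays inside the class of ReLU networks, and, if a complexity statement is wanted, that depth and width grow only like $O(\mathfrak m)=O(\log(1/\delta))$) is routine bookkeeping.
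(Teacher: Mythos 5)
Your proposal is correct and follows essentially the same route the paper takes, since the paper offers no proof of its own but delegates entirely to Gühring et al.~\cite[Prop.~C.1]{Guehring2020ReLUWsp}, whose argument is exactly the one you reconstruct: Yarotsky's sawtooth interpolant $q_{\mathfrak m}$ of $x^2$ at dyadic nodes, the midpoint-slope observation giving the $O(2^{-\mathfrak m})$ derivative error, and the extension to $[-1,1]$. All the individual steps check out (in particular the chord slope over $[x_j,x_{j+1}]$ equals $q'$ at the midpoint, the range and origin conditions follow from $g_k(0)=0$ and interpolation of values in $[0,1]$, and the precomposition with $|x|=\sigma(x)+\sigma(-x)$ is handled correctly at the kink), so your write-up is a valid self-contained substitute for the citation.
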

We refer to Gühring et al.~\cite[Prop.~C.1]{Guehring2020ReLUWsp} for a proof.\\
With this result at hand, and based on the polarization identity
\begin{equation}\label{eq:activations_polarization_identity}
	ab = 2M^2 \left[ \left(\frac{a+b}{2M}\right)^2 - \left(\frac{a}{2M}\right)^2 - \left(\frac{b}{2M}\right)^2\right],
\end{equation}
valid for all real numbers $a,b \in \R$ and any positive number $M>0$,
we follow Yarotsky~\cite{Yarotsky2017ReLU} and define the multiplication operator
\begin{equation}\label{eq:activations_multiplicationNN}
	\multiplication_\theta: [-M,M]^2 \rightarrow \R, \quad \multiplication_\theta(a,b) = 2M^2 \left[ q_\theta\left( \frac{a+b}{2M} \right) - q_\theta\left( \frac{a}{2M}\right) - q_\theta\left( \frac{b}{2M} \right) \right],
\end{equation}
associated to the approximation \eqref{eq:activations_square_approximation} of the square function.\\
As a second ingredient for our construction, we introduce the ridge function
\begin{equation}\label{eq:activations_ridge_function}
	r_\theta: \R \rightarrow \R, \quad
	r_\theta (x) = \left\{
		\begin{array}{rl}
			-M, & x \leq -M,\\
			x, & -M < x < M,\\
			M, & x \geq M,
		\end{array}
	\right.
\end{equation}
which may be represented exactly via a ReLU network of depth two. With the networks \eqref{eq:activations_multiplicationNN} and \eqref{eq:activations_ridge_function} at hand, we may construct the desired double-contraction operator \eqref{eq:activations_doubleContractionNN}. Using Mandel's notation, we may represent the strain $\feps \in \Sym{d}$ in the form of a vector $\underline{\eps}$ with components $\eps_i$, and an operator
\begin{equation}\label{eq:activations_T_operator}
	\mathds{T} \in L(\Sym{d})
\end{equation}
is represented by a matrix $\underline{\underline{T}}$, s.t. the norm equivalences
\begin{equation}\label{eq:activations_norm_eq_strain}
	\|\feps\| = \| \underline{\eps} \|
\end{equation}
and
\begin{equation}\label{eq:activations_norm_eq_operator}
	\|\mathds{T}\| = \| \underline{\underline{T}} \|
\end{equation}
hold for all such objects, where we have the Frobenius norm \eqref{eq:homogenization_setup_Frobenius} on the left hand side in eq. \eqref{eq:activations_norm_eq_strain}, and the Euclidean norm on the right hand side. In the identity \eqref{eq:activations_norm_eq_operator}, both hands refer to the respective induced spectral norms. With these preliminary notions at hand, we may construct the double-contraction operator \eqref{eq:activations_doubleContractionNN}:
\begin{equation}\label{eq:activations_double_contraction_construction}
	\Multiplication_\theta: L(\Sym{d}) \times \Sym{d} \rightarrow \Sym{d}, \quad \fxi = \Multiplication_\theta(\mathds{T}, \feps),
\end{equation}
with
\begin{equation}\label{eq:activations_double_contraction_construction2}
	\xi_i = \sum_{j=1}^K \multiplication_{\theta}(T_{ij}, r_\theta(\eps_j)), \quad i=1,2,\ldots, K,
\end{equation}
where we set $K = d(d+1)/2$. In words, the operator \eqref{eq:activations_double_contraction_construction} proceeds by first restricting the input strains component-wise to the interval $[-M,M]$ via the ridge function \eqref{eq:activations_ridge_function}, and subsequently performs the matrix-vector multiplication via the learned multiplication operator \eqref{eq:activations_multiplicationNN}.\\
With these assumptions at hand, we wish to show that Assumption \ref{ass:mtheta} is satisfied for the construction \eqref{eq:activations_double_contraction_construction}. More precisely, the following statement holds, see Apx.~\ref{apx:multiplication} for a proof.

\begin{lem}\label{lem:activations_assumptions}
	The construction \eqref{eq:activations_double_contraction_construction} satisfies the following properties:
	\begin{enumerate}[(i)]
		\item 	The Lipschitz estimate
			\begin{equation}\label{eq:activations_assumptions_lem_Lipschitz}
				\left\| \Multiplication_\theta(\mathds{T},\feps_1) - \Multiplication_\theta(\mathds{T},\feps_2) \right\| \leq \left( \|\mathds{T}\| + M\delta d(d+1)\, \max_{i,j}|T_{ij}| \right) \| \feps_1 - \feps_2\|
			\end{equation}
			holds for all strains $\feps_1, \feps_2 \in \Sym{d}$ and all tensors $\mathds{T} \in L(\Sym{d})$ obeying the constraint
			\begin{equation}\label{eq:activations_assumptions_lem_boundedTensors}
				\| \mathds{T} \| \leq M.
			\end{equation}
		\item The estimate
			\begin{equation}\label{eq:activations_assumptions_lem_multiplication_proximity_small}
				\left\| \Multiplication_\theta(\mathds{T}, \feps) - \mathds{T}:\feps \right\| \leq M\delta d(d+1)\, \max_{i,j}|T_{ij}| \|\feps\|
			\end{equation}
			holds for all tensors $\feps \in \Sym{d}$ and all tensors $\mathds{T} \in L(\Sym{d})$ under the constraints
			\begin{equation}\label{eq:activations_assumptions_lem_boundedStrainsTensors}
				 \| \mathds{T} \| \leq M \quad \text{and} \quad \|\feps\| \leq M.
			\end{equation}
		\item 	The bound 
			\begin{equation}\label{eq:activations_assumptions_lem_multiplication_proximity_large}
				\left\| \Multiplication_\theta(\mathds{T},\feps) - \mathds{T}:\feps \right\| \leq (1 + \delta M d (d+1)) \|\mathds{T}\| \, \|\feps\|
			\end{equation}
			holds for all tensors $\mathds{T} \in L(\Sym{d})$ and $\feps \in \Sym{d}$ satisfying the constraints
				\begin{equation}\label{eq:activations_assumptions_lem_boundedTensors_unbounded strains}
					\| \mathds{T} \| \leq M \quad \text{and} \quad \| \feps\| \geq M.
				\end{equation}
	\end{enumerate}
\end{lem}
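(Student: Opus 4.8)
The plan is to reduce all three estimates, via Mandel's vector--matrix representation \eqref{eq:activations_norm_eq_strain}--\eqref{eq:activations_norm_eq_operator}, to scalar estimates on the two elementary building blocks of \eqref{eq:activations_double_contraction_construction2}: the clamp $r_\theta$ of \eqref{eq:activations_ridge_function} and the scalar multiplication network $m_\theta$ of \eqref{eq:activations_multiplicationNN}. Writing $e := q_\theta - q$, Proposition~\ref{prop:activations_square_approximation} supplies $\|e\|_{L^\infty(-1,1)} \le \delta$, $\|e'\|_{L^\infty(-1,1)} \le \delta$ and $e(0) = 0$, whence $|e(x)| \le \delta\,|x|$ on $[-1,1]$. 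Substituting the polarization identity \eqref{eq:activations_polarization_identity} into \eqref{eq:activations_multiplicationNN} gives, for $|a|,|b| \le M$,
\begin{equation*}
	m_\theta(a,b) - ab = 2M^2\Bigl[\, e\Bigl(\tfrac{a+b}{2M}\Bigr) - e\Bigl(\tfrac{a}{2M}\Bigr) - e\Bigl(\tfrac{b}{2M}\Bigr)\Bigr].
\end{equation*}
Reading the bracket as the increment from $t=0$ to $t=b$ of the map $t \mapsto e\bigl(\tfrac{a+t}{2M}\bigr) - e\bigl(\tfrac{t}{2M}\bigr)$, whose derivative is bounded by $\delta/M$, one obtains the proximity bound $|m_\theta(a,b) - ab| \le 2M\delta\,|b|$ and, by symmetry, $|m_\theta(a,b) - ab| \le 2M\delta\,\min(|a|,|b|)$; differentiating \eqref{eq:activations_multiplicationNN} and using $q'(x) = 2x$ gives $\partial_b m_\theta(a,b) = a + M\bigl[e'\bigl(\tfrac{a+b}{2M}\bigr) - e'\bigl(\tfrac{b}{2M}\bigr)\bigr]$, hence the Lipschitz bound $|m_\theta(a,b_1) - m_\theta(a,b_2)| \le (|a| + 2M\delta)\,|b_1 - b_2|$; and $q_\theta(0) = 0$ yields $m_\theta(a,0) = 0$. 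The clamp $r_\theta$ is $1$-Lipschitz, fixes the origin, maps $\R$ into $[-M,M]$, acts as the identity on $[-M,M]$, and $x \mapsto r_\theta(x) - x$ is again $1$-Lipschitz; writing $\widetilde{\feps}$ for the componentwise clamp $(\widetilde{\eps}_j) = (r_\theta(\eps_j))$ we thus have $\|\widetilde{\feps}\| \le \|\feps\|$, $\|\feps - \widetilde{\feps}\| \le \|\feps\|$, $\|\widetilde{\feps}_1 - \widetilde{\feps}_2\| \le \|\feps_1 - \feps_2\|$, and $\widetilde{\feps} = \feps$ as soon as $\|\feps\| \le M$ (since then $|\eps_j| \le \|\underline{\eps}\| = \|\feps\| \le M$ for every $j$).

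The assembly then goes as follows. By construction \eqref{eq:activations_double_contraction_construction2}, $\xi_i = \sum_{j=1}^K m_\theta(T_{ij}, \widetilde{\eps}_j)$ while $(\mathds{T}:\feps)_i = \sum_j T_{ij}\eps_j$ in Mandel notation, with $K = d(d+1)/2$. For (i) I would write $\Multiplication_\theta(\mathds{T},\feps_1) - \Multiplication_\theta(\mathds{T},\feps_2) = \mathds{T}:(\widetilde{\feps}_1 - \widetilde{\feps}_2) + \sty{R}$, where $R_i = \sum_j\bigl[m_\theta(T_{ij},\widetilde{\eps}^1_j) - m_\theta(T_{ij},\widetilde{\eps}^2_j) - T_{ij}(\widetilde{\eps}^1_j - \widetilde{\eps}^2_j)\bigr]$, bound the first summand by $\|\mathds{T}\|\,\|\feps_1 - \feps_2\|$ using that $r_\theta$ is $1$-Lipschitz, and control $\|\sty{R}\|$ by the scalar Lipschitz bound applied termwise followed by the Cauchy--Schwarz inequality over the $K$ Mandel indices. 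For (ii) the hypothesis $\|\feps\| \le M$ deactivates the clamp, so that $\xi_i - (\mathds{T}:\feps)_i = \sum_j\bigl[m_\theta(T_{ij},\eps_j) - T_{ij}\eps_j\bigr]$, and the proximity bound applied termwise followed by Cauchy--Schwarz over $i,j$ gives the claim. For (iii) I would split $\Multiplication_\theta(\mathds{T},\feps) - \mathds{T}:\feps = \bigl[\Multiplication_\theta(\mathds{T},\feps) - \mathds{T}:\widetilde{\feps}\bigr] + \mathds{T}:(\widetilde{\feps} - \feps)$; the second term is bounded by $\|\mathds{T}\|\,\|\feps - \widetilde{\feps}\| \le \|\mathds{T}\|\,\|\feps\|$, while for the first term the componentwise bounds $|\widetilde{\eps}_j| \le M$ and $|T_{ij}| \le \|\mathds{T}\| \le M$ make the proximity bound applicable, producing an $\feps$-independent estimate proportional to $M\delta\,\|\mathds{T}\|$, which is absorbed into $\delta M d(d+1)\,\|\mathds{T}\|\,\|\feps\|$ using $M \le \|\feps\|$ together with $M \ge 1$, and adding the two pieces gives the asserted $(1 + \delta M d(d+1))\,\|\mathds{T}\|\,\|\feps\|$.

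I expect the genuinely delicate point to be not the scalar analysis but keeping the bounds sharp enough to be useful downstream, and in particular keeping them \emph{linear in} $\|\feps\|$. The proximity estimate must extract a full factor $|b|$ --- equivalently $|\eps_j|$ --- which is exactly where the $W^{1,\infty}$, rather than merely $L^\infty$, control on $q - q_\theta$ from Proposition~\ref{prop:activations_square_approximation} is indispensable; the crude alternative $|m_\theta(a,b) - ab| = O(M^2\delta)$ would destroy the contraction argument of Proposition~\ref{prop:FNO_LS_FLNO}. The second subtlety is the bookkeeping around the clamp in regime (iii): since strains solving the balance equation are only $L^p$-integrable for some $p$ slightly above $2$ and may be arbitrarily large at isolated points, the truncation error $\mathds{T}:(\feps - \widetilde{\feps})$ has to be controlled so that it stays proportional to $\|\mathds{T}\|\,\|\feps\|$ and merges cleanly with the leading ``$1$'', rather than degrading into a constant or into a contribution quadratic in $\|\mathds{T}\|$.
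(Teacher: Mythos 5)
Your proposal follows essentially the same route as the paper's Appendix~\ref{apx:multiplication}: a scalar estimate on $\multiplication_\theta$ extracted from the $W^{1,\infty}$-control on $q_\theta-q$ (your derivative computation $\partial_b \multiplication_\theta(a,b)=a+M\bigl[e'\bigl(\tfrac{a+b}{2M}\bigr)-e'\bigl(\tfrac{b}{2M}\bigr)\bigr]$ is, after integration, exactly Lemma~\ref{lem:activations_scalar_multiplication}, which the paper obtains by writing the same quantity as telescoping integrals of $q_\theta'-q'$), combined with the non-expansivity of the clamp $r_\theta$, the identity $\multiplication_\theta(a,0)=0$ from $q_\theta(0)=0$, and the identical decompositions and Cauchy--Schwarz assembly over the $K=d(d+1)/2$ Mandel indices in all three parts. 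Part~(iii), including the absorption of the clamp error into the leading $\|\mathds{T}\|\,\|\feps\|$ via $\|\feps\|\geq M\geq 1$, comes out correctly.

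The one point of divergence is the factor $\max_{i,j}|T_{ij}|$ in \eqref{eq:activations_assumptions_lem_Lipschitz} and \eqref{eq:activations_assumptions_lem_multiplication_proximity_small}. Your scalar bounds read $\left|\multiplication_\theta(a,b_1)-\multiplication_\theta(a,b_2)-a(b_1-b_2)\right|\leq 2M\delta\,|b_1-b_2|$ and $\left|\multiplication_\theta(a,b)-ab\right|\leq 2M\delta\min(|a|,|b|)$, with no \emph{product} of $|a|$ and the increment; assembling them termwise therefore yields (i) and (ii) with $\max_{i,j}|T_{ij}|$ replaced by $1$, which is weaker than the stated inequalities whenever $\max_{i,j}|T_{ij}|<1$, and a $\min$ cannot be upgraded to the product without further structure on $q_\theta'-q'$ (which is merely bounded by $\delta$, not Lipschitz with a small constant). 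Note that the paper's own step \eqref{eq:activations_assumptions_lem_proof_part1_5} inserts the extra factor $|T_{ij}|$ at the analogous point even though Lemma~\ref{lem:activations_scalar_multiplication} does not supply it, so this is at least as much a blemish of the stated constants as of your argument; and in the sole downstream use, Prop.~\ref{prop:activations_assumptions}, the factor is immediately majorized by $\|\mathds{T}\|\leq 1$, so your version of (i)--(ii) is fully sufficient for everything that follows.
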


We are in the position to establish that the assumption \eqref{ass:mtheta} holds for the construction \eqref{eq:activations_double_contraction_construction}-\eqref{eq:activations_double_contraction_construction2} with constant $C = 2$ (we could have chosen any other positive number greater than unity).
\begin{prop}\label{prop:activations_assumptions}
For any positive constant $M \geq 1$ and every error margin $\delta_0>0$, the constructed neural network \eqref{eq:activations_double_contraction_construction}-\eqref{eq:activations_double_contraction_construction2} satisfies the following conditions:
\begin{enumerate}[(i)]
\item The inequality
	\begin{equation}\label{eq:activations_assumptions_mtheta_smallInputEstimate}
		\left\| \Multiplication_\theta(\mathds{T},\feps) - \mathds{T}:\feps \right\| \leq \delta_0 \, \|\feps\|
	\end{equation}
	holds for all tensors $\mathds{T} \in L(\Sym{d})$ and $\feps \in \Sym{d}$ satisfying the constraints
	\begin{equation}\label{eq:activations_assumptions_mtheta_smallInputEstimate_prereq}
		\| \mathds{T} \| \leq 1 \quad \text{and} \quad \| \feps\| \leq M.
	\end{equation}
\item The bound 
	\begin{equation}\label{eq:activations_assumptions_mtheta_largeInputEstimate}
		\left\| \Multiplication_\theta(\mathds{T},\feps) - \mathds{T}:\feps \right\| \leq 2 \, \|\feps\|
	\end{equation}
	is valid for all tensors $\mathds{T} \in L(\Sym{d})$ and $\feps \in \Sym{d}$ satisfying the constraints
	\begin{equation}\label{eq:activations_assumptions_mtheta_largeInputEstimate_prereq}
		\| \mathds{T} \| \leq 1 \quad \text{and} \quad \| \feps\| \geq M.
	\end{equation}
\item The Lipschitz estimate
	\begin{equation}\label{eq:activations_assumptions_mtheta_contraction}
		\left\| \Multiplication_\theta(\mathds{T},\feps_1) - \Multiplication_\theta(\mathds{T},\feps_2) \right\| \leq \left( \|\mathds{T}\| + \delta_0 \right)\, \| \feps_1 - \feps_2\| 
	\end{equation}
	holds for all strains $\feps_1, \feps_2 \in \Sym{d}$ and all tensors $\mathds{T} \in L(\Sym{d})$ obeying the constraint
	\begin{equation}\label{eq:activations_assumptions_mtheta_contraction_prereq}
		\| \mathds{T} \| \leq 1.
	\end{equation}
\end{enumerate}
\end{prop}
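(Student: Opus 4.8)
The plan is to obtain Proposition \ref{prop:activations_assumptions} as a direct specialization of Lemma \ref{lem:activations_assumptions}, the only genuine choice being how small to take the fidelity parameter $\delta$ of the ReLU approximation $q_\theta$ of the square function. Given $M \geq 1$ and $\delta_0 > 0$, I would fix
\begin{equation*}
	\delta \;\leq\; \frac{\min(1,\delta_0)}{M\,d(d+1)},
\end{equation*}
use Proposition \ref{prop:activations_square_approximation} to produce a ReLU network $q_\theta$ with $\|q - q_\theta\|_{W^{1,\infty}(-1,1)} \leq \delta$ and $q_\theta(0)=0$, and then assemble $\multiplication_\theta$ by the polarization formula \eqref{eq:activations_multiplicationNN}, the ridge (clipping) function $r_\theta$ by \eqref{eq:activations_ridge_function}, and finally $\Multiplication_\theta$ by \eqref{eq:activations_double_contraction_construction}--\eqref{eq:activations_double_contraction_construction2}, all for this value of $\delta$. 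This is precisely the network to which Lemma \ref{lem:activations_assumptions} applies.

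Before invoking the lemma I would record one elementary inequality: in Mandel's notation the matrix $\ull{T}$ representing $\mathds{T} \in L(\Sym{d})$ has entries $|T_{ij}| = |\fe_i^\top \ull{T}\,\fe_j| \leq \|\ull{T}\| = \|\mathds{T}\|$, the last equality being the operator-norm identity \eqref{eq:activations_norm_eq_operator}. Hence, whenever the hypothesis $\|\mathds{T}\| \leq 1$ of Proposition \ref{prop:activations_assumptions} holds, we simultaneously have $\max_{i,j}|T_{ij}| \leq 1$ and, since $M \geq 1$, the bound $\|\mathds{T}\| \leq M$ under which Lemma \ref{lem:activations_assumptions} was established. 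Feeding these into the three items of the lemma gives the claims directly: item~(iii) yields $\|\Multiplication_\theta(\mathds{T},\feps) - \mathds{T}:\feps\| \leq (1+\delta M d(d+1))\,\|\mathds{T}\|\,\|\feps\| \leq 2\|\feps\|$ for $\|\feps\| \geq M$, because $\delta M d(d+1) \leq 1$ and $\|\mathds{T}\| \leq 1$; item~(ii) yields $\|\Multiplication_\theta(\mathds{T},\feps) - \mathds{T}:\feps\| \leq M\delta d(d+1)\max_{i,j}|T_{ij}|\,\|\feps\| \leq \delta_0\,\|\feps\|$ for $\|\feps\| \leq M$; and item~(i) yields the Lipschitz bound with prefactor $\|\mathds{T}\| + M\delta d(d+1)\max_{i,j}|T_{ij}| \leq \|\mathds{T}\| + \delta_0$. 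These are exactly \eqref{eq:activations_assumptions_mtheta_largeInputEstimate}, \eqref{eq:activations_assumptions_mtheta_smallInputEstimate} and \eqref{eq:activations_assumptions_mtheta_contraction}, and the constant $C = 2$ so obtained does not depend on $M$ or $\delta_0$, as Assumption \ref{ass:mtheta} demands.

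I do not expect a real obstacle here, since all the analytic content has already been packaged into Lemma \ref{lem:activations_assumptions}; the point to be careful about is the coupling of parameters. The parameter $\delta$ must be small enough both to turn $M\delta d(d+1)$ into a quantity $\leq \delta_0$ (needed for the small-input and contraction estimates) \emph{and} to keep $\delta M d(d+1) \leq 1$ (needed so that the large-input multiplicative constant stays at $2$ even when $\delta_0$ happens to be large). The single choice $\delta \leq \min(1,\delta_0)/(M d(d+1))$ handles both simultaneously, and no further tuning is required; the proof is then just a line-by-line rereading of Lemma \ref{lem:activations_assumptions} with $\|\mathds{T}\| \leq 1$.
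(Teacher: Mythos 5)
Your proposal is correct and follows essentially the same route as the paper: both reduce the claim to Lemma \ref{lem:activations_assumptions} via the entrywise bound $\max_{i,j}|T_{ij}| \leq \|\mathds{T}\|$ and then choose $\delta$ so that $M\delta d(d+1)$ is dominated by $\delta_0$ and by $1$. Your choice $\delta \leq \min(1,\delta_0)/(M d(d+1))$ is a slight refinement of the paper's, which simply assumes $\delta_0 \leq 1$ and takes $\delta \leq \delta_0/(Md(d+1))$; otherwise the arguments coincide.
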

\begin{proof}
	By Lemma \ref{lem:activations_assumptions}, ineq.~\eqref{eq:activations_assumptions_lem_Lipschitz}, the Lipschitz estimate \eqref{eq:activations_assumptions_mtheta_contraction} is satisfied in case the inequality
	\begin{equation}\label{eq:activations_assumptions_proof1}
		M\delta d(d+1) \leq \delta_0
	\end{equation}
	holds in view of the bound
	\begin{equation}\label{eq:activations_assumptions_proof2}
		\max_{i,j}|T_{ij}| \leq  \| \mathds{T} \|.
	\end{equation}
	By Lemma \ref{lem:activations_assumptions}, ineq.~\eqref{eq:activations_assumptions_lem_multiplication_proximity_small}, the small-input estimate \eqref{eq:activations_assumptions_mtheta_smallInputEstimate} holds provided the inequality 
	\begin{equation}\label{eq:activations_assumptions_proof3}
		M\delta d(d+1) \leq \delta_0
	\end{equation}
	is satisfied. For the estimate \eqref{eq:activations_assumptions_mtheta_largeInputEstimate} to hold, a necessary condition reads
	\begin{equation}\label{eq:activations_assumptions_proof4}
		\delta M d (d+1) \leq 1
	\end{equation}
	via Lemma \ref{lem:activations_assumptions}, ineq.~\eqref{eq:activations_assumptions_lem_multiplication_proximity_large}. Assuming $\delta_0 \leq 1$ it is thus sufficient to select the parameter $\delta$ to satisfy the constraint
	\begin{equation}\label{eq:activations_assumptions_proof5}
		\delta \leq \frac{\delta_0}{M d(d+1)}.
	\end{equation}
\end{proof}

%\newpage

\section{Computational examples}
\label{sec:computations}

\subsection{Setup}
\label{sec:computations_setup}

The fixed point $\feps^*$ of the Lippmann-Schwinger FNO \eqref{eq:FNO_LS_FLSO} may be interpreted as a compatible field
\begin{equation}\label{eq:computations_setup_compatibleStrain}
	\feps^*_\theta = \bar{\feps} + \nabla^s \fu^*_\theta,
\end{equation}
which solves the balance of linear momentum
\begin{equation}\label{eq:computations_setup_BLM}
	\div \fsigma_\theta\left( \bar{\feps} + \nabla^s \fu^*_\theta; \C \right) = \bm{0}
\end{equation}
with the neural stress operator
\begin{equation}\label{eq:computations_setup_stressOp}
	\fsigma_\theta: \Sym{d} \times \mathcal{M}_Y(\alpha_-,\alpha_+) \rightarrow \Sym{d}, \quad \fsigma_\theta(\feps;\C) = \alpha_0 \Multiplication_\theta \left( \frac{1}{\alpha_0}(\C - \C^0), \feps \right) + \C^0:\feps,
\end{equation}
involving the reference constant $\alpha_0$ which we select as in eq.~\eqref{eq:homogenization_basic_refMat_basic}. In case the parameter $\delta$ is selected to be small enough, the stress operator \eqref{eq:computations_setup_stressOp} is strongly monotone and Lipschitz continuous in the strain variable, as can be seen from the elementary estimates
\begin{align}
	\left\| \fsigma_\theta(\feps_1;\C) - \fsigma_\theta(\feps_2;\C) \right\| &\leq \alpha_0(1 + \|\mathds{T}_0\| + \delta) \|\feps_1 - \feps_2\|, \label{eq:computations_setup_stressOp_Lipschitz}\\
	\left[ \fsigma_\theta(\feps_1;\C) - \fsigma_\theta(\feps_2;\C) \right]:(\feps_1 - \feps_2) &\geq \alpha_0(1 - \|\mathds{T}_0\| - \delta) \|\feps_1 - \feps_2\|^2, \label{eq:computations_setup_stressOp_monotone}
\end{align}
valid for all strains $\feps_1,\feps_2 \in \Sym{d}$, implied by the condition \eqref{eq:ass_mtheta_contraction} and where we use the abbreviation
\begin{equation}\label{eq:computations_setup_stressOp_T0}
	\mathds{T}_0 = \frac{1}{\alpha_0}(\C - \C^0). 
\end{equation}
In particular, the balance of linear momentum \eqref{eq:computations_setup_BLM} admits a unique solution. Also, for any discretization on a regular grid compatible to the FFT framework, like the Moulinec-Suquet discretization~\cite{MoulinecSuquet1994,MoulinecSuquet1998,MSConvergence2023,Vidyasagar2017}, the finite-difference discretizations~\cite{Willot,StaggeredGrid} or the finite-element discretizations~\cite{FFTFEM,FANS,Ladecky2022FEM}, the corresponding form of the balance equation \eqref{eq:computations_setup_BLM} admits a unique solution. From a mechanical point of view, the constitutive law \eqref{eq:computations_setup_stressOp} is nonlinear elastic, but not hyperelastic, and approximates a linear elastic material model. We study the effects of the approximation \eqref{eq:computations_setup_BLM} to the proper model \eqref{eq:homogenization_setup_equilibrium} in the section at hand.\\
For this purpose, we integrated the constitutive law \eqref{eq:computations_setup_stressOp} into an in-house FFT-based computational micromechanics code~\cite{BB2019,QuasiNewton2019}. This allows us to benchmark the proposed FNO model with the FFT-based framework. In both methods, the volume element is discretized by the rotated staggered grid~\cite{Willot,StaggeredGrid} and solved by the basic scheme~\cite{MoulinecSuquet1994,MoulinecSuquet1998} with a termination error of $10^{-5}$, unless stated otherwise. 
%\todo{\ldots}
%\todo{
%\begin{itemize}
%	\item influence of depth $m$
%	\item effect of computed effective stresses (simple example)
%	\item effect on iteration count (simple example)
%	\item influence of resolution?
%	\item local fields (?)
%	\item fancy example (?)
%\end{itemize}
%}

\subsection{Influence of ReLU network depth}

\begin{figure}
	\centering
	\begin{subfigure}{.24\textwidth}
		\centering
		\includegraphics[width=\textwidth]{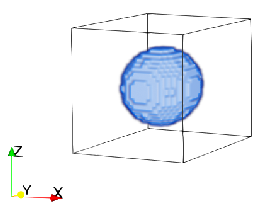}
		\caption{$32 \times 32 \times 32$}
		\label{fig:balls_32}
	\end{subfigure}
	\begin{subfigure}{.24\textwidth}
		\centering
		\includegraphics[width=\textwidth]{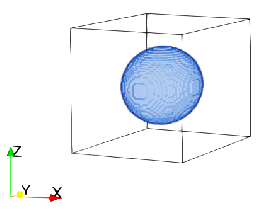}
		\caption{$64 \times 64 \times 64$}
		\label{fig:balls_64}       
	\end{subfigure}
	\begin{subfigure}{.24\textwidth}
		\centering
		\includegraphics[width=\textwidth]{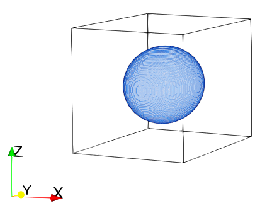}
		\caption{$128 \times 128 \times 128$}
		\label{fig:balls_128}      
	\end{subfigure}
	\begin{subfigure}{.24\textwidth}
		\centering
		\includegraphics[width=\textwidth]{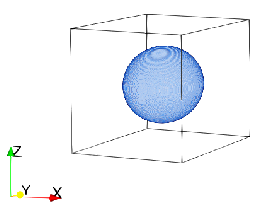}
		\caption{$256 \times 256 \times 256$}
		\label{fig:balls_256}
	\end{subfigure}
	\caption{Volume elements of edge length $32 \mu m$ with a spherical inclusion of radius $10\mu m$.}
	\label{fig:ball}
\end{figure}	

In this section, we demonstrate the influence of the ReLU network depth, denoted by the letter $\mathfrak{m}$, of the double-contraction operator \eqref{eq:FNO_LS_doubleContraction} for a simple microstructure. Specifically, we consider a single spherical inclusion of radius 10 $\mu m$ embedded in a matrix material, modeled as a cubic cell of edge length 32 $\mu m$. The volume element is discretized by $32^3$, $64^3$, $128^3$ and $256^3$ voxels, corresponding to voxel lengths of $1\mu m$, $0.5\mu m$, $0.25\mu m$, $0.125\mu m$, respectively, as shown in Fig.~\ref{fig:ball}. The matrix material is assumed to have fixed Young's modulus and Poisson's ratio as given in Tab. \ref{tab:materialParameters_ball} together with the fixed Poisson's ratio of the inclusion. We assume the material contrast $\kappa$, defined by the ratio between the Young's modulus of the inclusion and the matrix, ranges from 12 to 96. All numerical examples in this section were computed on desktop computer with 12-core Intel i7-1360P CPU.
\begin{table}[H]
	\centering
	\caption{Material parameters of the single spherical inclusion. The Young's modulus of the ball is varied by the material contrast $\kappa=12,24,48,96$.}
	\begin{tabular}{l l l}
		\hline
		Matrix & $E=3.0$ GPa, & $\nu=0.3$\\
		Inclusion(s) & \review{$\kappa E$},  & $\nu=0.22$\\		
		\hline
	\end{tabular}
	\label{tab:materialParameters_ball}
\end{table}

\subsubsection{On the effective stiffness}

This example investigates the performance of the FNO model in computing the effective stiffness matrix of the volume element shown in Fig.~\ref{fig:balls_32}, which resolves the volume element with $32^3$ voxels. To determine the effective stiffness matrix, we prescribe six independent macroscopic strains  $\bar{\feps}$ of magnitude $0.1\%$ unless stated otherwise, and solve eq.\,\eqref{eq:computations_setup_BLM} with the neural stress operator $\fsigma_\theta$, see eq.~\eqref{eq:computations_setup_stressOp}, defined via the double-contraction operator $\Multiplication_\theta$ given in eq.~\eqref{eq:activations_double_contraction_construction2}. As described in Section \ref{sec:activations}, we construct the double-contraction operator by the ReLU network. The expressivity of the ReLU network is characterized by its depth $\mathfrak{m}$, which we consider in this paper as $\mathfrak{m}=7$, $\mathfrak{m}=9$ and $\mathfrak{m}=11$, denoted henceforth as FNO7, FNO9 and FNO11, respectively. In addition, we also determine the effective stiffness of the proper model \eqref{eq:homogenization_setup_equilibrium} by the FFT-based homogenization method and use the obtained results as the reference to quantify the \textit{approximation error} \eqref{eq:FNO_result_error_second} of the FNO models.
%We retain the computational setup similar to the previous example, namely considering a volume element, consisting of a single spherical inclusion with varied Young's modulus, discretized by 32 voxels in each direction. 
%We perform the computation of effective stiffness with three FNO depths and compare the results with solutions from FFT method.
\begin{table}[h!]
	\caption{Effective elastic moduli and the relative approximation errors corresponding to different FNO models and varying material contrasts.}
	\label{tab:effective_stiffness}
	\centering
	\begin{tabular}{cc|r|rrr|rrr}
		\cline{4-9}
		\multicolumn{1}{c}{}  & \multicolumn{1}{c}{} & \multicolumn{1}{r}{}  & \multicolumn{3}{c|}{FNO prediction}  &\multicolumn{3}{c}{Approximation error ($\%$)} \\ \cline{4-9}
		\hline
		& Contrast &FFT & FNO7 & FNO9 & FNO11   & FNO7 & FNO9 & FNO11 \\
		\hline \hline 
		$C_{11}$ (GPa)& 12 & 5.0083 & 4.5010 & 4.9903 & 5.0046 & 10.1292 & 0.3594 & 0.0739\\
		& 24 & 5.1309 & 4.3648 & 4.9710 & 5.1254 & 14.9311 &  3.1164 & 0.1072 \\
		& 48 & 5.2002 & 4.9845 & 5.1100 & 5.1824 & 4.1479 & 1.7345 & 0.3422 \\
		& 96 & 5.2376 & 5.1286 & 5.1398 & 5.2191 & 2.0811 & 1.8673 &  0.3532  \\	
		\hline \hline 		 			 		
		$C_{12}$ (GPa) & 12 &  1.9884 & 1.7670 & 1.9365 & 1.9850 & 11.1346 & 2.6101 & 0.1710 \\
		& 24 & 2.0208 & 1.1810 & 1.8410 & 2.0159 & 41.5578 & 8.8975 &  0.2425 \\
		& 48 & 2.0378 & 0.0622 & 1.7679 &2.0202 &96.9477 & 13.2447 & 0.8637\\
		& 96 & 2.0465 & 0.0409 & 1.3164 & 2.0254 & 98.0015 & 35.6755& 1.0310\\	
		\hline \hline 
		$C_{44}$ (GPa)  & 12 & 2.8770 & 2.6125& 2.7677& 2.8765& 9.1936 & 3.7991 & 0.0174  \\
		& 24 & 2.9338 & 2.6198 &2.9264 &2.9293& 10.7028 & 0.2522& 0.1534   \\
		& 48 & 2.9654 & 2.6365 &2.6223& 2.9501 &11.0913 & 11.5701& 0.5160   \\
		& 96 &2.9822 &0.0002& 2.6378& 2.9604& 99.9933 & 11.5485& 0.7310\\						
		\hline  	
	\end{tabular}
\end{table}

The results of effective elastic coefficients $C_{11}, C_{12}$ and $C_{44}$  and their corresponding approximation errors are reported in Tab. \ref{tab:effective_stiffness} and depicted in Fig.~\ref{fig:singlesphere_stiffness}.
We observe the following. The depth $\mathfrak{m}$ of the ReLU network is decidedly important in approximating the double contraction operation and hence the microscopic solution of eq.\,\eqref{eq:computations_setup_BLM}. For the depth $\mathfrak{m}=11$, the FNO11 model achieves at most $1\%$ approximation error as compared to the FFT-based homogenization method. The error magnitudes are also observed to be monotonically increasing with higher material contrasts $\kappa$. In the case $\mathfrak{m}=9$, the constructed FNO9 model may still yield good predictions of effective elastic coefficients with less than $5\%$ error for material contrast $\kappa=12$. For higher contrast values, for instance $\kappa=96$, the FNO9 model's approximation leads to an error of $35.7\%$ for the effective coefficient $C_{12}$. For shallower depth $\mathfrak{m}=7$, the FNO7 model essentially fails to predict the effective elastic tensor. Even for a microstructure at low contrast $\kappa=12$, this model produces an approximation error of around $10\%$. Meanwhile, at the contrast levels $\kappa=48$ or $\kappa=96$, the FNO7 model only resolves the normal coefficient $C_{11}$ with the approximation errors of $4.1\%$ and $2.1\%$, respectively. On the other hand, it is unable to approximate the elastic coefficients $C_{12}$ and $C_{44}$, yielding values close to zeros and approximation errors close to $100\%$.
\begin{figure}[h!]
	\centering
	\begin{subfigure}{.3\textwidth}
		\centering
		\includegraphics[width=\textwidth]{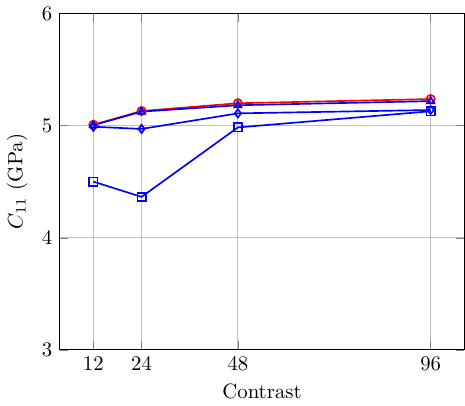}
		\caption{$C_{11}$}
		\label{fig:singlesphere_c11}
	\end{subfigure}
	\hfill
	\begin{subfigure}{.3\textwidth}
		\centering
		\includegraphics[width=\textwidth]{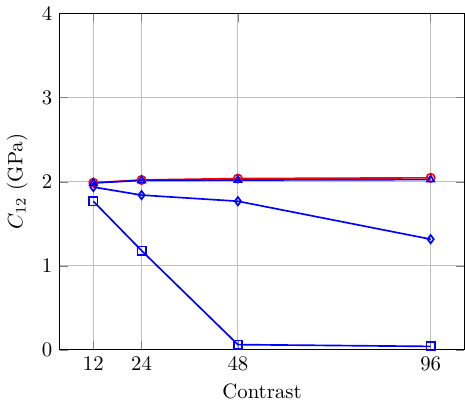}
		\caption{$C_{12}$}        
		\label{fig:singlesphere_c12}        
	\end{subfigure}
	\hfill
	\begin{subfigure}{.3\textwidth}
		\centering
		\includegraphics[width=\textwidth]{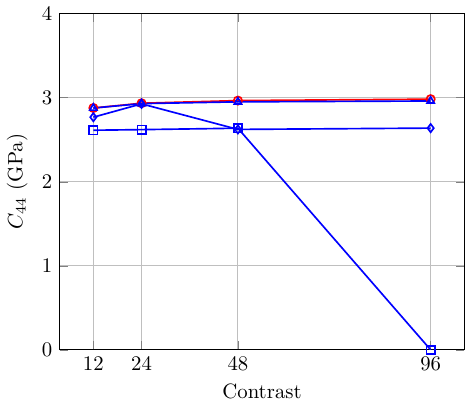}
		\caption{$C_{44}$}
		\label{fig:singlesphere_c44}        
	\end{subfigure}
	
	\begin{subfigure}{.3\textwidth}
		\centering
		\includegraphics[width=1.0\linewidth]{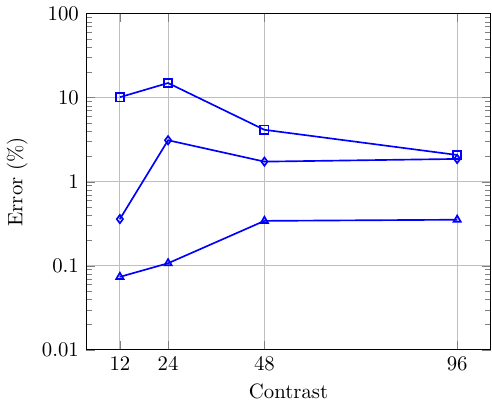}
		\caption{$C_{11}$}
		\label{fig:singlesphere_c11_error}
	\end{subfigure}
	\hfill
	\begin{subfigure}{.3\textwidth}
		\centering
		\includegraphics[width=1.0\linewidth]{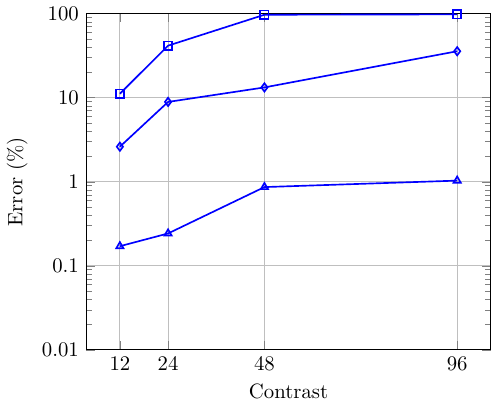}
		\caption{$C_{12}$}        
		\label{fig:singlesphere_c12_error}        
	\end{subfigure}
	\hfill
	\begin{subfigure}{.3\textwidth}
		\centering
		\includegraphics[width=1.0\linewidth]{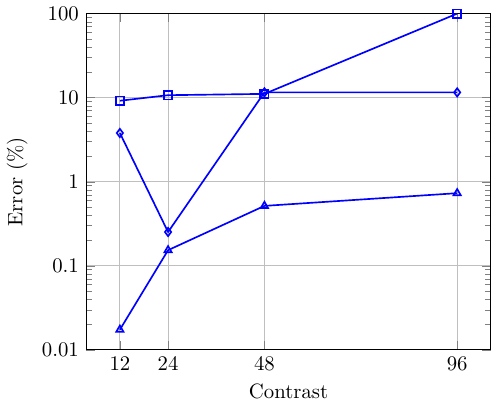}
		\caption{$C_{44}$}
		\label{fig:singlesphere_c44_error}        
	\end{subfigure}
	
	\begin{center}
		\begin{subfigure}{\textwidth}
			\centering
			\includegraphics[height=.022\textheight]{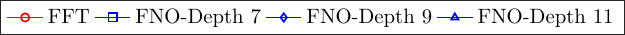}
		\end{subfigure}
	\end{center}
	\caption{Performance of FNO models for varying material contrasts. (a-b-c) Effective elastic coefficients. (d-e-f) Relative approximation errors.}
	\label{fig:singlesphere_stiffness}
\end{figure}

\subsubsection{On the number of iterations}

We investigate the influence of the ReLU-network depth $\mathfrak{m}$ on the number of Fourier neural layers $K$ or, equivalently, the number of iterations of the basic scheme. We re-use the computational results of the previous example by extracting the number of iterations of the first load case, i.e., uniaxial extension in the $x-$direction at $0.1\%$ strain, as well as the computational time of the application of the double-contraction operator $\Multiplication_\theta$. The obtained results from both the FFT-based method and different FNO models are summarized in Tab. \ref{tab:effective_iterations} and depicted in Fig.~\ref{fig:singlesphere_iterations_materialTime}. We make the following observations. The number of required iterations in the FNO models is doubled when doubling the material contrast, just as the FFT-based method, except for the FNO7 model. As observed in the previous example, the FNO7 model not only predicts less accurate solutions but also consistently requires more iterations than the FFT-based method. Especially, in case of high contrast $\kappa=96$, this model needs almost 21 times the number of iterations and 50 times the runtime than the FFT-based method. The deficiency of the FNO7 model in this scenario is attributed by its large approximation error that makes the constructed FNO no longer a contraction, or - at least - converges more slowly. Using deeper ReLU networks improve the performance of the FNO model significantly. Indeed, in the case $\mathfrak{m}=9$, the FNO9 model requires the idential number of iterations as the FFT-based method up to material contrast of 24, with additional $10\%$ to $15\%$ more iterations for the higher contrasts $\kappa=48$ and $\kappa=96$, respectively. Meanwhile, the FNO11 model performs practically as good as the FFT method in terms of number of iterations. However, both models FNO9 and FNO11 consume approximately the same computational time, which is three times higher than for the FFT-based method for material contrasts $\kappa=48$ and $\kappa=96$.
\begin{table}
	\caption{Counted numbers of iterations and runtimes of the FNO models for varying material contrasts.}
	\label{tab:effective_iterations}
	\centering
	\begin{tabular}{crrrr|rrrr}
		\cline{2-9}
		& \multicolumn{4}{c|}{Iterations}  & \multicolumn{4}{c}{Runtime (seconds)} \\ \cline{2-9}
		\hline
		Contrast &FFT & FNO7 & FNO9 & FNO11 & FFT  & FNO7 & FNO9 & FNO11\\
		\hline \hline 
		12 & 92 & 104 & 92 & 92 & 7.3 & 41.3  & 26.9  & 40.3\\
		24 & 181 & 207 & 183 & 181 & 8.4 & 68.2 & 51.6	& 52.1	 	\\
		48 & 360 & 426 & 393 & 360 & 32.3 & 104.0 & 102.3 & 96.7			\\
		96 & 716 & 14675 & 825 & 719  & 64.1 & 3234.8 & 208.3	&206.3	\\	 	
		\hline  	
	\end{tabular}
\end{table}
\begin{figure}[h!]
	\centering
	\begin{subfigure}{.4\textwidth}
		\centering
		\includegraphics[width=\textwidth]{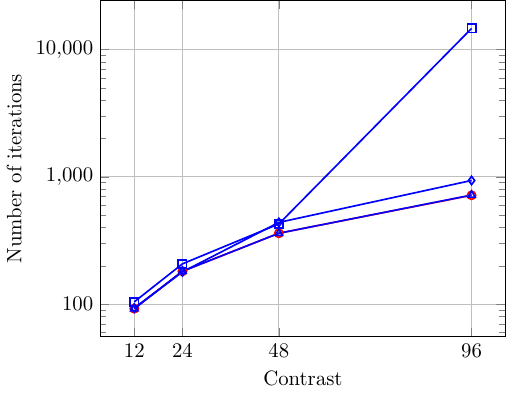}
		\caption{Iterations}
		\label{fig:singlesphere_iterations_load1}
	\end{subfigure}
	\hfill
	\begin{subfigure}{.4\textwidth}
		\centering
		\includegraphics[width=\textwidth]{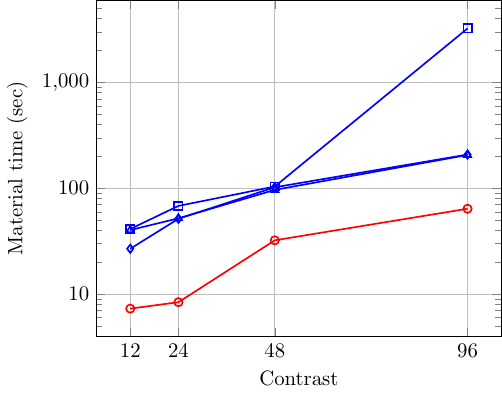}
		\caption{Material time}
		\label{fig:singlesphere_materialTime_loadcase1}
	\end{subfigure}	
	%	\begin{subfigure}{.3\textwidth}
		%		\centering
		%		\includegraphics[width=\textwidth]{results/singleSphere/effective/iterations_loadcase2_plot.pdf}
		%		\caption{$m=9$}        
		%		\label{fig:singlesphere_depth_9}        
		%	\end{subfigure}
	%	\hfill
	%	\begin{subfigure}{.3\textwidth}
		%		\centering
		%		\includegraphics[width=\textwidth]{results/singleSphere/effective/iterations_loadcase4_plot.pdf}
		%		\caption{$m=11$}
		%		\label{fig:singlesphere_depth_11}       
		%		 \end{subfigure}
	%	\hfill
	%	\begin{subfigure}{.3\textwidth}
		%		\centering
		%		\includegraphics[width=\textwidth]{results/singleSphere/effective/materialTime_loadcase2_plot.pdf}
		%		\caption{$C_{12}$}        
		%		\label{fig:singlesphere_materialTime_loadcase2}        
		%	\end{subfigure}
	%	\hfill
	%	\begin{subfigure}{.3\textwidth}
		%		\centering
		%		\includegraphics[width=\textwidth]{results/singleSphere/effective/materialTime_loadcase4_plot.pdf}
		%		\caption{$C_{44}$}
		%		\label{fig:singlesphere_materialTime_loadcase4}        
		%	\end{subfigure}
	\begin{center}
		\begin{subfigure}{\textwidth}
			\centering
			\includegraphics[height=.022\textheight]{results/singleSphere/effective/legend_stiffness.pdf}
		\end{subfigure}
	\end{center}
	\caption{Iteration counts and runtimes of the FNO models for varying material contrasts.}
	\label{fig:singlesphere_iterations_materialTime}
\end{figure}

\subsubsection{On the resolution}

We turn our attention to the influence of the resolutions of the discretized microstructure, described in Fig.~\ref{fig:ball}, on the performance of the neural stress operator $\fsigma_\theta$ or, equivalently, the double-contraction operator $\Multiplication_\theta$. For a fixed material contrast $\kappa=24$, we study the solution of eq.\,\eqref{eq:computations_setup_BLM} for a prescribed uniaxial macroscopic strain $\bar{\feps}$ in $x-$direction and subsequently compute the effective stiffness coefficient with its respective error as well as monitor the number of iterations and runtime. Similar to the previous example, we take the results from the FFT-based method as the reference. We compute the effective elastic coefficient $C_{11}$ and its approximation errors for varying resolutions and report them in Tab. \ref{tab:resolution_effective} and in Fig.~\ref{fig:singlesphere_resolution_effective}. We observe that the volume element resolution has little influence on the FNO models as the effective elastic coefficient obtained from different FNO models does not change significantly from the coarse  to fine discretizations. Nevertheless, the accuracy of these models are similar to previous observations, i.e., FNO model with deeper ReLU network yields better accuracy. Specifically, regardless of the resolution, the approximation error decreases from about $15\%$ to $3\%$ and $0.1\%$ as the ReLU network depth $\mathfrak{m}$ increases from 7 to 9 and 11, respectively.
\begin{table}[h!]
	\caption{Effective coefficient $C_{11}$ and its relative approximation errors of FNO models for different resolutions.}
	\label{tab:resolution_effective}
	\centering
	\begin{tabular}{c|r|rrr|rrr}
		\cline{3-8}
		\multicolumn{1}{c}{} & \multicolumn{1}{r}{}  & \multicolumn{3}{c|}{FNO prediction}  &\multicolumn{3}{c}{Approximation error ($\%$)} \\ \cline{3-8}
		\hline
		Resolution &FFT &FNO7 & FNO9 & FNO11   & FNO7 & FNO9 & FNO11\\
		\hline \hline 
		32 & 5.1309 & 4.3648 & 4.9710 & 5.1254 & 14.9303 &  3.1155 & 0.1060 \\		
		64 & 5.1036 &  4.3382 & 4.9441 & 5.0982 & 14.9977 & 3.1253 & 0.1052	 \\
		128 & 5.0912 & 4.3259 & 4.9318 & 5.0858 & 15.0303& 3.1308& 0.1050	 \\
		256 & 5.0856 & 4.3206 &4.92631& 5.0803& 15.0443& 3.1331&0.1048	 \\			
		\hline  	
	\end{tabular}
\end{table}
\begin{figure}
	\centering
	\begin{subfigure}{.4\textwidth}
		\centering
		\includegraphics[width=\linewidth]{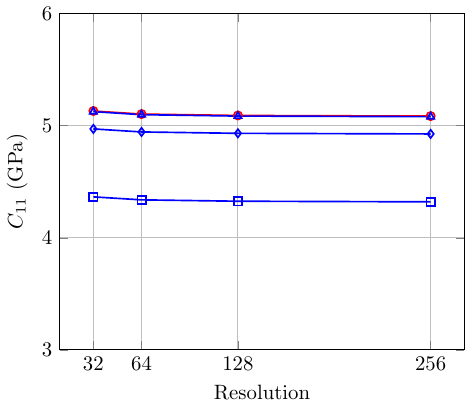}
		\caption{$C_{11}$}
		\label{fig:plot_c11_resolution}
	\end{subfigure}
	\hfill
	\begin{subfigure}{.4\textwidth}
		\centering
		\includegraphics[width=1.04\linewidth]{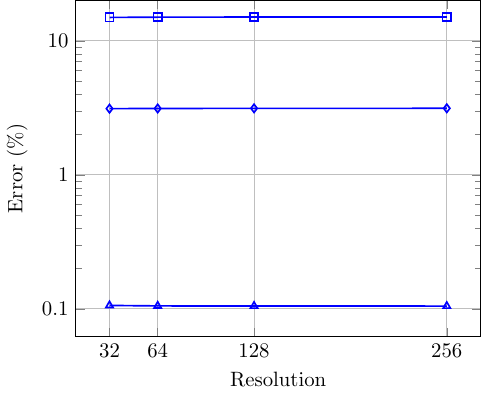}
		\caption{$C_{11}$ approximation error}
		\label{fig:plot_c11_error_resolution}
	\end{subfigure}
	\begin{center}
		\begin{subfigure}{\textwidth}
			\centering
			\includegraphics[height=.022\textheight]{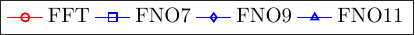}
		\end{subfigure}
	\end{center}
	\caption{Performance of FNO models for varying resolutions.}
	\label{fig:singlesphere_resolution_effective}
\end{figure}

We also evaluate the number of iterations and the computation time of the double-contraction operations for different FNO models and varying resolutions with the summarized results in Tab. \ref{tab:resolution_iterations}, shown in Fig.~\ref{fig:singlesphere_resolution_iterations}. The number of iterations required by different FNO models is not significantly sensitive to the volume element resolution. As observed in the previous example, the FNO7 model consistently requires more iterations than the other models. At the resolution of $32^3$ voxels, this model needs 207 iterations to reach the convergence, whereas the FNO9, the FNO11 and the FFT-based method need approximately 25 less iterations. As the resolution is doubled, the number of required iterations for all considered models approximately remain the same as for the resolution of $32^3$ voxels. This observation, however, does not hold for the runtime. Since the double-contraction operator $\Multiplication_\theta$ is invoked at each voxel, the increasing resolution leads to the increase in runtime. As the resolution is doubled in each direction of the volume element, the number of voxels increases by 8 times, which agrees with the observed increased runtime for the FNO models. The FFT-based method does not submit to this scaling as the double contraction is evaluated exactly. Therefore, although the FNO11 model perform essentially as good as the FFT-based method, its computational time is about 3, 11, 40 and 85 times higher than the FFT-based method for the resolution of $32^3$, $64^3$, $128^3$ and $256^3$, respectively.
\begin{table}
	\caption{Iteration counts and runtimes of the FNO models for varying resolutions.}
	\label{tab:resolution_iterations}
	\centering
	\begin{tabular}{crrrr|rrrr}
		\cline{2-9}
		& \multicolumn{4}{c|}{Iterations}  & \multicolumn{4}{c}{runtime (seconds)} \\ \cline{2-9}
		\hline
		Resolution &FFT &FNO7 & FNO9 & FNO11 & FFT  & FNO7 & FNO9 & FNO11\\
		\hline \hline 
		32 & 181 & 207 & 183 & 181 & 1.0 & 2.8  & 3.0  & 3.3\\
		64 & 178 & 206 & 179 & 178 & 2.2 & 21.0 & 22.4	& 24.6	 	\\
		128 & 175 & 205 & 177 & 175 & 4.8 & 160.6 & 169.8 & 194.3			\\
		256 & 173 & 205 & 175 & 174  & 22.5 & 1760.6 & 2040.7 & 1916.3	\\	 	
		\hline  	
	\end{tabular}
\end{table}
\begin{figure}[h!]
	\centering
	\begin{subfigure}{.4\textwidth}
		\centering
		\includegraphics[width=1.04\linewidth]{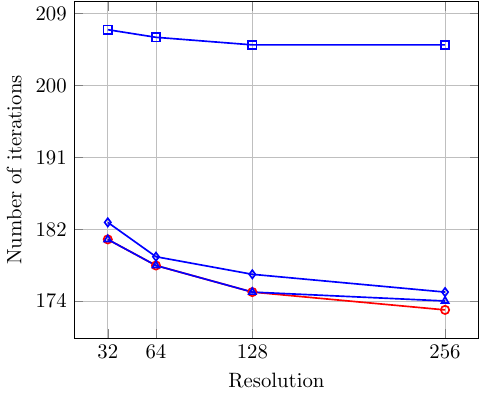}
		\caption{Number of iterations}
		\label{fig:iterations_loadcase1_plot}
	\end{subfigure}	\hfill
	\begin{subfigure}{.4\textwidth}
		\centering
		\includegraphics[width=1.08\linewidth]{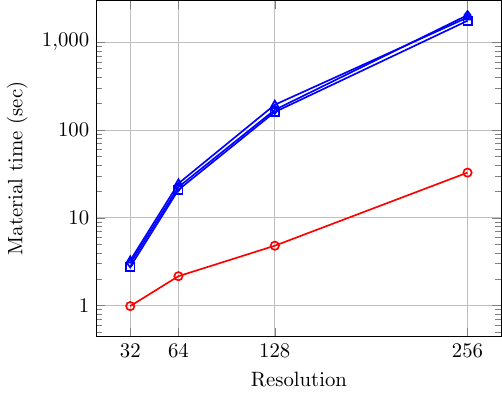}
		\caption{Runtime}        
		\label{fig:materialTime_resolution_plot}        
	\end{subfigure}
	\begin{center}
		\begin{subfigure}{\textwidth}
			\centering
			\includegraphics[height=.022\textheight]{results/singleSphere/resolutions/legend_stiffness.pdf}
		\end{subfigure}
	\end{center}
	\caption{Iteration counts and runtimes of the FNO models for varying resolutions.}
	\label{fig:singlesphere_resolution_iterations}
\end{figure}

\subsubsection{On the applied strain magnitude}

This section examines the performance of the FNO models under various prescribed macroscopic strain magnitudes. We consider the microstructure in Fig.~\ref{fig:balls_32} resolved by $32^3$ voxels with a low material contrast $\kappa=12$. The macroscopic strain component is applied in the $x-$direction with varied magnitudes $0.1\%, 1\%, 50\%$ and $100\%$. Although linear elasticity is considered in this article, the neural approximation \eqref{eq:computations_setup_stressOp} is non-linear. Thus, we study the accuracy of the double-contraction operator $\Multiplication_\theta$ with respect to the cut-off bound value $M$ which we fix as 1.0 in our construction. As usual, we employ FFT-based method as the reference. In case of linear material, the resulting effective elastic coefficient obtained by the FFT-based method is independent of the prescribed strain magnitudes. The resulting effective stiffness coefficient and its corresponding approximation errors are summarized in Tab. \ref{tab:magnitude_effective} and plotted in Fig.~\ref{fig:singlesphere_magnitude_effective}. We make the following observations. At the small strain magnitude of $0.1\%$, as seen from the previous examples, the FNO7 model exhibits the largest approximation error of $10.13\%$ as compared to $0.36\%$ and $0.07\%$ for the FNO9 and FNO11 models, respectively. As the prescribed strain magnitude increases, the approximation errors of all three FNO models decline. The smallest approximation errors are achieved for the prescribed strain magnitude of $50\%$, where the FNO11 model reaches $0.0003\%$ and even the FNO7 model attains close to $0.1\%$ errors. Such improvements of the approximation errors stem from the construction mechanism of the double-contraction operator: Because of the cutoff $M=1.0$, the low strain values are not well-approximated by the ReLU network, see Fig.~\ref{fig:ReLUapproximation}. This observation showcases the possibility to select a more robust FNO architecture for an a priori strain field. Nevertheless, all three models result in about $1\%$ of approximation error at $100.0\%$ applied strain. For a closer look, we present in Fig.~\ref{fig:singlesphere_magnitude_sigmavM} the distribution of the equivalent stress $\sigma^{\texttt{ev}}$ for the prescribed $50\%$ and $100\%$ macroscopic strain, whereas their corresponding absolute error distributions are presented in Fig.~\ref{fig:singlesphere_magnitude_absError_sigmavM}. Similar to the effective elastic coefficient, an excellent agreement between the FNO models and the FFT-based method is obtained in the case of $50\%$ strain, where the stress distributions are almost identical with the maximum absolute error of 1 MPa for the FNO7 model, corresponds to $0.1\%$ relative error. In the case of $100\%$ strain, the stress distributions of the FNO models become more visible from the FFT-based method. All three FNO models exhibit the highest error at the center of the sphere with absolute equivalent stress difference of about 550 MPa. However, it should be noted that such extreme load case of $100\%$ strain is infeasible in practical elastic homogenization problems and we merely employ it to evaluate the behavior of the double-contraction operator. Also, we do not survey the number of iterations and computational time in this example because they are independent of the applied strain magnitude.
%The resulting equivalent stresses from the FNO models and their corresponding absolute error with respect to the FFT-based results are presented in Fig.~\ref{fig:singlesphere_magnitude_sigmavM} and Fig.~\ref{fig:singlesphere_magnitude_absError_sigmavM}, respectively.

\begin{table}[h!]
	\caption{Effective coefficient $C_{11}$ and its relative approximation errors of FNO models for prescribed macroscopic strain magnitudes.}
	\label{tab:magnitude_effective}
	\centering
	\begin{tabular}{c|r|rrr|rrr}
		\cline{3-8}
		\multicolumn{1}{c}{} & \multicolumn{1}{r}{}  & \multicolumn{3}{c|}{FNO prediction}  &\multicolumn{3}{c}{Approximation error ($\%$)} \\ \cline{3-8}
		\hline
		Magnitude ($\%$) &FFT & FNO7 & FNO9 & FNO11   & FNO7 & FNO9 & FNO11\\
		\hline \hline 
		0.1 & 5.0083 & 4.5010 & 4.9903 & 5.005 & 10.1295  & 0.3602 & 0.0749	 \\		
		1.0 & 5.0083 & 4.8880 & 4.9989 &5.0076& 2.4019& 0.1889& 0.0151			 \\
		50.0 & 5.0083 & 5.0034& 5.0081& 5.0083 & 0.0993 &0.0047& 0.0003			 \\
		100.0 & 5.0083 & 5.0563& 5.0569 &5.0566 &0.9574& 0.9703 &0.9639 	 \\			
		\hline  	
	\end{tabular}
\end{table}

\begin{figure}
	\centering
	\begin{subfigure}{.4\textwidth}
		\centering
		\includegraphics[width=\linewidth]{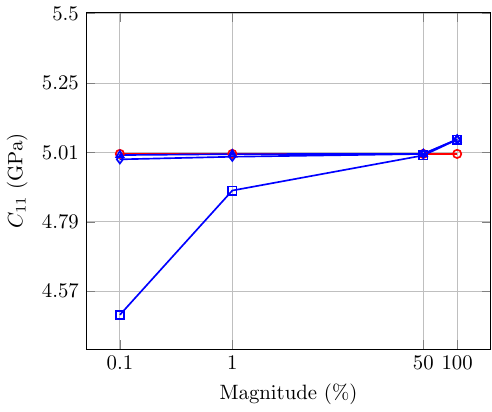}
		\caption{$C_{11}$}
		\label{fig:plot_c11_magnitude}
	\end{subfigure}
	\hfill
	\begin{subfigure}{.4\textwidth}
		\centering
		\includegraphics[width=1.04\linewidth]{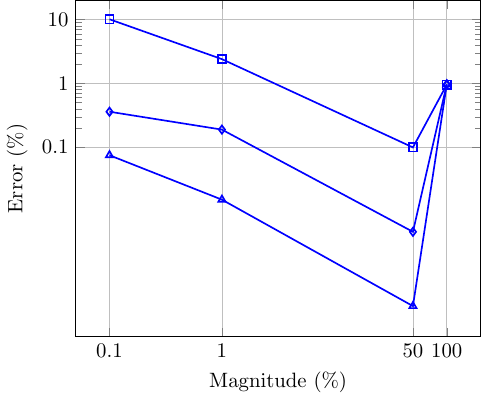}
		\caption{$C_{11}$ approximation error}
		\label{fig:plot_c11_error_magnitude}
	\end{subfigure}
	\begin{center}
		\begin{subfigure}{\textwidth}
			\centering
			\includegraphics[height=.022\textheight]{results/singleSphere/resolutions/legend_stiffness.pdf}
		\end{subfigure}
	\end{center}
	\caption{Performance of FNO models for varying prescribed strain magnitudes.}
	\label{fig:singlesphere_magnitude_effective}
\end{figure}
\begin{figure}
		\begin{subfigure}{.24\textwidth}
		\centering
		\includegraphics[width=\textwidth,trim=2cm 4.35cm 39cm 7cm,clip]{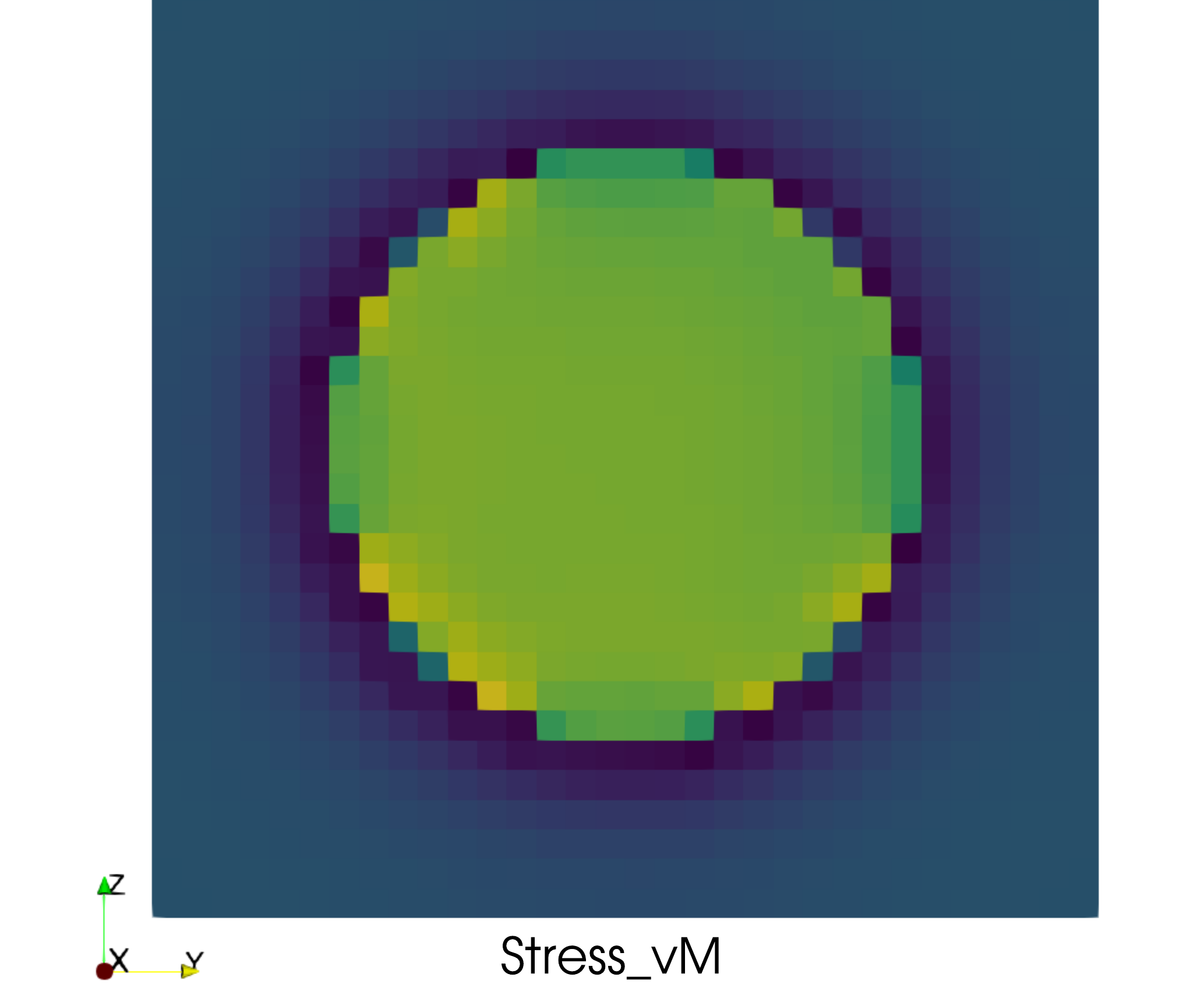}
		\caption{FFT - $\bar{\varepsilon}_{11}=50\%$}
		\label{fig:ball_FFT_stressvM_load50}
	\end{subfigure}
	\hfill
	\begin{subfigure}{.24\textwidth}
		\centering
		\includegraphics[width=\textwidth,trim=2cm 4.35cm 39cm 7cm,clip]{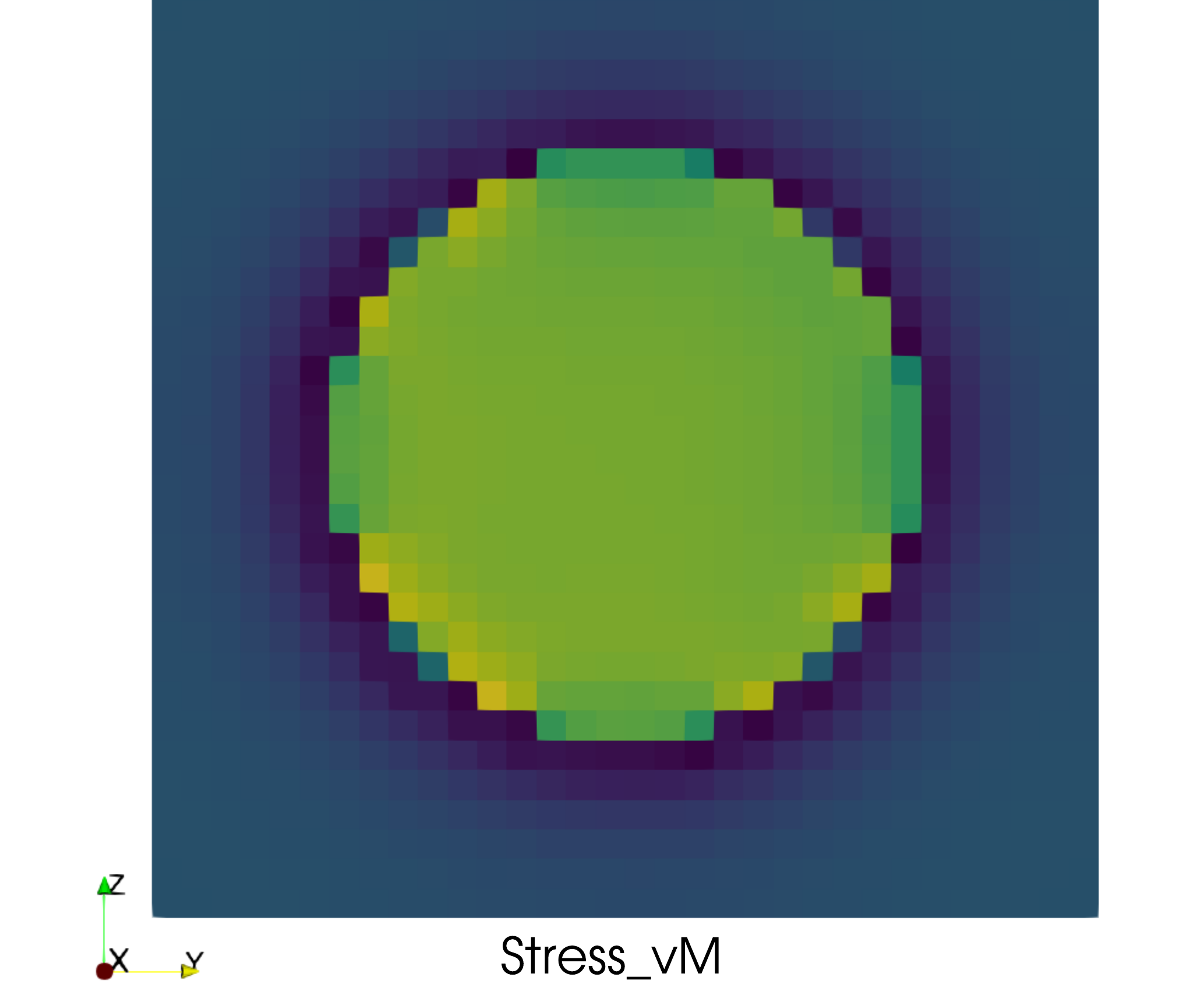}
		\caption{FNO7 - $\bar{\varepsilon}_{11}=50\%$}
		\label{fig:ball_FNO7_stressvM_load50}
	\end{subfigure}
	\hfill
	\begin{subfigure}{.24\textwidth}
		\centering
		\includegraphics[width=\textwidth,trim=2cm 4.35cm 39cm 7cm,clip]{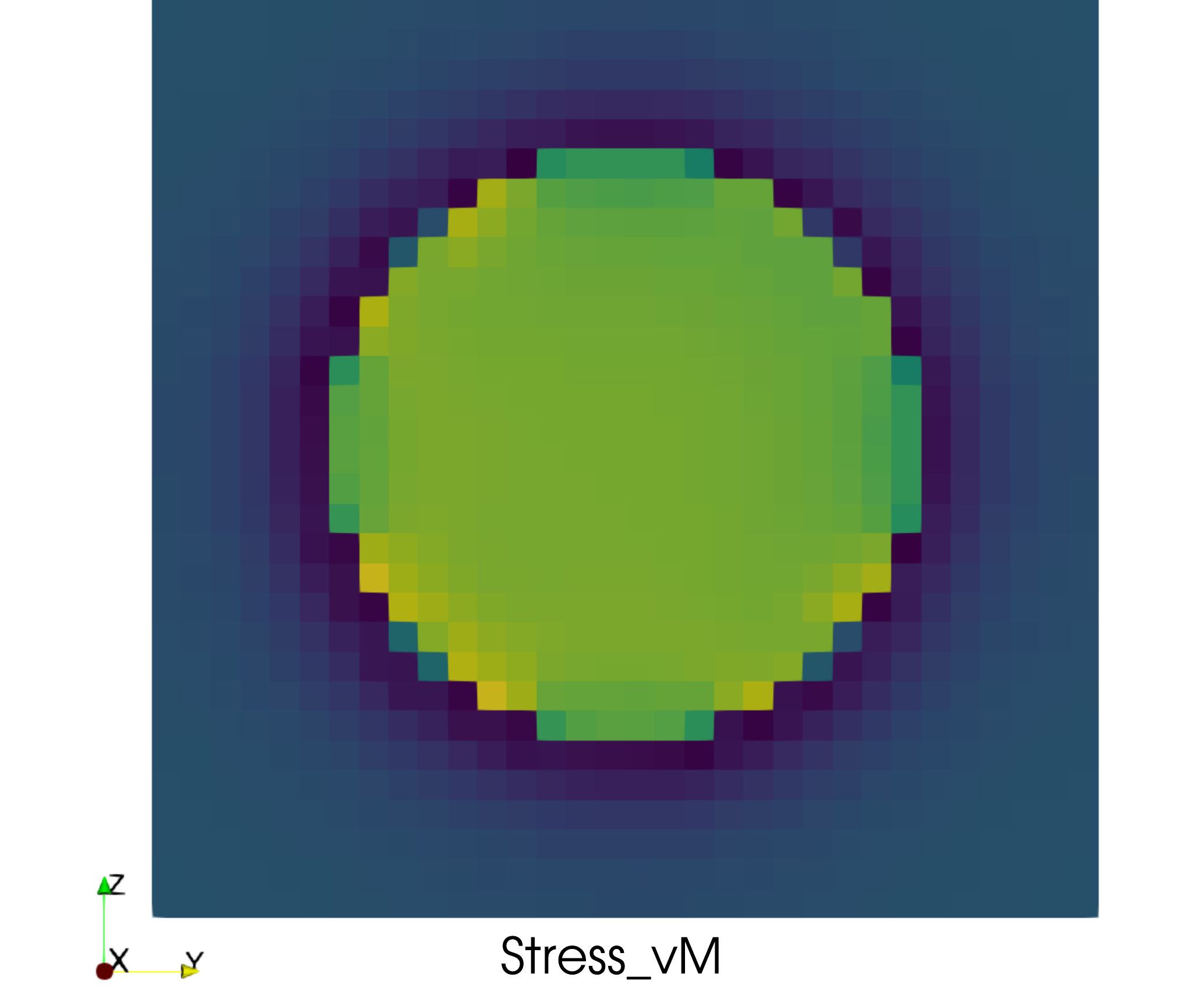}
		\caption{FNO9 - $\bar{\varepsilon}_{11}=50\%$}
		\label{fig:ball_FNO9_stressvM_load50}
	\end{subfigure}
	\hfill
	\begin{subfigure}{.24\textwidth}
		\centering
		\includegraphics[width=\textwidth,trim=2cm 4.35cm 39cm 7cm,clip]{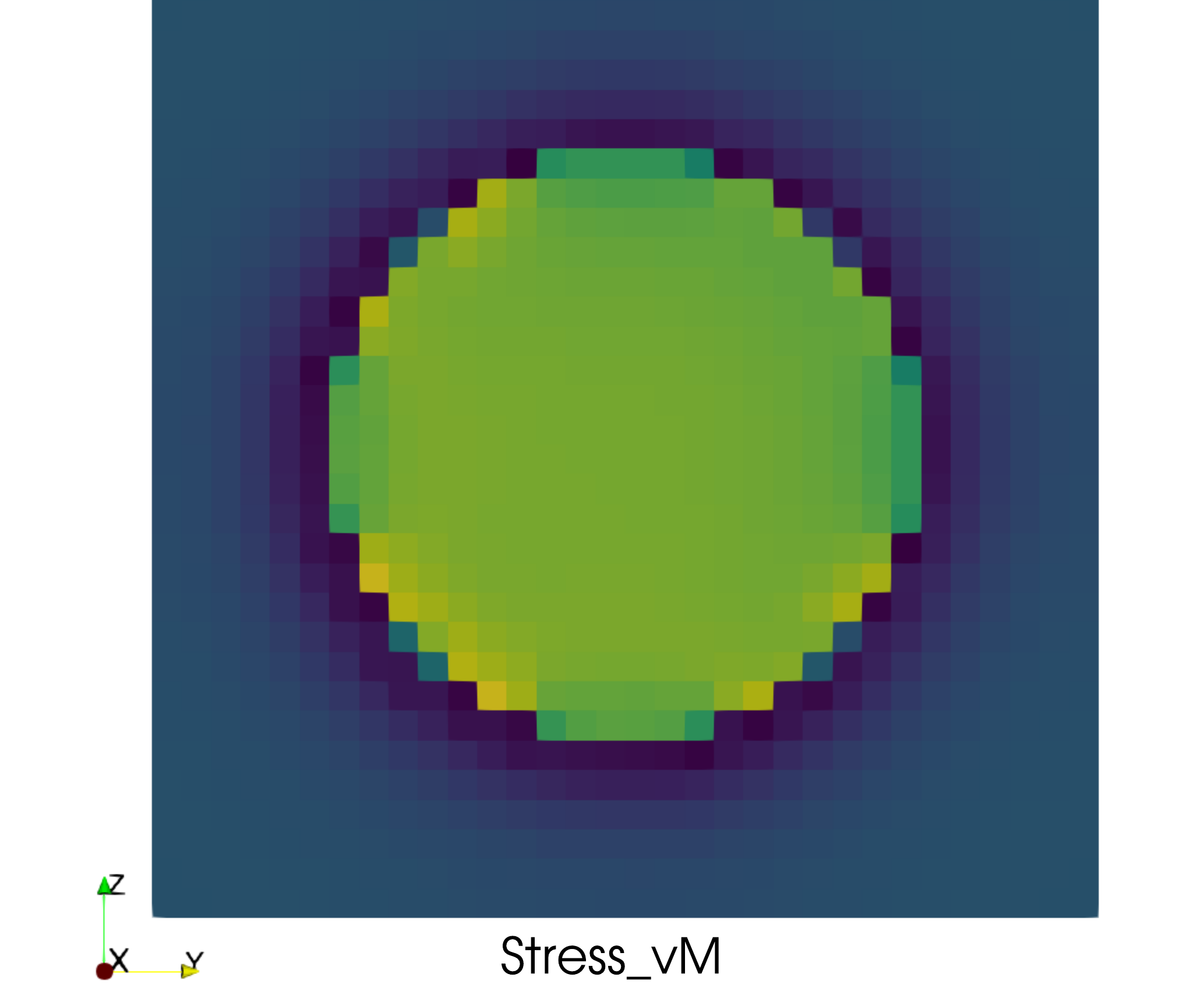}
		\caption{FNO11 - $\bar{\varepsilon}_{11}=50\%$}
		\label{fig:ball_FNO1_stressvM_load50}
	\end{subfigure}
	\vspace{-1em}
	\begin{center}
		\begin{subfigure}{\textwidth}
			\centering
			\includegraphics[height=.033\textheight]{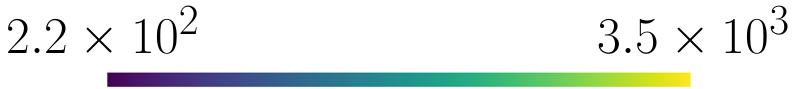}
		\end{subfigure}
	\end{center}
	
	\begin{subfigure}{.24\textwidth}
		\centering
		\includegraphics[width=\textwidth,trim=2cm 4.35cm 39cm 7cm,clip]{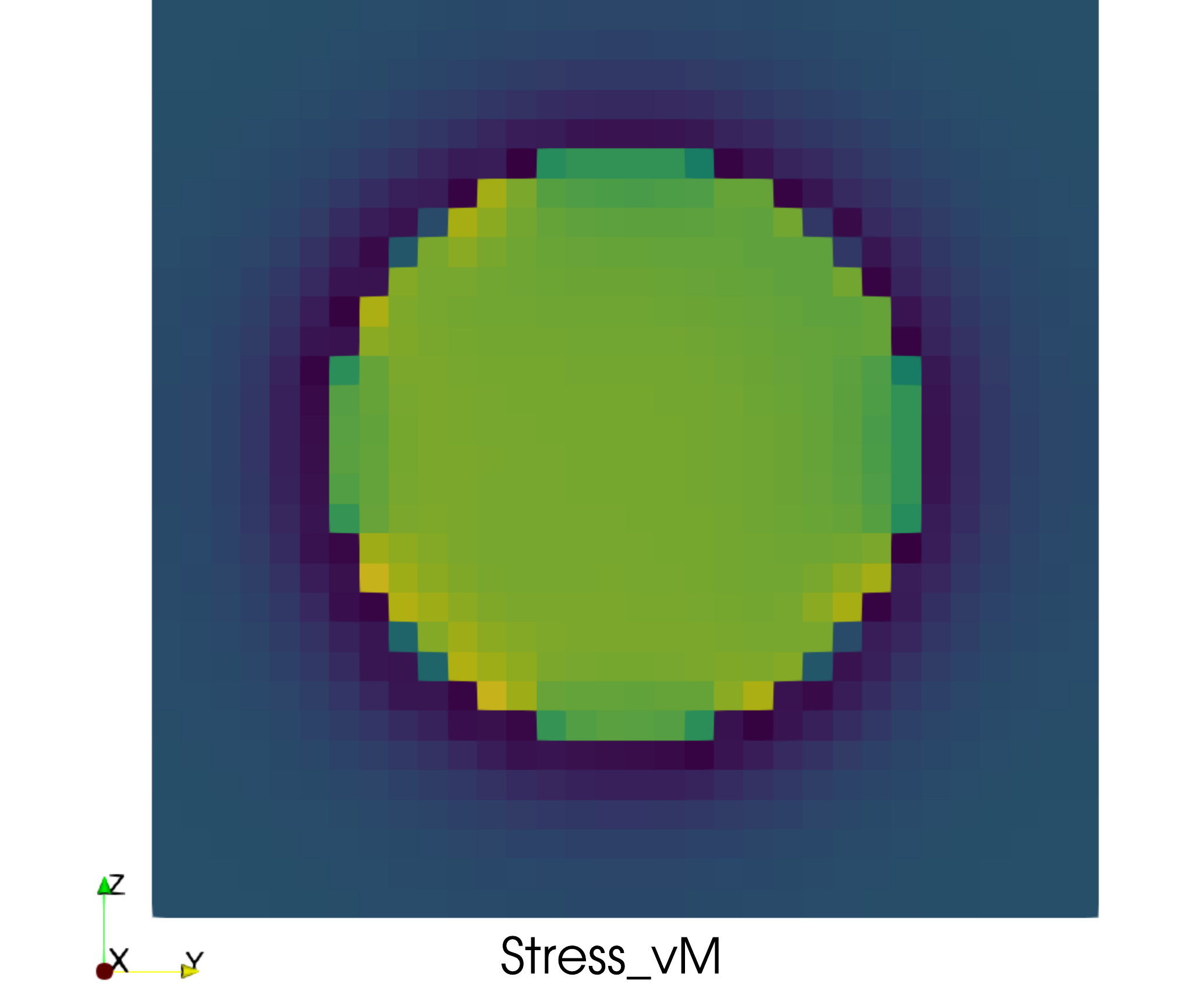}
		\caption{FFT - $\bar{\varepsilon}_{11}=100.0\%$}
		\label{fig:ball_FFT_stressvM_load100}
	\end{subfigure}
	\hfill
	\begin{subfigure}{.24\textwidth}
		\centering
		\includegraphics[width=\textwidth,trim=2cm 4.35cm 39cm 7cm,clip]{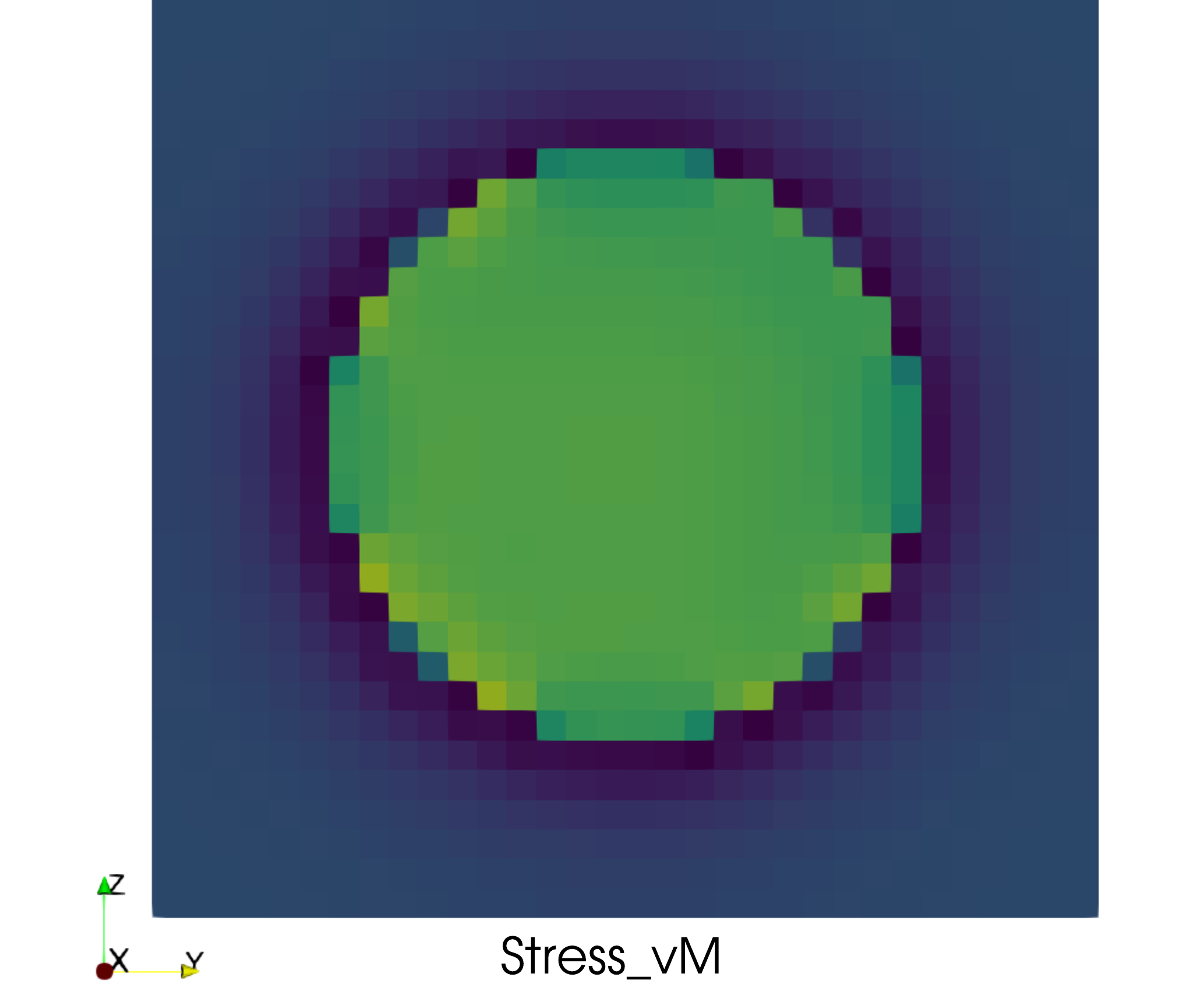}
		\caption{FNO7 - $\bar{\varepsilon}_{11}=100.0\%$}
		\label{fig:ball_FNO7_stressvM_load100}
	\end{subfigure}
	\hfill
	\begin{subfigure}{.24\textwidth}
		\centering
		\includegraphics[width=\textwidth,trim=2cm 4.35cm 39cm 7cm,clip]{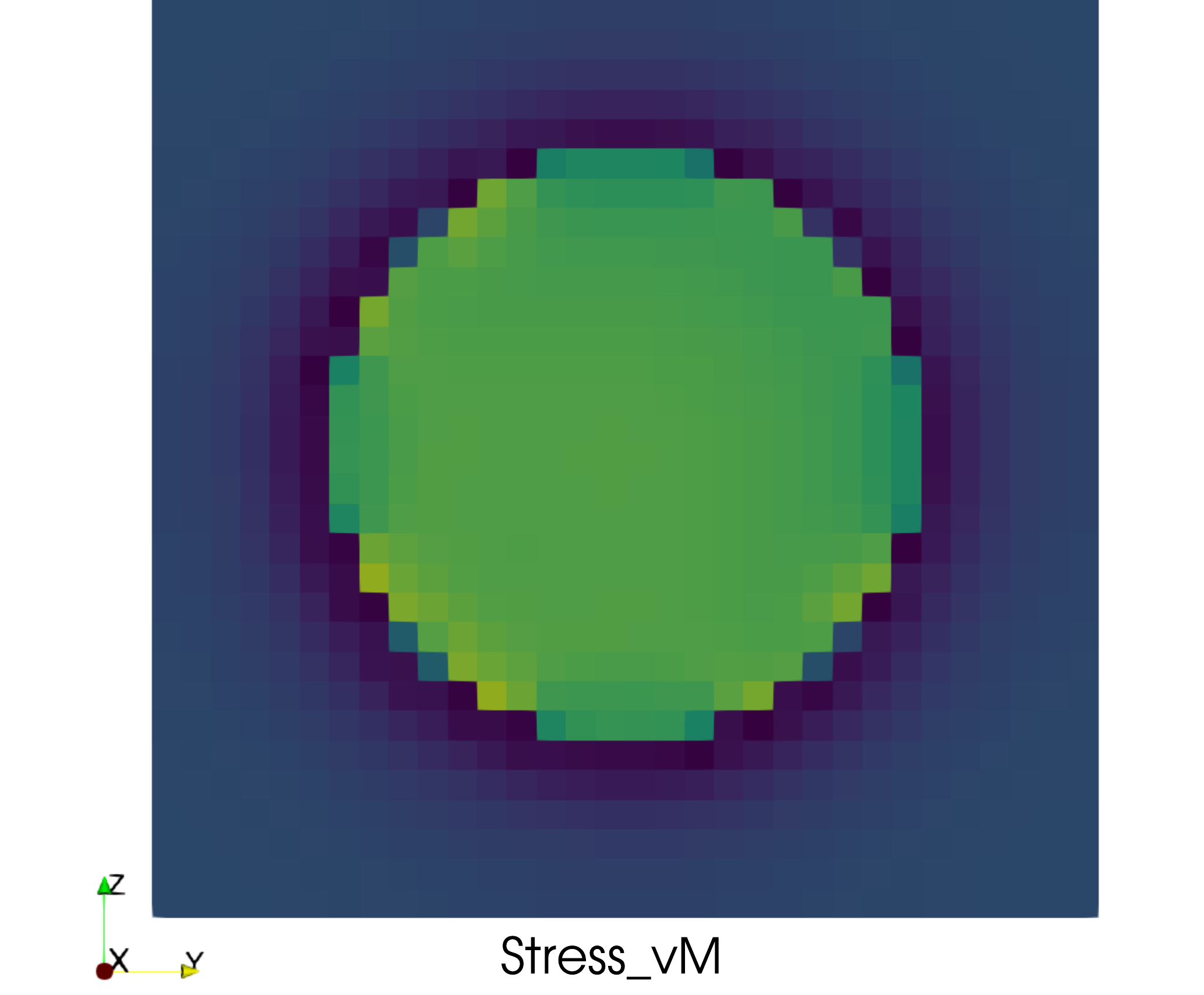}
		\caption{FNO9 - $\bar{\varepsilon}_{11}=100.0\%$}
		\label{fig:ball_FNO9_stressvM_load100}
	\end{subfigure}
	\hfill
	\begin{subfigure}{.24\textwidth}
		\centering
		\includegraphics[width=\textwidth,trim=2cm 4.35cm 39cm 7cm,clip]{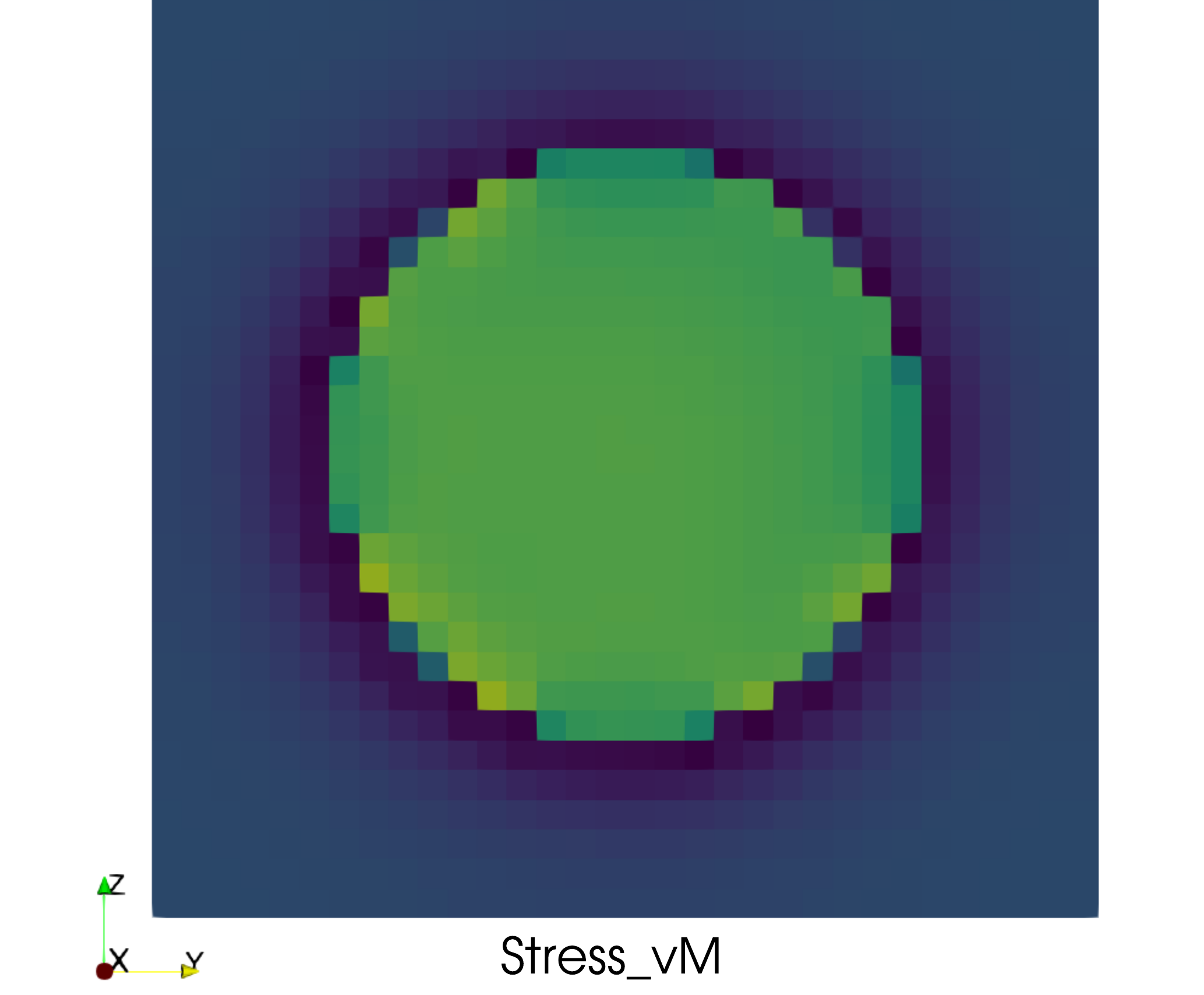}
		\caption{FNO11 - $\bar{\varepsilon}_{11}=100.0\%$}
		\label{fig:ball_FNO11_stressvM_load100}
	\end{subfigure}
	\vspace{-1em}
	\begin{center}
		\begin{subfigure}{\textwidth}
			\centering
			\includegraphics[height=.033\textheight]{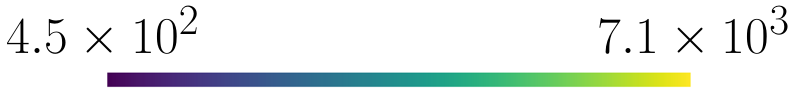}
		\end{subfigure}
	\end{center}
%	\begin{center}
		%		\begin{subfigure}{\textwidth}
				%			\centering
				%			\includegraphics[height=.022\textheight]{results/singleSphere/resolutions/legend_stiffness.pdf}
				%		\end{subfigure}
		%	\end{center}
	\caption{Equivalent stress distributions in the microstructures described in Fig.~\ref{fig:ball} for varying prescribed strain magnitudes.}
	\label{fig:singlesphere_magnitude_sigmavM}
\end{figure}

\begin{figure}
	\centering
\begin{subfigure}{.24\textwidth}
	\centering
	\includegraphics[width=\textwidth,trim=2cm 4.4cm 39cm 7cm,clip]{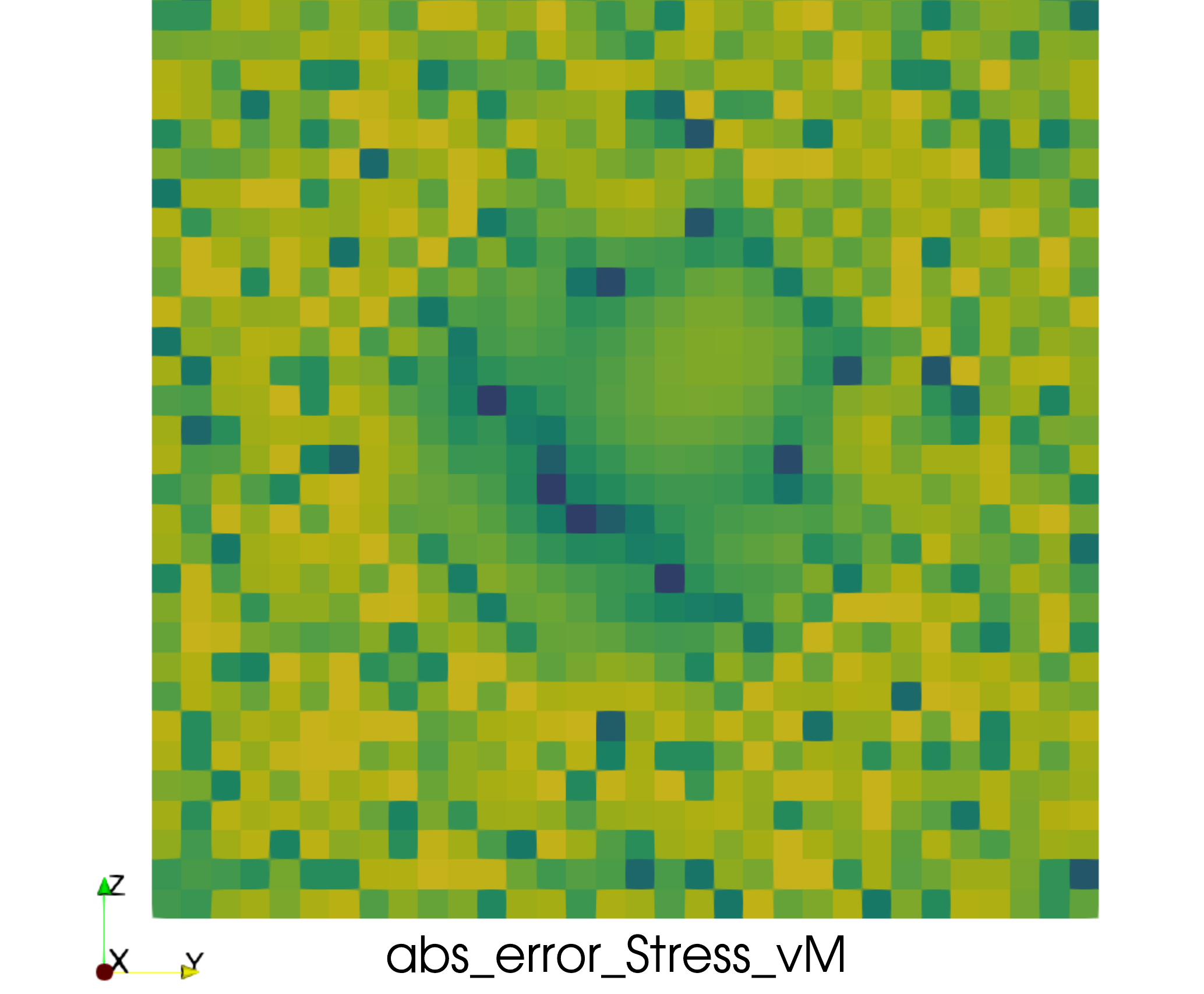}
	\caption{FNO7 - $\bar{\varepsilon}_{11}=50\%$}
	\label{fig:ball_absError_stressvM_m7_load50}
\end{subfigure}
\hfill
\begin{subfigure}{.24\textwidth}
	\centering
	\includegraphics[width=\textwidth,trim=2cm 4.4cm 39cm 7cm,clip]{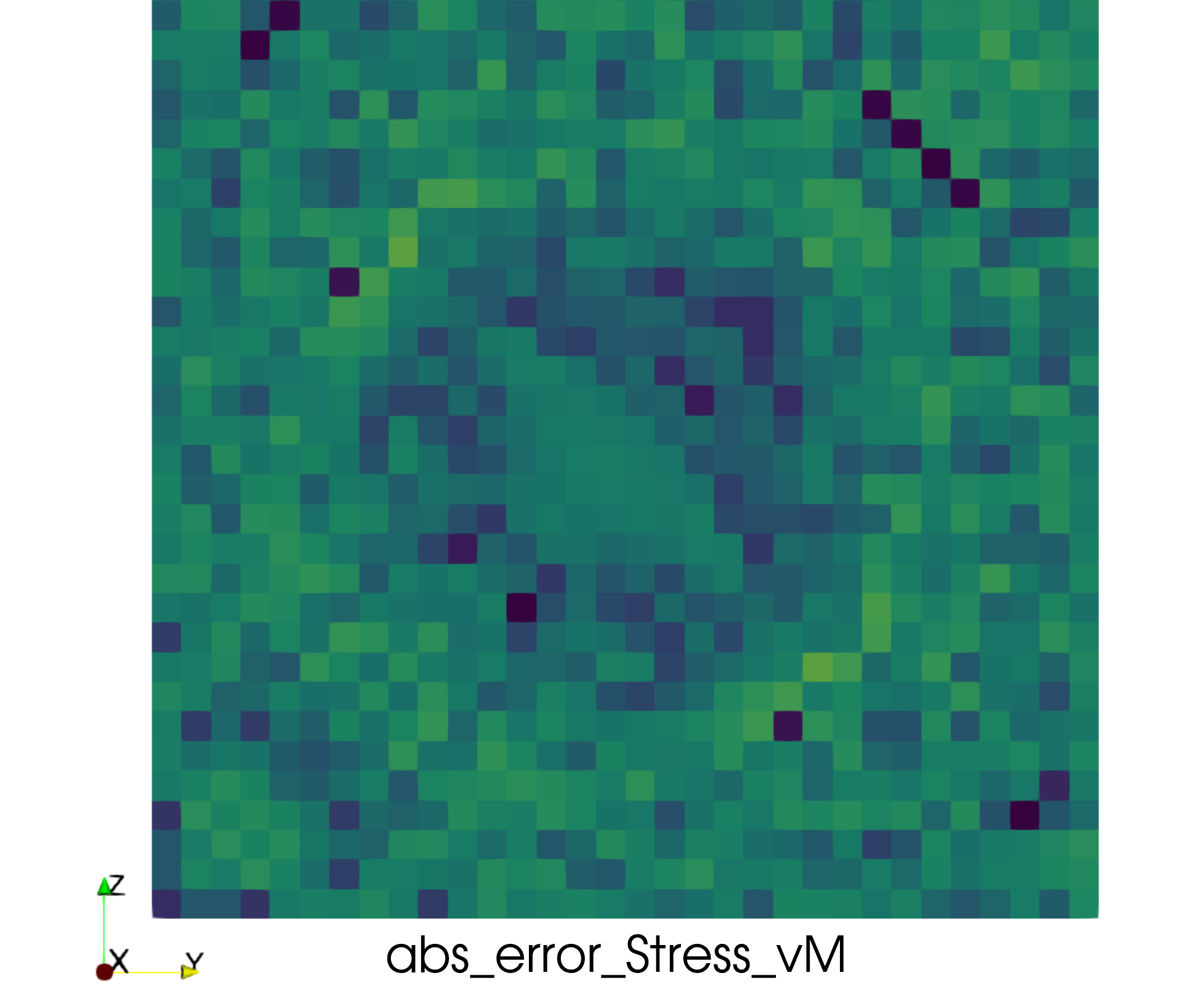}
	\caption{FNO9 - $\bar{\varepsilon}_{11}=50\%$}
	\label{fig:ball_absError_stressvM_m9_load50}
\end{subfigure}
\hfill
\begin{subfigure}{.24\textwidth}
	\centering
	\includegraphics[width=\textwidth,trim=2cm 4.4cm 39cm 7cm,clip]{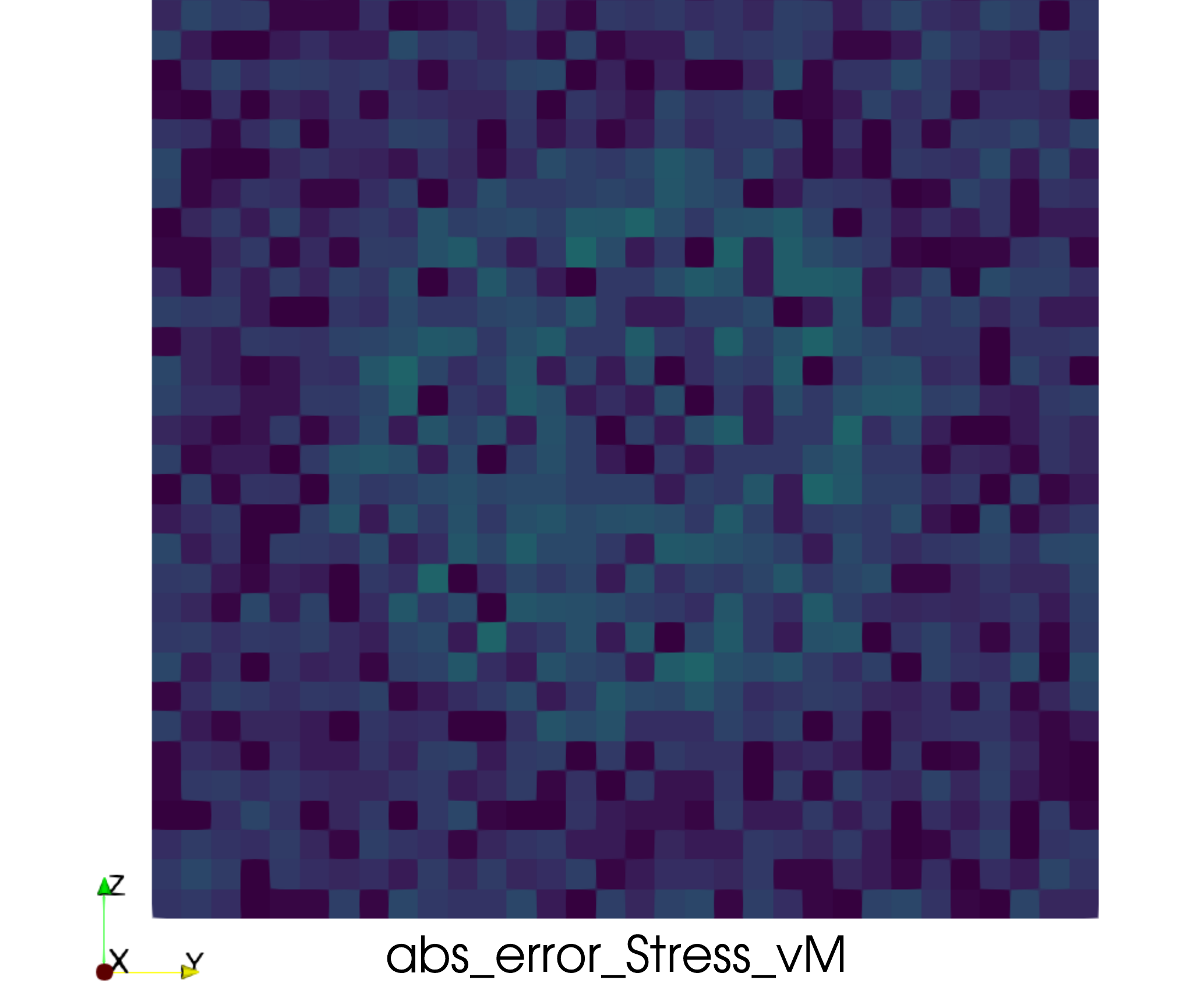}
	\caption{FNO11 - $\bar{\varepsilon}_{11}=50\%$}
	\label{fig:ball_absError_stressvM_m11_load50}
\end{subfigure}
\vspace{-1em}
\begin{center}
	\begin{subfigure}{\textwidth}
		\centering
		\includegraphics[height=.033\textheight]{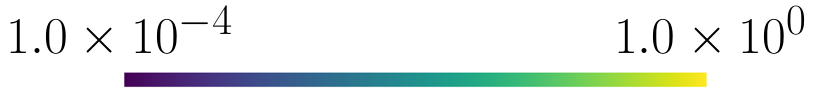}
	\end{subfigure}
\end{center}
\begin{subfigure}{.24\textwidth}
	\centering
	\includegraphics[width=\textwidth,trim=2cm 4.4cm 39cm 7cm,clip]{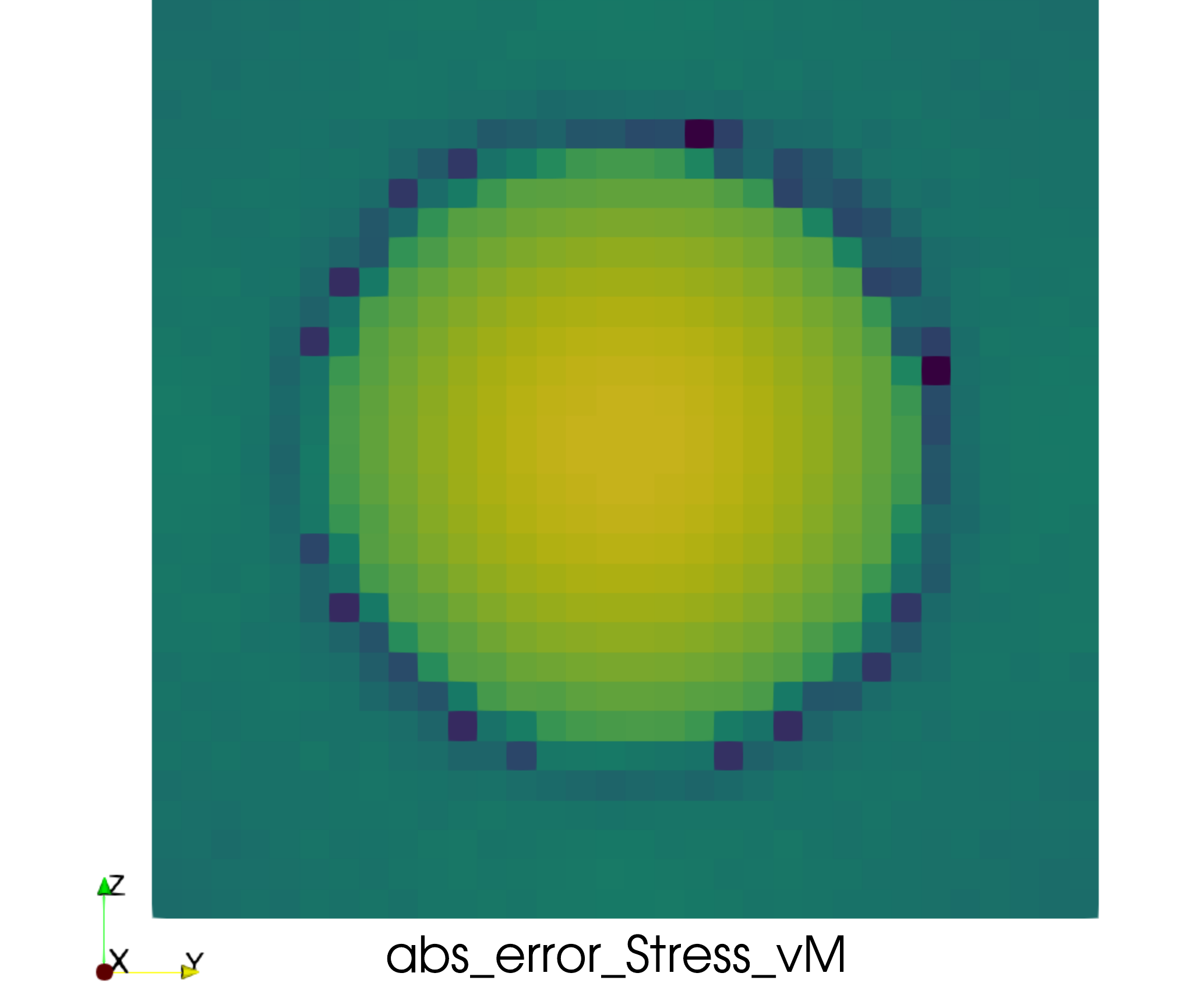}
	\caption{FNO7 - $\bar{\varepsilon}_{11}=100.0\%$}
	\label{fig:ball_absError_stressvM_m7_load100}
\end{subfigure}
\hfill
\begin{subfigure}{.24\textwidth}
	\centering
	\includegraphics[width=\textwidth,trim=2cm 4.4cm 39cm 7cm,clip]{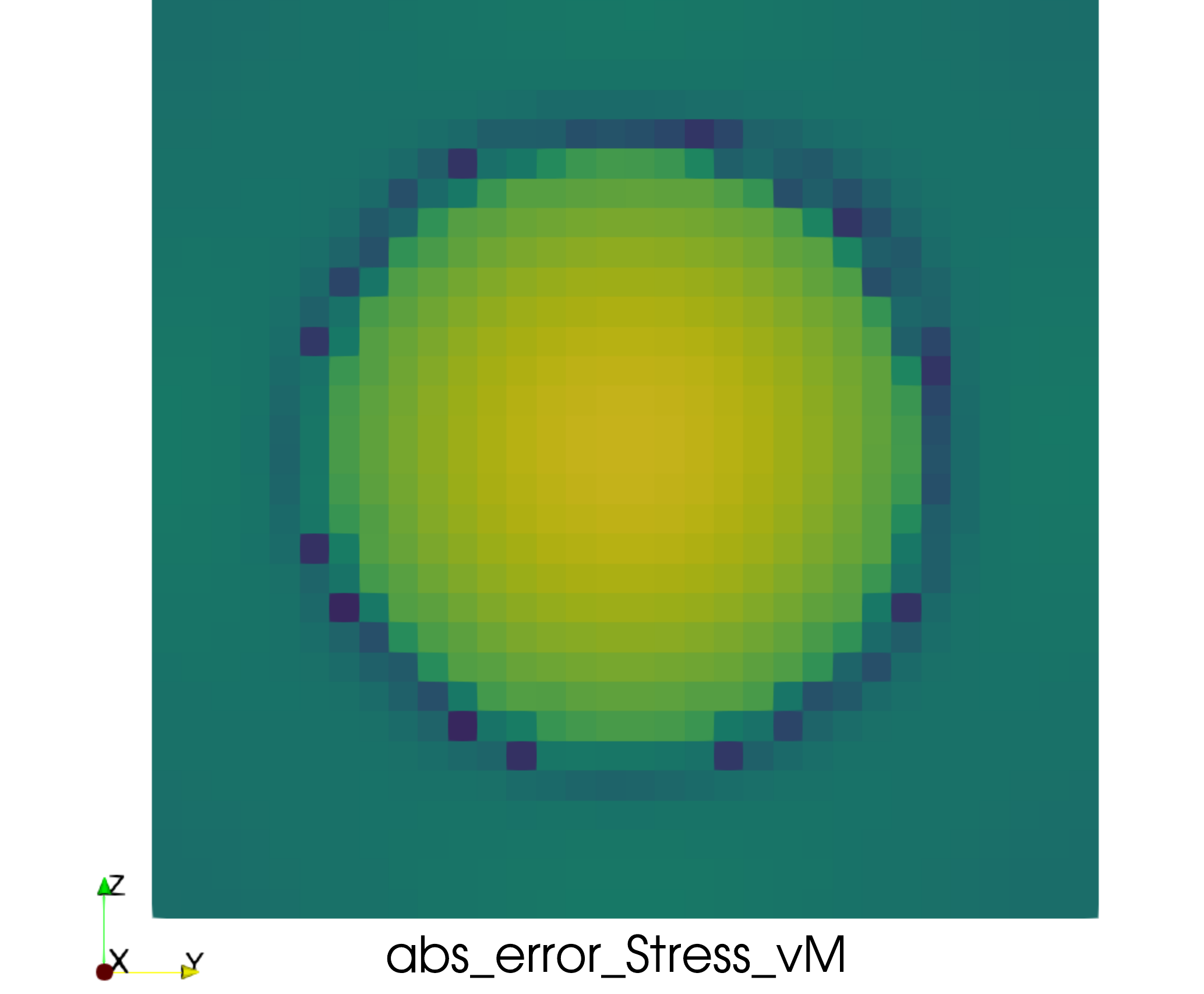}
	\caption{FNO9 - $\bar{\varepsilon}_{11}=100.0\%$}
	\label{fig:ball_absError_stressvM_m9_load100}
\end{subfigure}
\hfill
\begin{subfigure}{.24\textwidth}
	\centering
	\includegraphics[width=\textwidth,trim=2cm 4.4cm 39cm 7cm,clip]{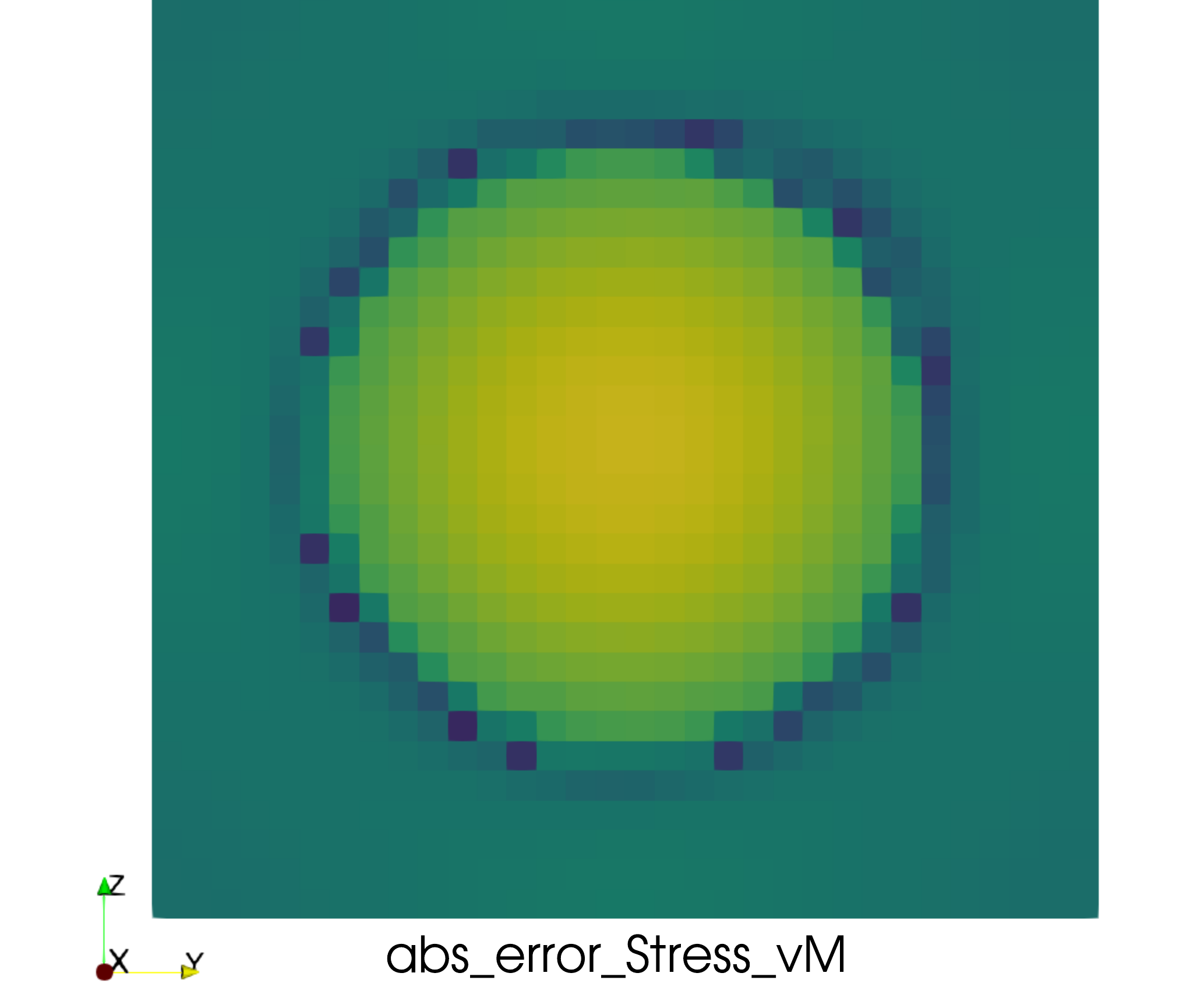}
	\caption{FNO11 - $\bar{\varepsilon}_{11}=100.0\%$}
	\label{fig:ball_absError_stressvM_m11_load100}
\end{subfigure}
\vspace{-1em}
\begin{center}
	\begin{subfigure}{\textwidth}
		\centering
		\includegraphics[height=.033\textheight]{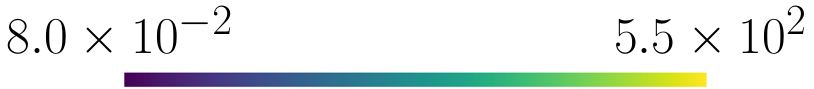}
	\end{subfigure}
\end{center}
	\caption{Absolute error of equivalent stress distributions in the microstructures described in Fig.~\ref{fig:ball} for varying  prescribed strain magnitudes.}
	\label{fig:singlesphere_magnitude_absError_sigmavM}
\end{figure}

\subsection{Universality}

In this section, we aim to illustrate the FNO model as a universal approximator for homogenization problems that does not require training and is applicable to \textit{any} microstructure which belongs to the elastic phase space $\mathcal{M}$. For this reason, we consider three distinct types of microstructures, namely ceramic spherical particles in an aluminum matrix, E-glass fibers embedded in a polyamide matrix and bound quartz-sand grains, resolved by $512^3$ voxels each. A detailed description of the microstructure-generation procedure for each of these structures is found in the work of Schneider~\cite{BB2019}, which also provides the material parameters summarized in Tab. \ref{tab:materialParameters_fancy_examples}. The computations for this example were carried out on a computing workstation with two AMD EPYC 9354 32 core processors, 1.12 TB of RAM and Ubuntu 22.04.4.
\begin{table}[H]
	\centering
	\caption{Material parameters of various microstructures described in Fig.~\ref{fig:universality_microstructures}}
	\begin{tabular}{l l l}
		\hline
		Material & E (GPa) & $\nu$ \\
		\hline \hline
		Ceramic particles & 400  & 0.2 \\		
		Aluminum matrix & 75  & 0.3 \\		
		E-glass fibers & 72  & 0.22 \\		
		Polyamide matrix & 2.1  & 0.3 \\		
		Quartz sand grains & 66.9  & 0.25 \\		
		\hline
	\end{tabular}
	\label{tab:materialParameters_fancy_examples}
\end{table}
Based on the previous benchmark examples, the FNO11 constructed from a ReLU network of depth $\mathfrak{m}=11$ shows the performance comparable to the FFT-based method. Therefore we choose the FNO11 model to study the expressivity through the predictions of effective stiffness tensor across a variety of microstructures. We employ the same computational setup as in previous example with one exception for the sand grain microstructure. Due to its porosity, for the purpose of comparison, we choose a larger termination error at $10^{-3}$ in the iterative solver for the sand grain microstructure in both FNO11 model and the FFT-based method. In Tab. \ref{tab:universality_effective_elastic}, we report the computed effective Young's modulus and Poisson's ratio of three microstructures. The relative errors of the FNO11 model's predictions are given in parentheses, showing excellent agreements as compared to the FFT-based method with at most $0.05\%$ error. Additionally, for the prescribed macroscopic strain in $x-$direction, the Mises-equivalent stresses $\sigma^{\texttt{ev}}$ of three different microstructures obtained from FNO11 model are compared to the corresponding results from FFT-based method in Fig.~\ref{fig:universality_microstructures}, and an excellent agreement is observed. For a closer look, the 2D plots of equivalent stress distributions and their absolute errors are shown in Fig.~\ref{fig:universality_microstructures_plane}.

\begin{table}
	\caption{Computed isotropic engineering constants for various microstructures. The relative errors of the FNO11 model are given in parentheses.}
	\label{tab:universality_effective_elastic}
	\centering
	\begin{tabular}{crc|rc}
		\cline{2-5}
		& \multicolumn{2}{c|}{FFT}  & \multicolumn{2}{c}{FNO11} \\ \cline{2-5}
		\hline
		Microstructures  & $E^{\textrm{eff}}$ (GPa) & $\nu^{\textrm{eff}}$ & $E^{\textrm{eff}}$ (GPa)  & $\nu^{\textrm{eff}}$ \\
		\hline \hline 
		Spherical inclusions & 115.5675 & 0.2436 & 115.5309 (0.03$\%$) & 0.2435	 (0.04$\%$)	\\
		\review{Fiber} reinforcements & 71.2393 & 0.1837 & 71.2262  (0.02$\%$) & 0.1837  (0.00$\%$)		\\
		Sand grains & 15.2933 & 0.2186 & 15.2905  (0.02$\%$) & 0.2185  (0.05$\%$)	\\
		\hline  	
	\end{tabular}
\end{table}

\begin{figure}
	\centering
	\begin{subfigure}{.3\textwidth}
		\centering
		\includegraphics[width=\textwidth,trim=5cm 0cm 38cm 11cm,clip]{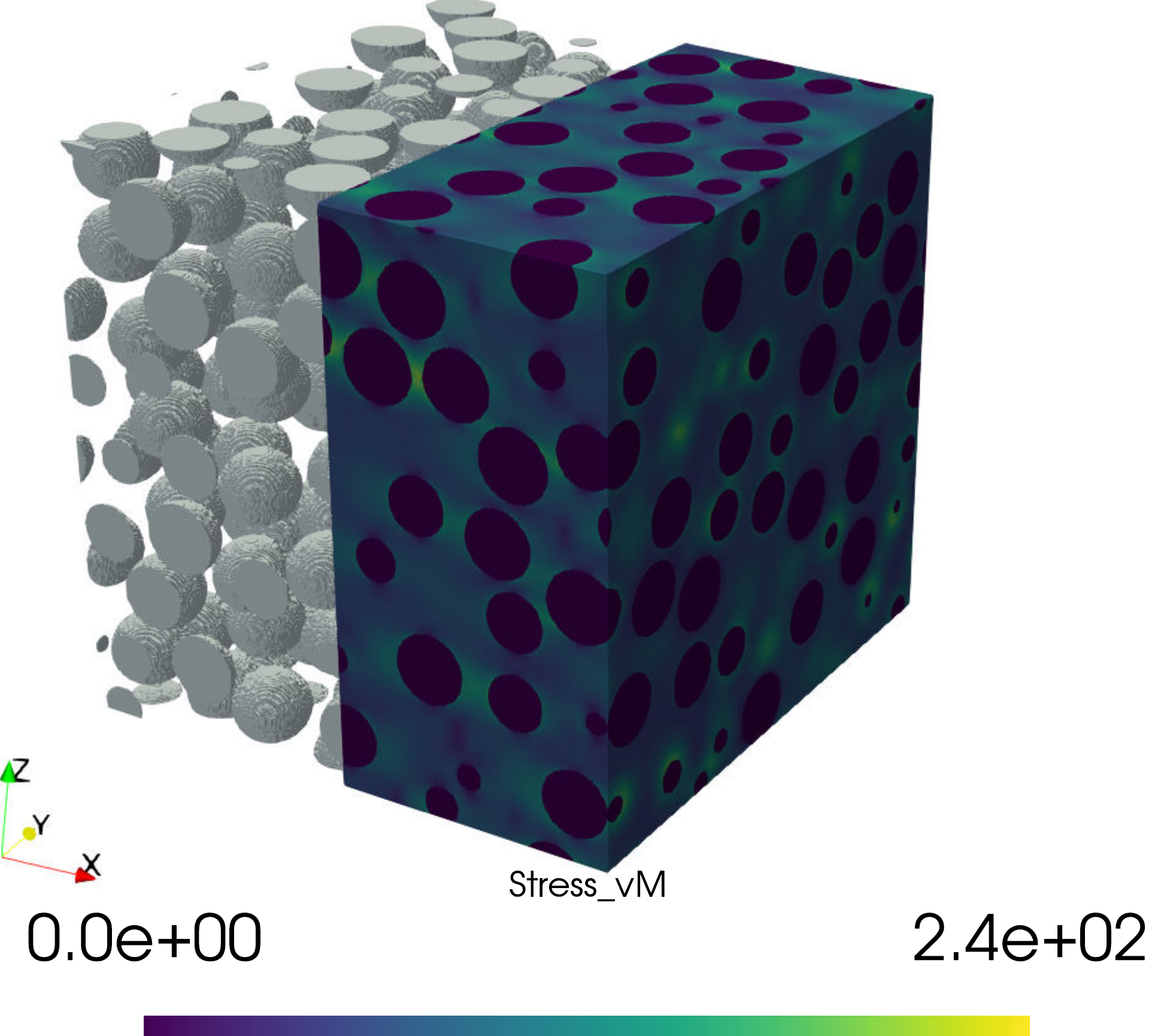}
		\caption{FNO}
		\label{fig:spheres_stressvM_FNO}
	\end{subfigure}
	\hfill
	\begin{subfigure}{.3\textwidth}
		\centering
		\includegraphics[width=\textwidth,trim=5cm 0cm 38cm 11cm,clip]{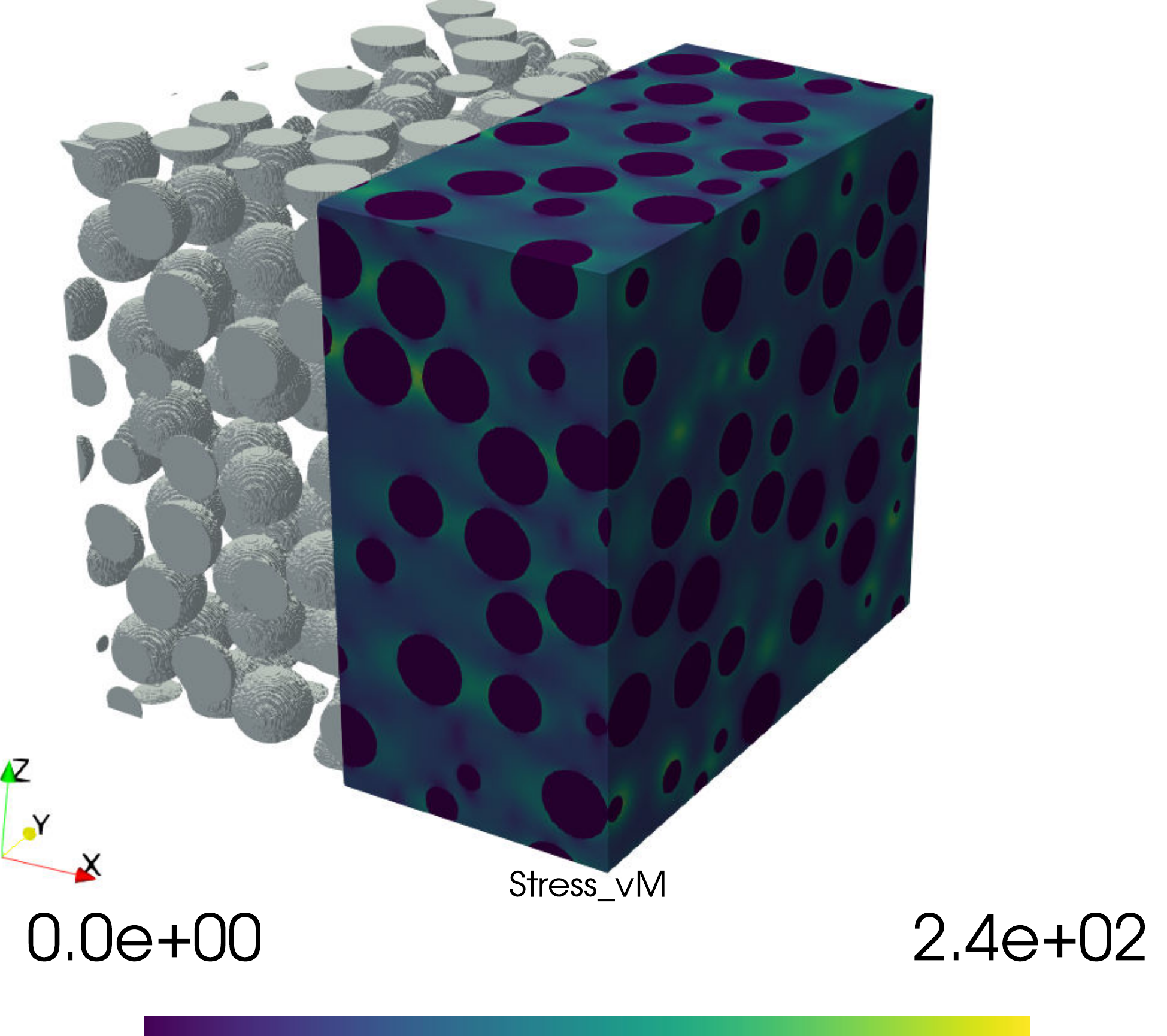}
		\caption{FFT}
		\label{fig:spheres_stressvM_FFT}
	\end{subfigure}	
	\hfill
	\begin{subfigure}{.3\textwidth}
		\centering
		\includegraphics[width=\textwidth,trim=4cm 0cm 38cm 11cm,clip]{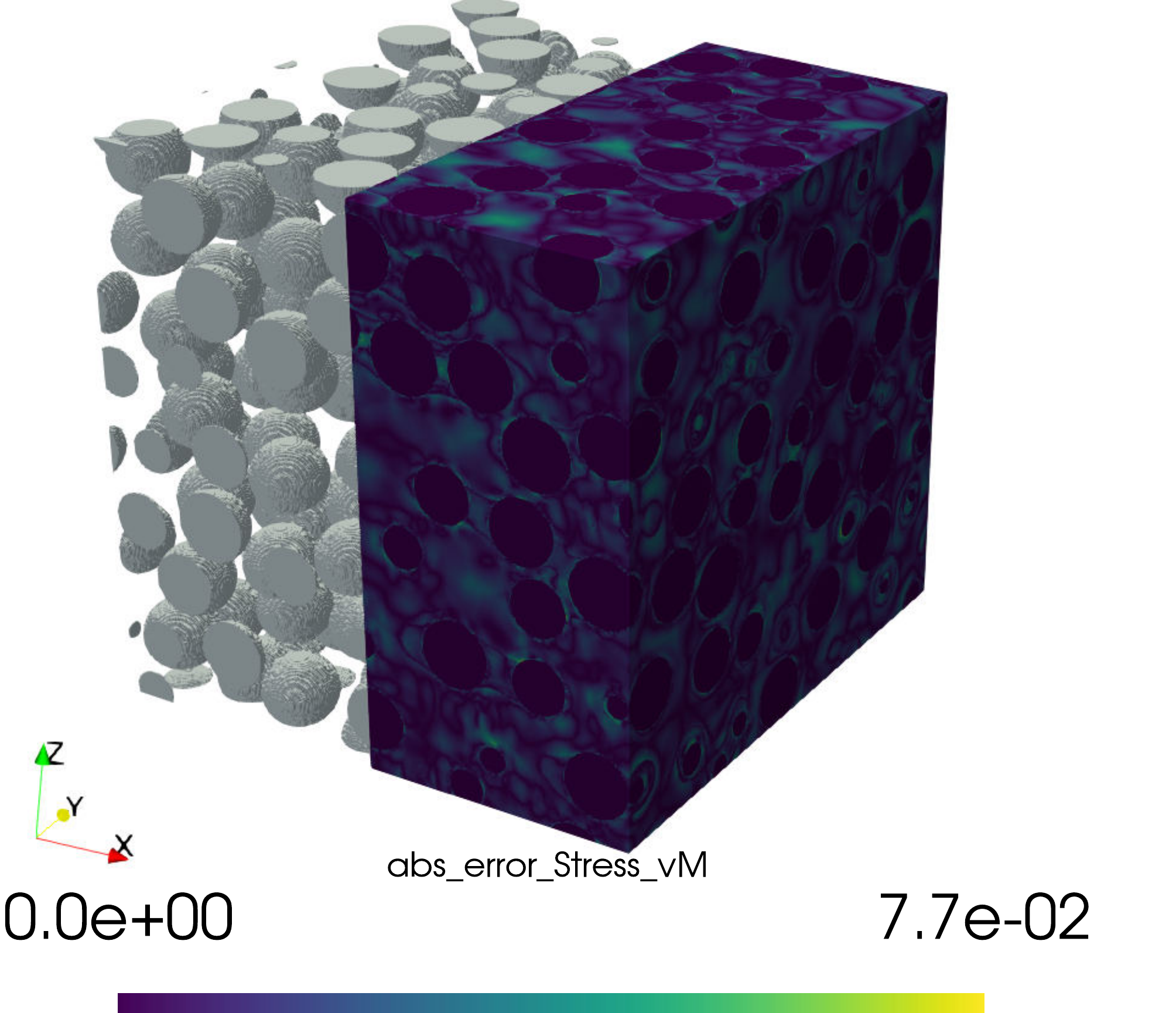}
		\caption{Error}
		\label{fig:spheres_stressvM_error}
	\end{subfigure}
	
	\begin{subfigure}{.3\textwidth}
		\centering
		\includegraphics[width=\textwidth,trim=5cm 0cm 38cm 10cm,clip]{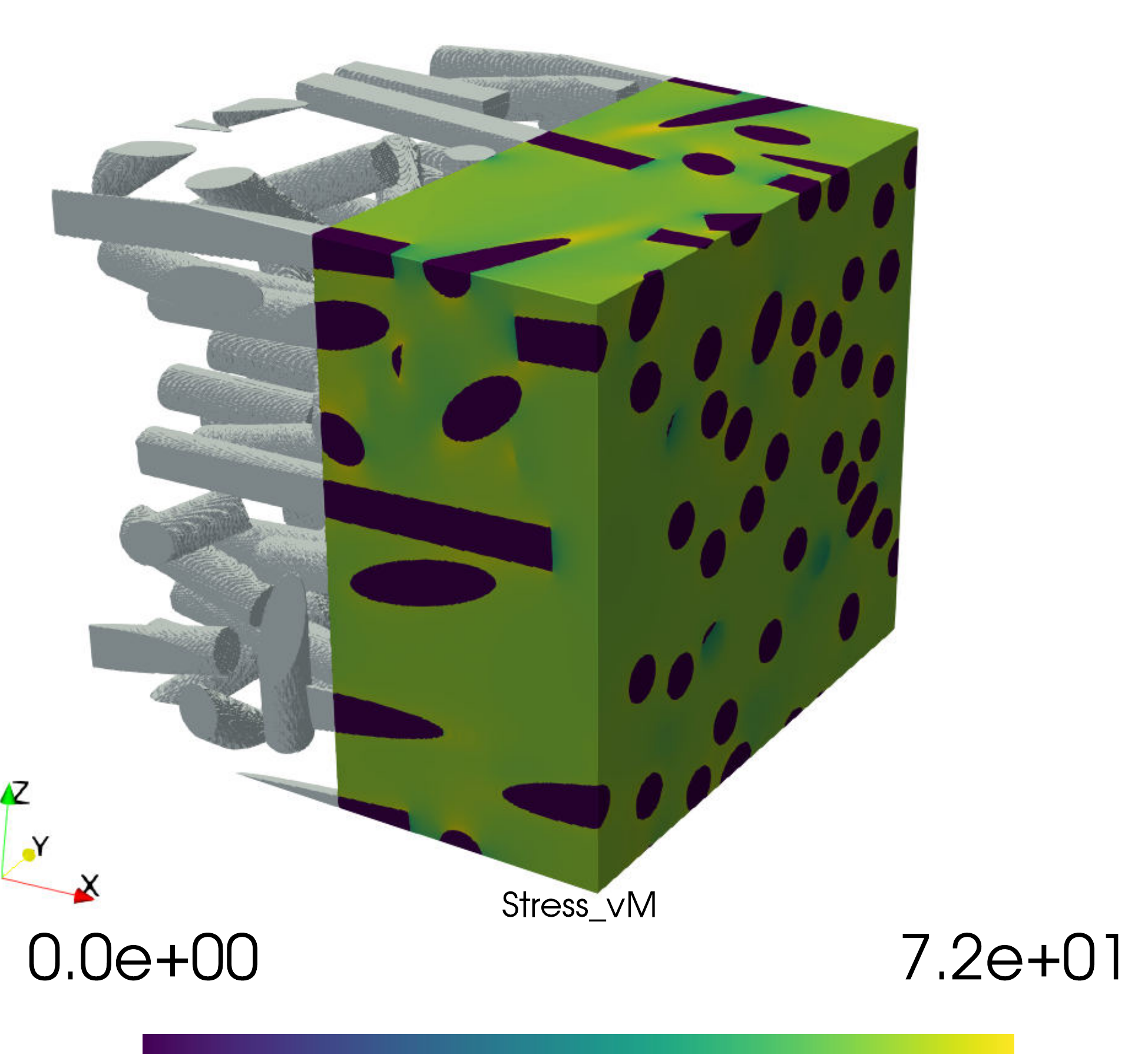}
		\caption{FNO}
		\label{fig:fibers_stressvM_FNO}
	\end{subfigure}
	\hfill
	\begin{subfigure}{.3\textwidth}
		\centering
		\includegraphics[width=\textwidth,trim=5cm 0cm 38cm 10cm,clip]{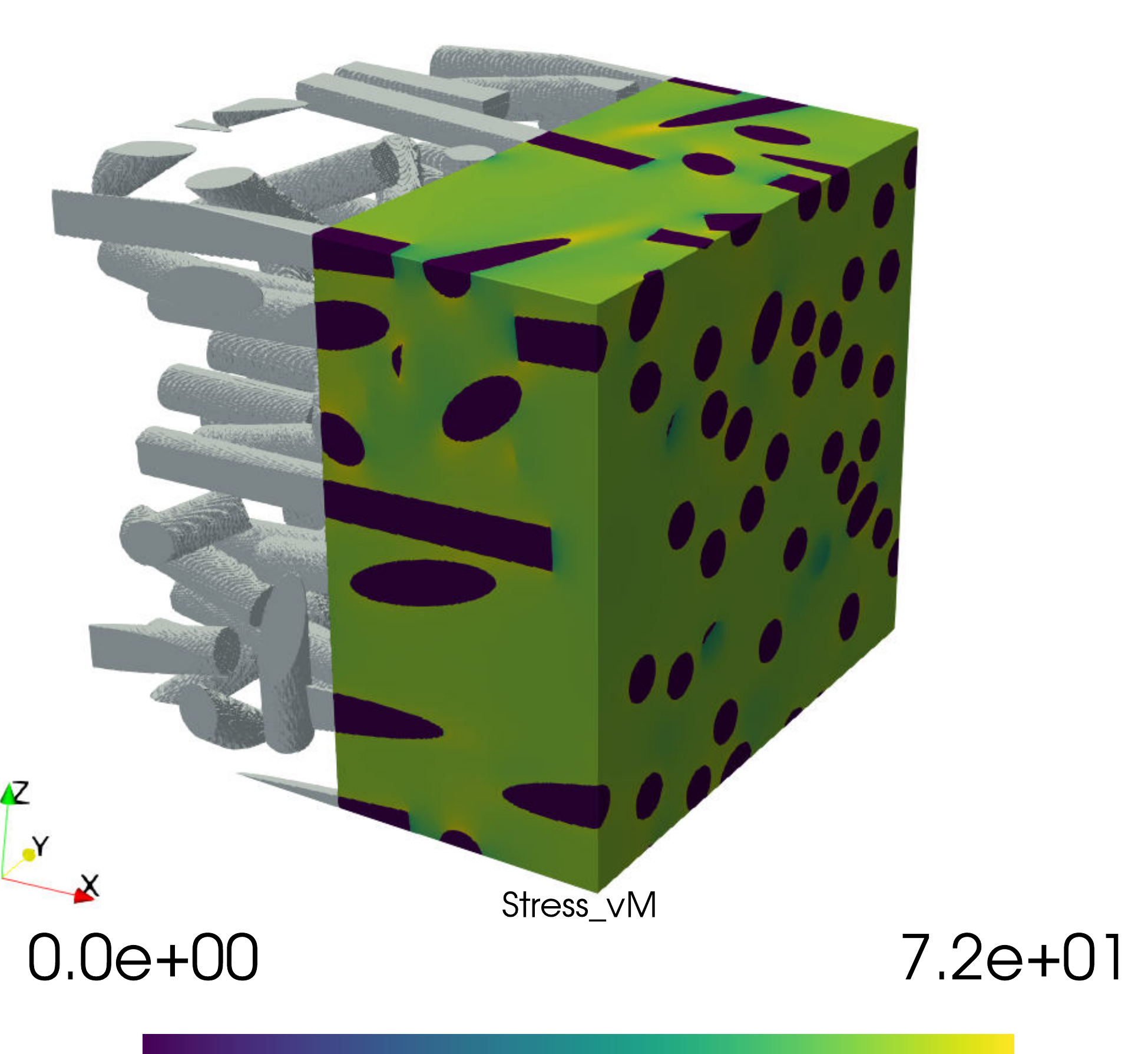}
		\caption{FFT}
		\label{fig:fibers_stressvM_FFT}
	\end{subfigure}	
	\hfill
	\begin{subfigure}{.3\textwidth}
		\centering
		\includegraphics[width=\textwidth,trim=4cm 0cm 38cm 10cm,clip]{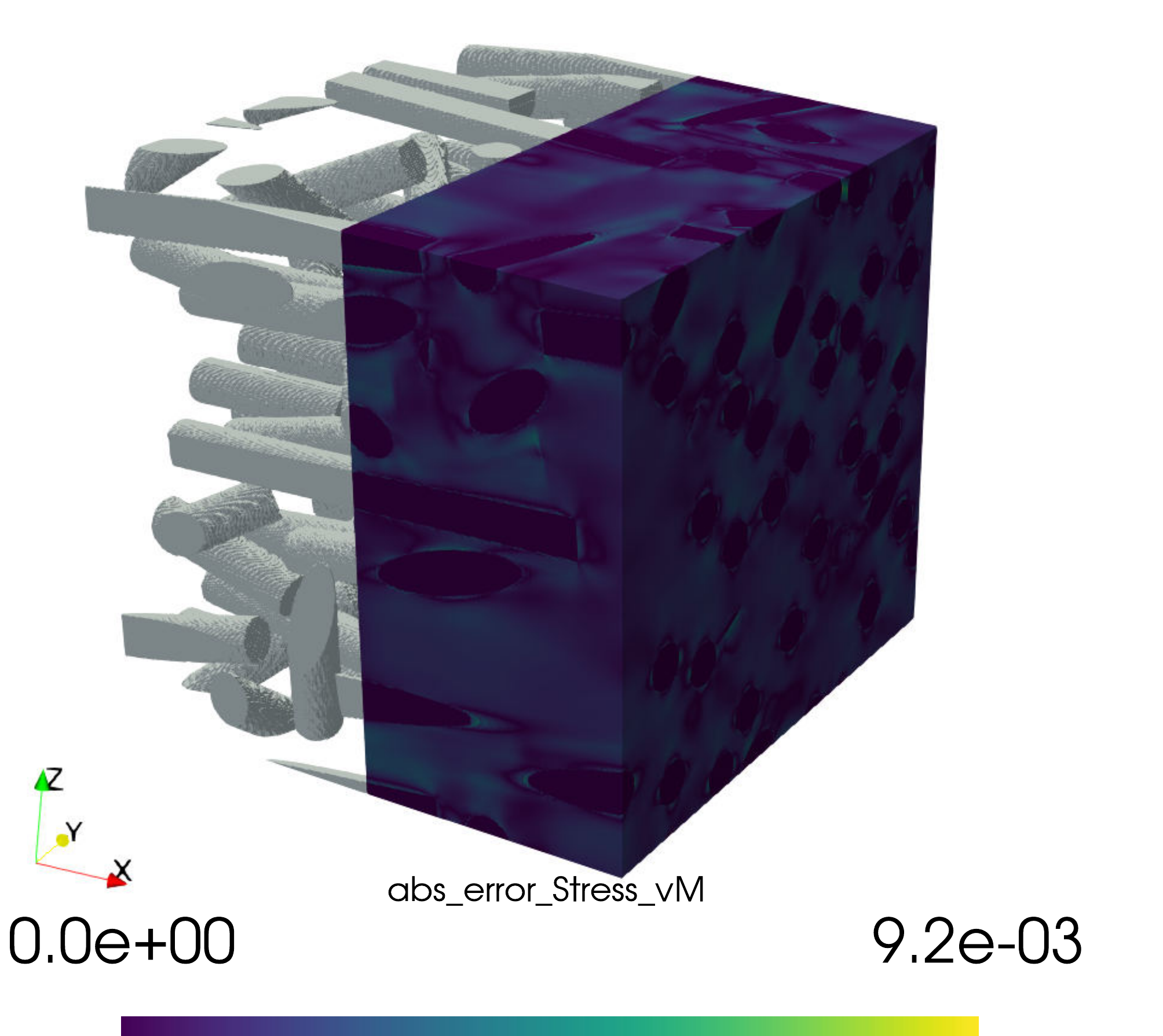}
		\caption{Error}
		\label{fig:fibers_stressvM_error}
	\end{subfigure}
	
	\begin{subfigure}{.3\textwidth}
		\centering
		\includegraphics[width=\textwidth,trim=5cm 0cm 38cm 10cm,clip]{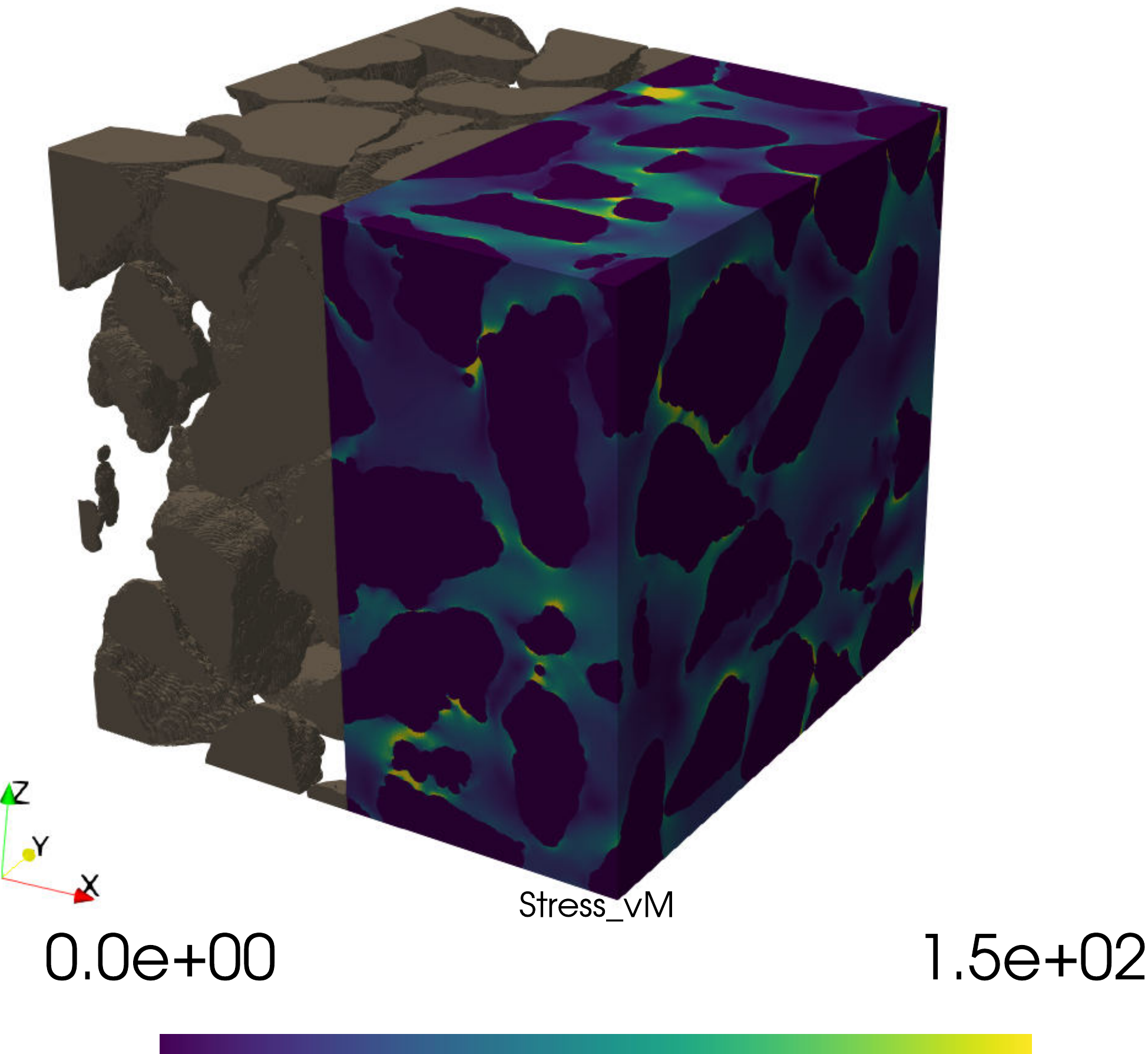}
		\caption{FNO}
		\label{fig:grains_stressvM_FNO}
	\end{subfigure}
	\hfill
	\begin{subfigure}{.3\textwidth}
		\centering
		\includegraphics[width=\textwidth,trim=5cm 0cm 38cm 10cm,clip]{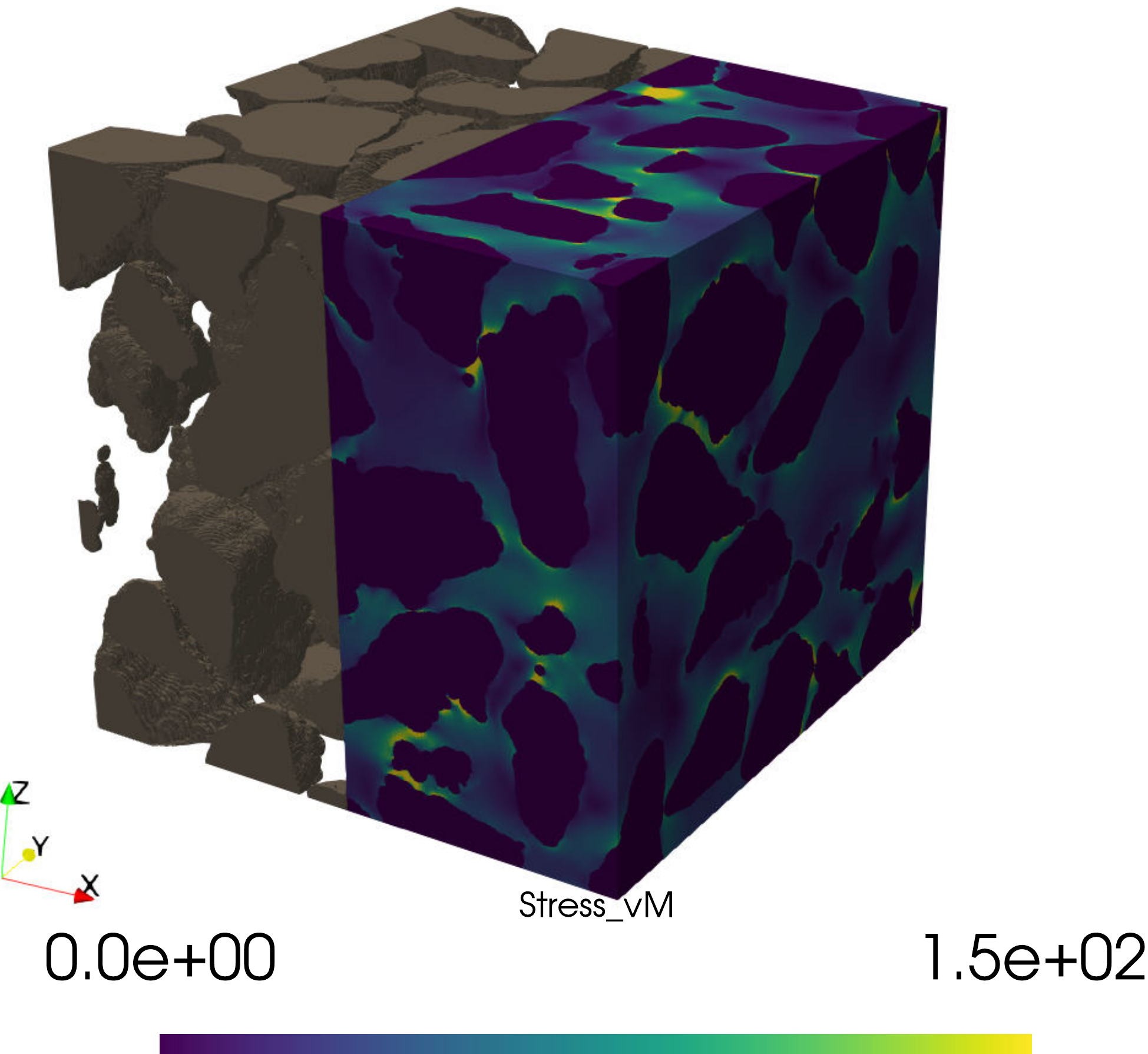}
		\caption{FFT}
		\label{fig:grains_stressvM_FFT}
	\end{subfigure}	
	\hfill
	\begin{subfigure}{.3\textwidth}
		\centering
		\includegraphics[width=\textwidth,trim=4cm 0cm 38cm 10cm,clip]{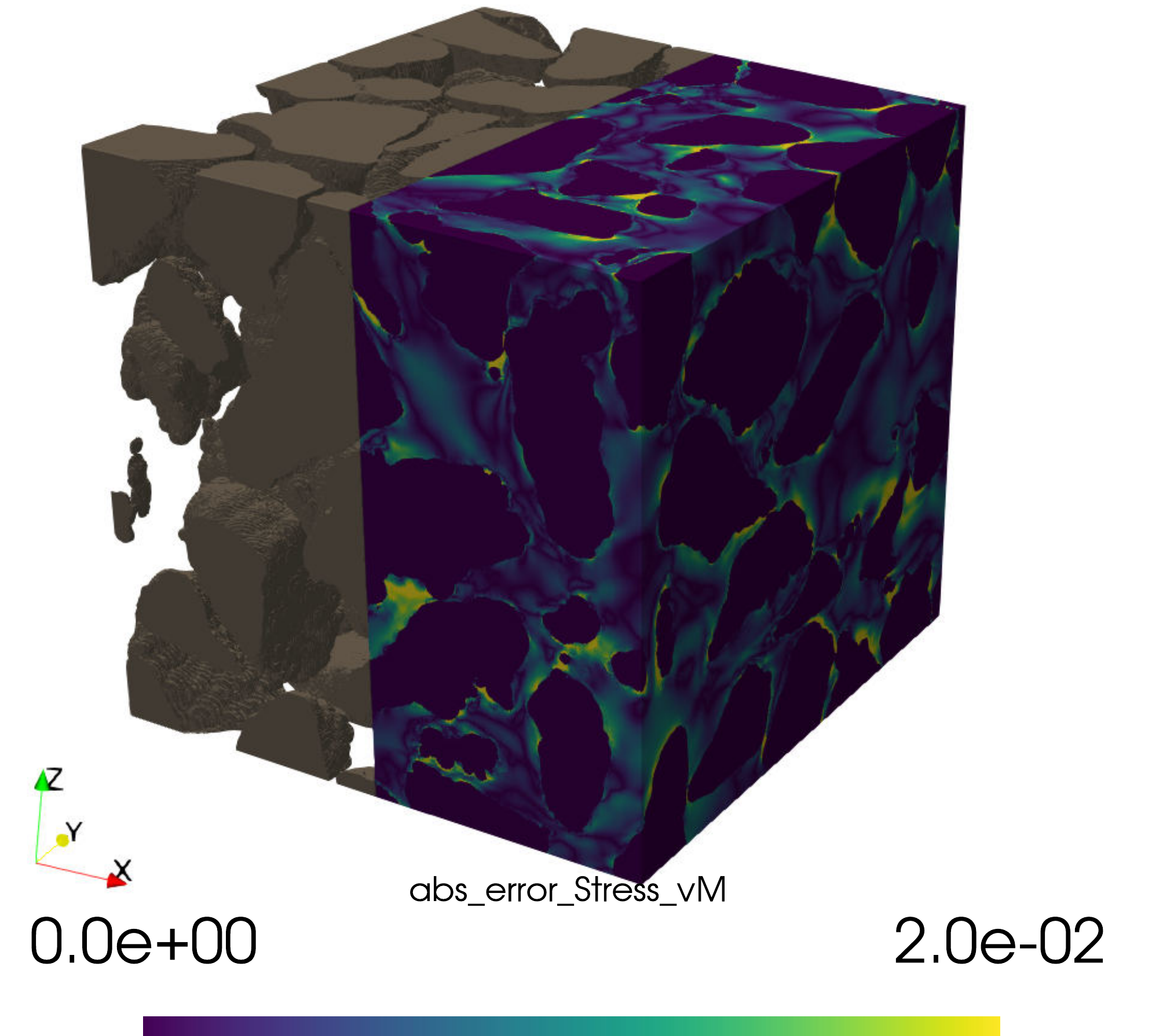}
		\caption{Error}
		\label{fig:grains_stressvM_error}
	\end{subfigure}	
	\caption{Mises-equivalent stresses $\sigma^{\texttt{ev}}$.}
	\label{fig:universality_microstructures}
\end{figure}	

\begin{figure}
	\centering
	\begin{subfigure}{.3\textwidth}
		\centering
		\includegraphics[width=\textwidth,trim=3cm 0cm 37cm 0cm,clip]{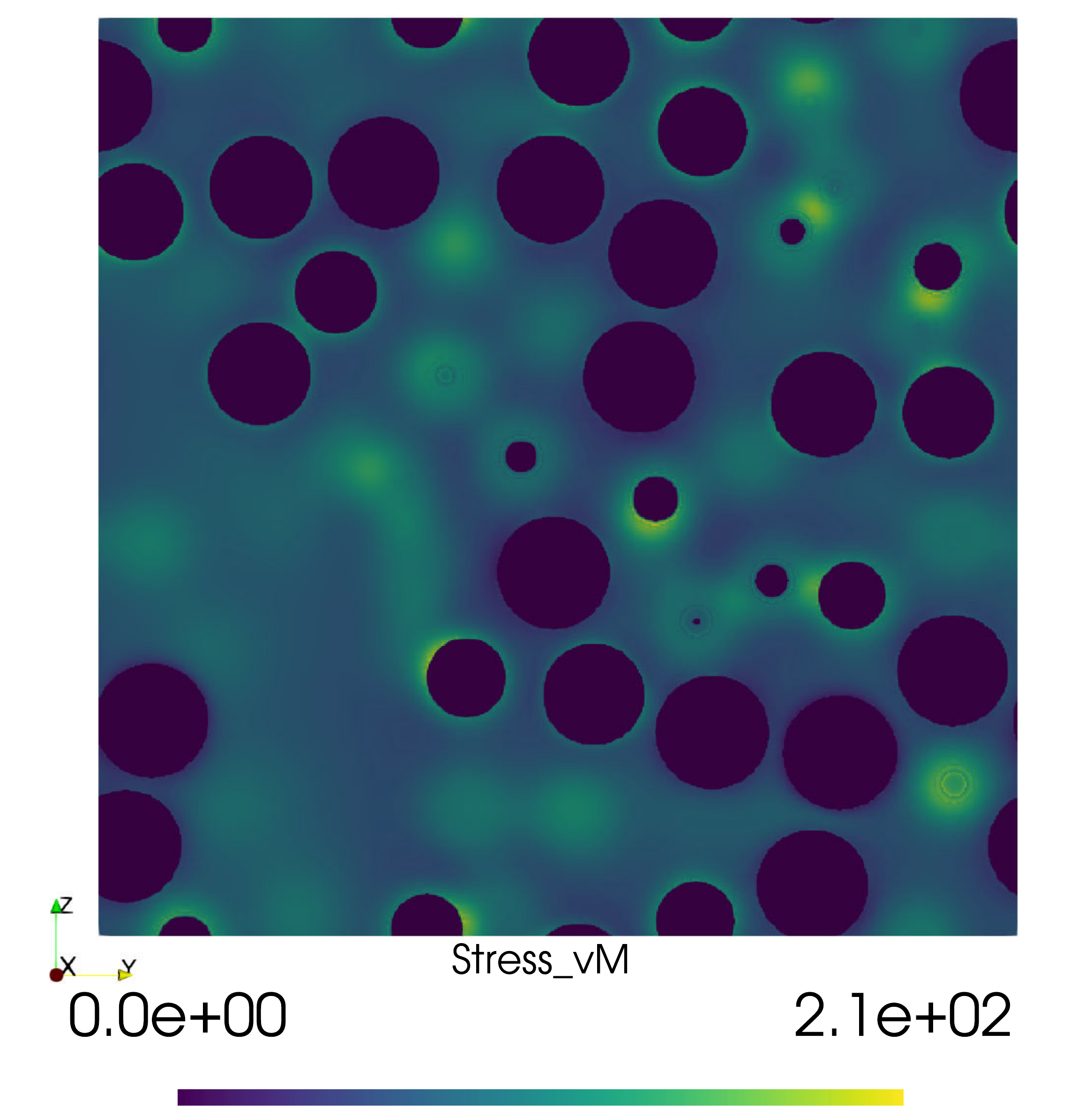}
		\caption{FNO}
		\label{fig:spheres_FNO11_stressvM_512}
	\end{subfigure}	
	\hfill
	\begin{subfigure}{.3\textwidth}
		\centering
		\includegraphics[width=\textwidth,trim=3cm 0cm 37cm 0cm,clip]{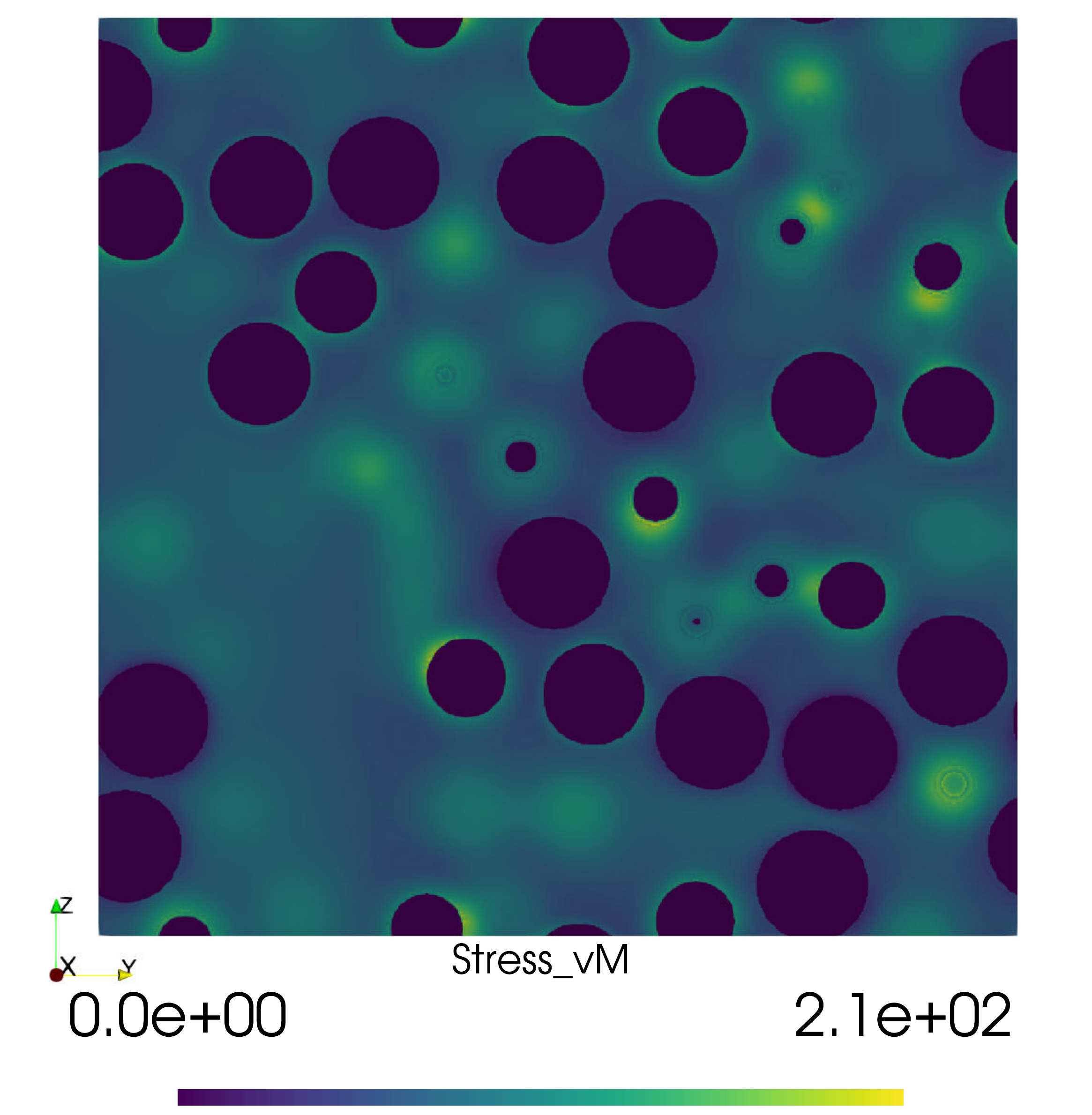}
		\caption{FFT}
		\label{fig:spheres_FFT_stressvM_512}
	\end{subfigure}
	\hfill
	\begin{subfigure}{.3\textwidth}
		\centering
		\includegraphics[width=\textwidth,trim=3cm 0cm 37cm  0cm,clip]{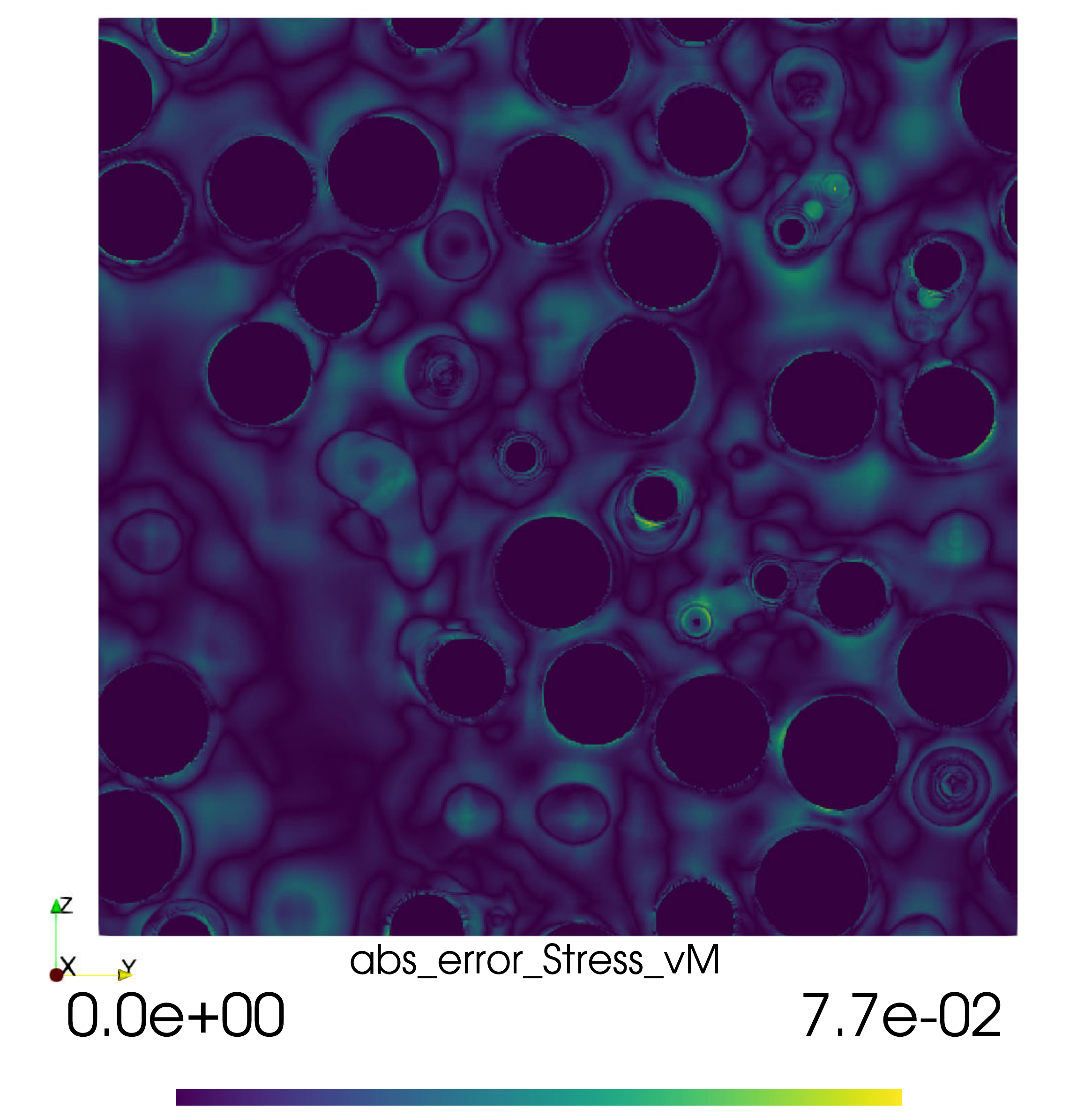}
		\caption{Error}
		\label{fig:spheres_absError_stressvM_512}
	\end{subfigure}
	
	\begin{subfigure}{.3\textwidth}
		\centering
		\includegraphics[width=\textwidth,trim=3cm 0cm 37cm 0cm,clip]{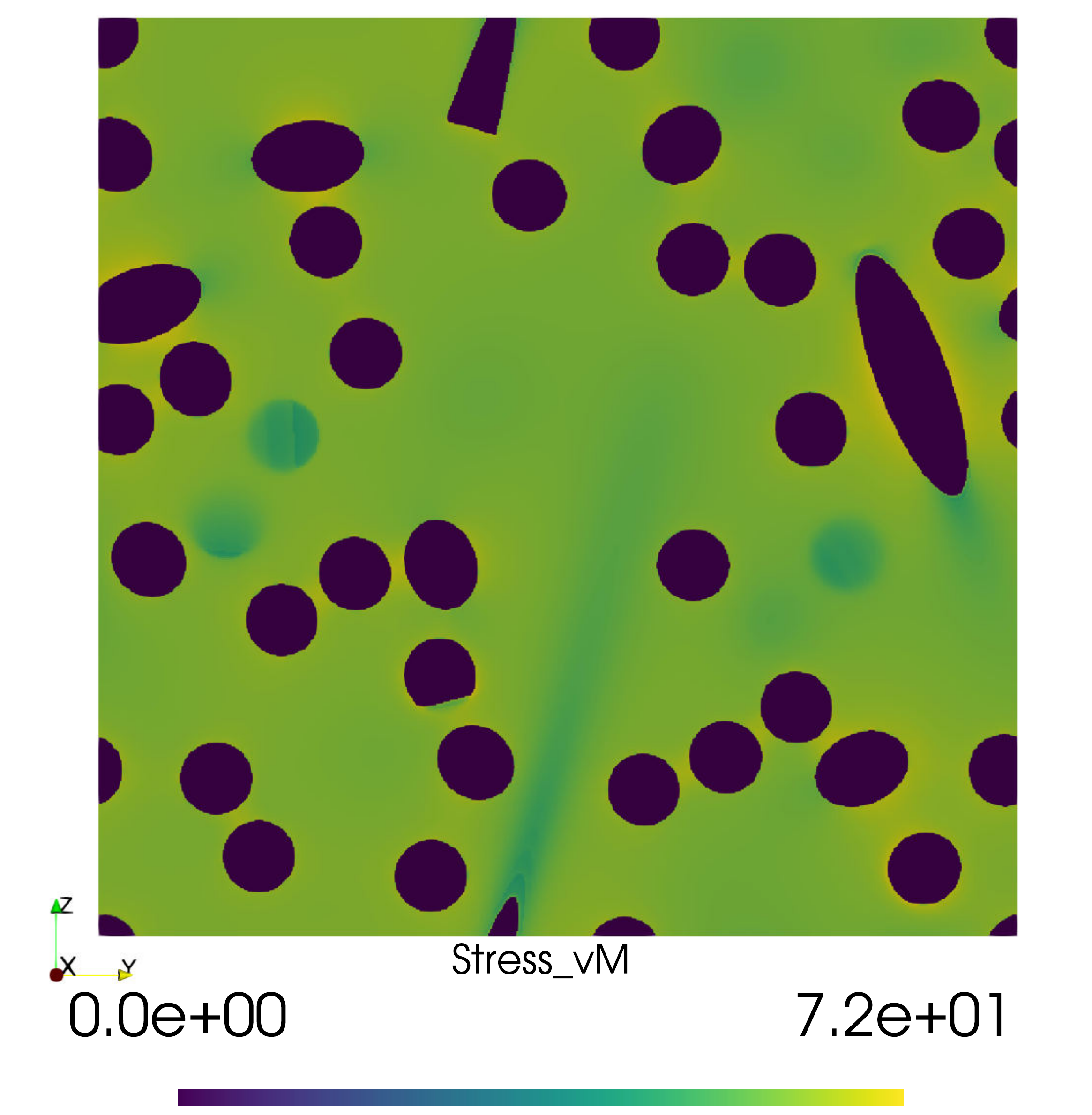}
		\caption{FNO}
		\label{fig:fibers_FNO11_stressvM_512}
	\end{subfigure}	
	\hfill
	\begin{subfigure}{.3\textwidth}
		\centering
		\includegraphics[width=\textwidth,trim=3cm 0cm 37cm 0cm,clip]{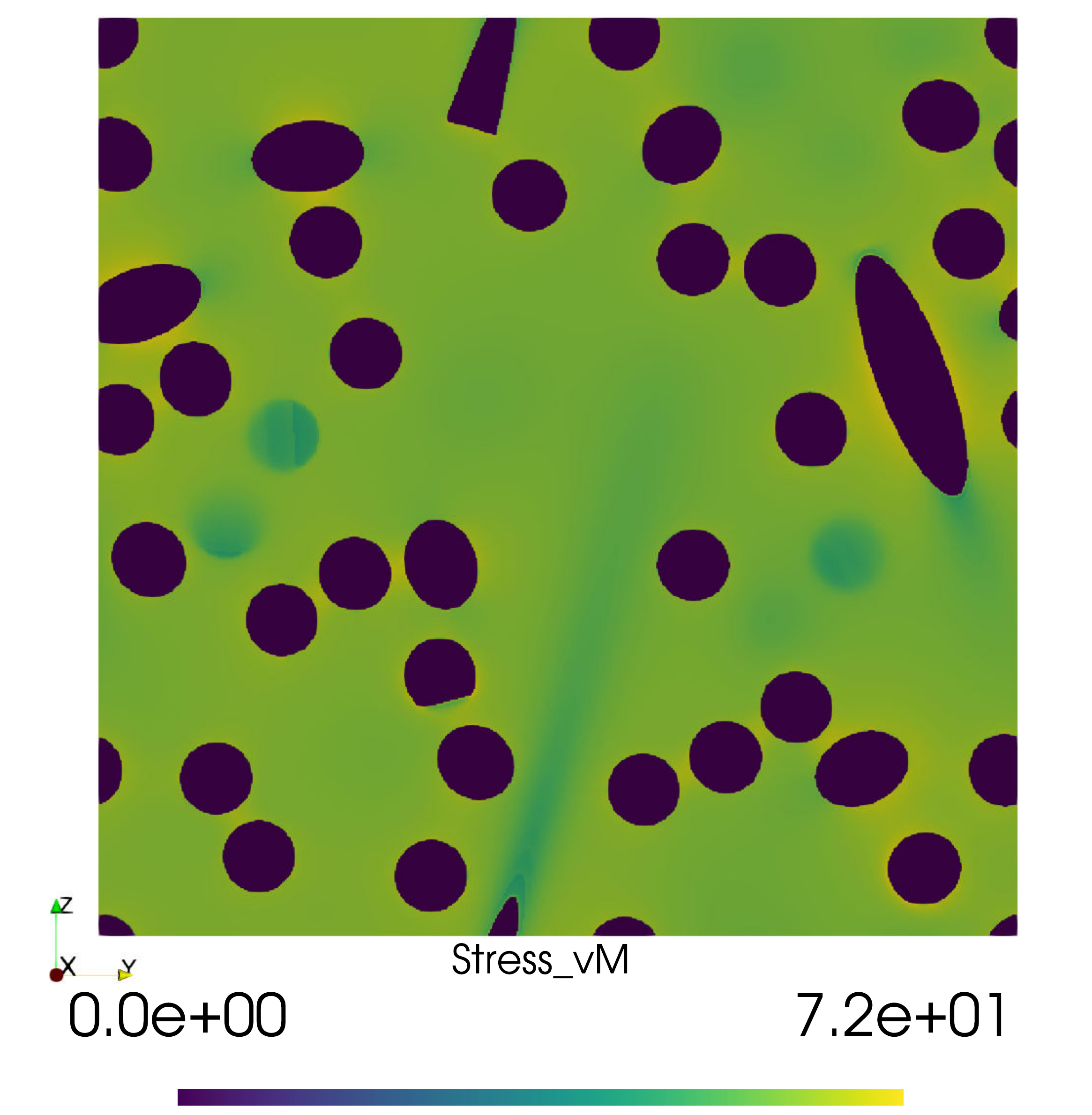}
		\caption{FFT}
		\label{fig:fibers_FFT_stressvM_512}
	\end{subfigure}
	\hfill
	\begin{subfigure}{.3\textwidth}
		\centering
		\includegraphics[width=\textwidth,trim=3cm 0cm 37cm  0cm,clip]{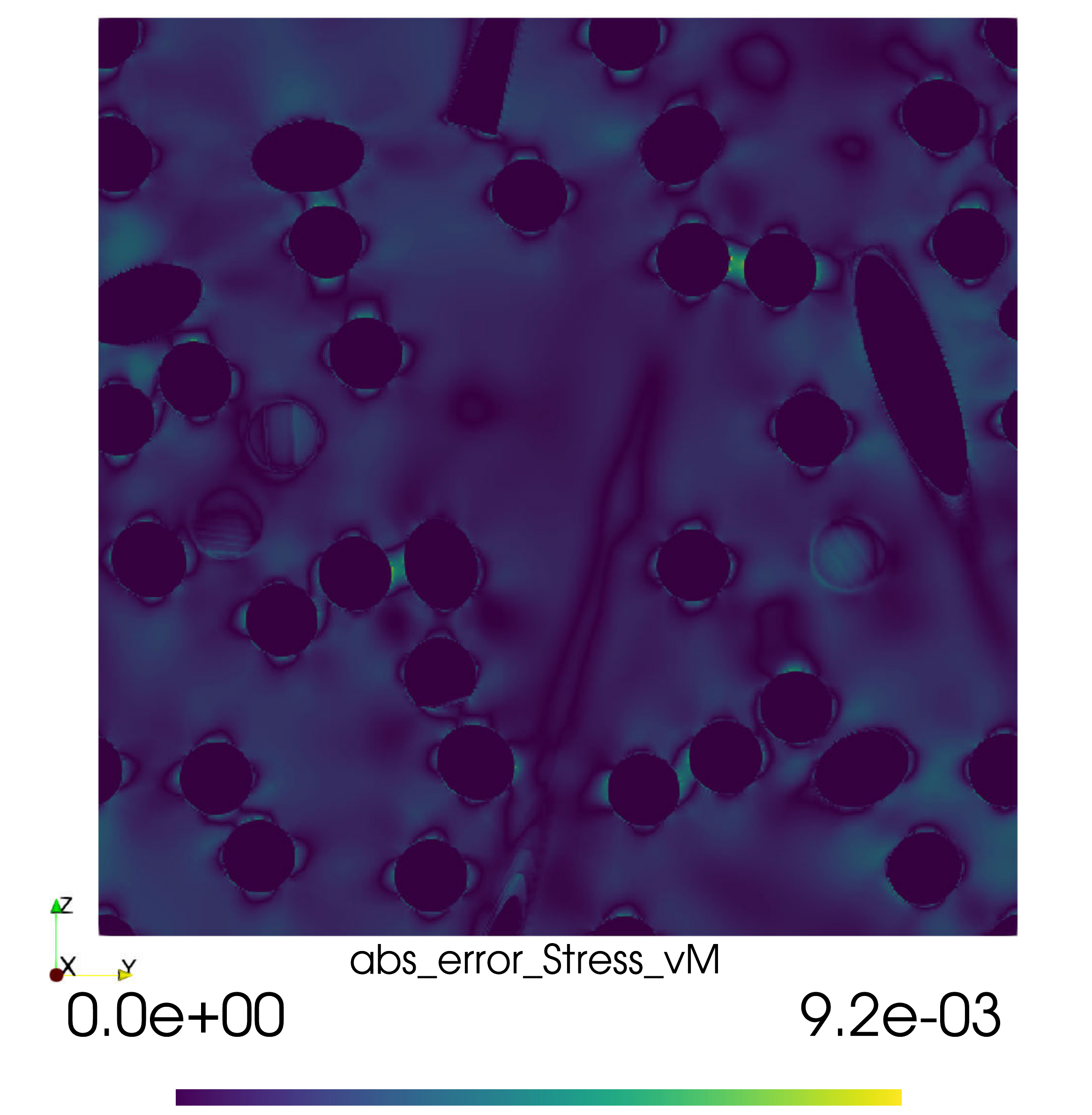}
		\caption{Error}
		\label{fig:fibers_absError_stressvM_512}
	\end{subfigure}
	
	\begin{subfigure}{.3\textwidth}
		\centering
		\includegraphics[width=\textwidth,trim=3cm 0cm 37cm 0cm,clip]{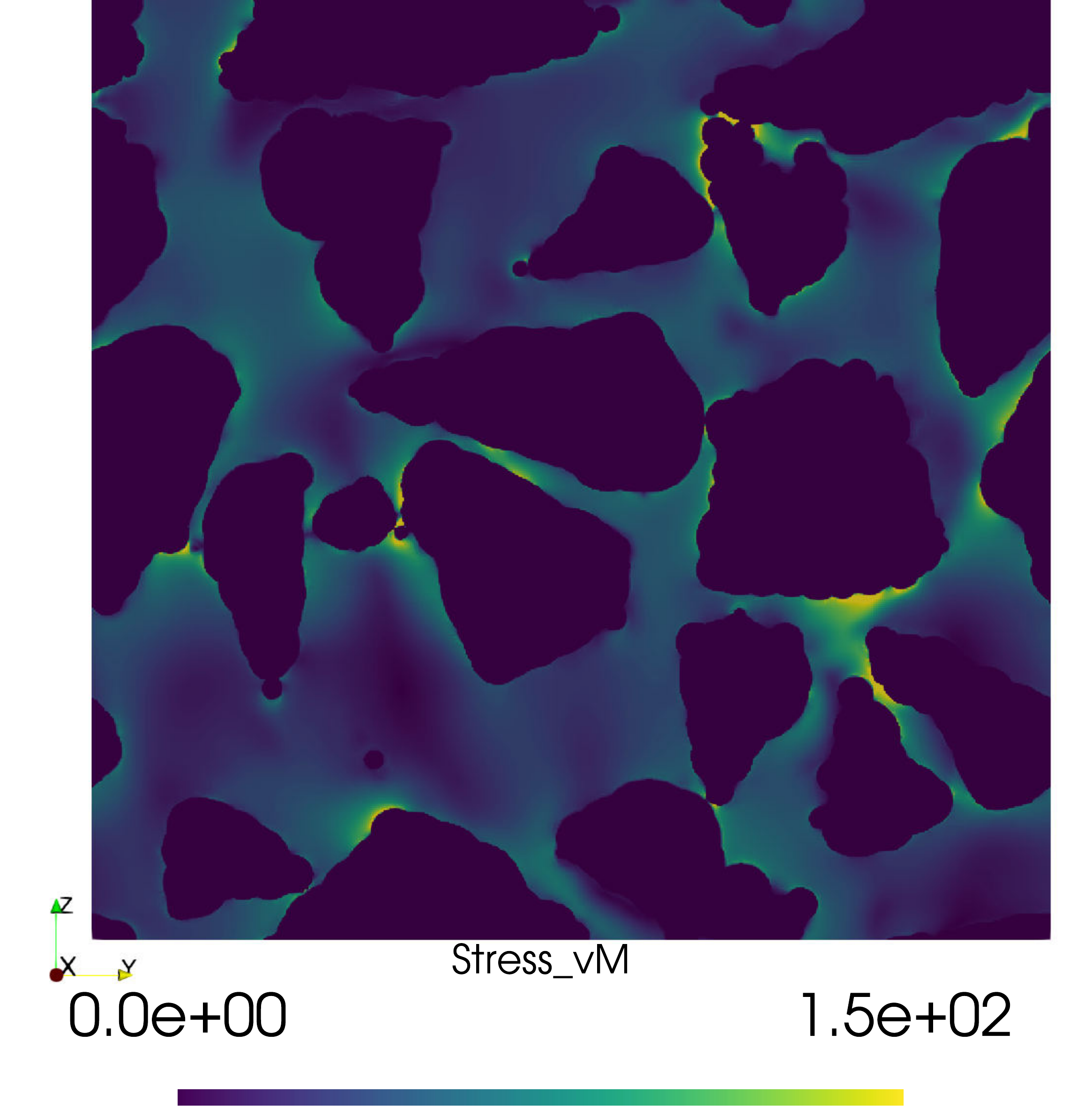}
		\caption{FNO}
		\label{fig:grains_FNO11_stressvM_512}
	\end{subfigure}	
	\hfill
	\begin{subfigure}{.3\textwidth}
		\centering
		\includegraphics[width=\textwidth,trim=3cm 0cm 37cm 0cm,clip]{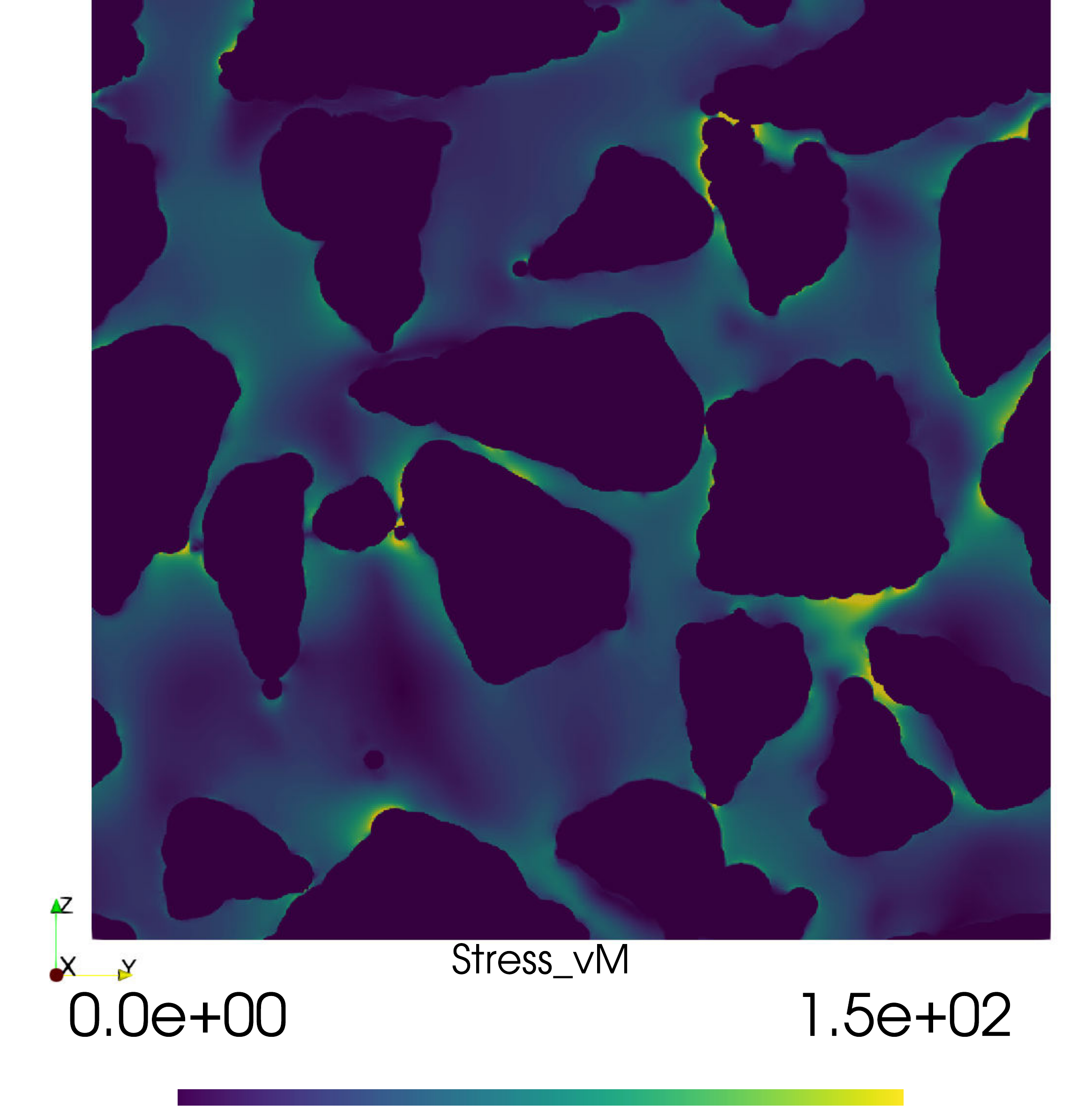}
		\caption{FFT}
		\label{fig:grains_FFT_stressvM_512}
	\end{subfigure}
	\hfill
	\begin{subfigure}{.3\textwidth}
		\centering
		\includegraphics[width=\textwidth,trim=3cm 0cm 37cm  0cm,clip]{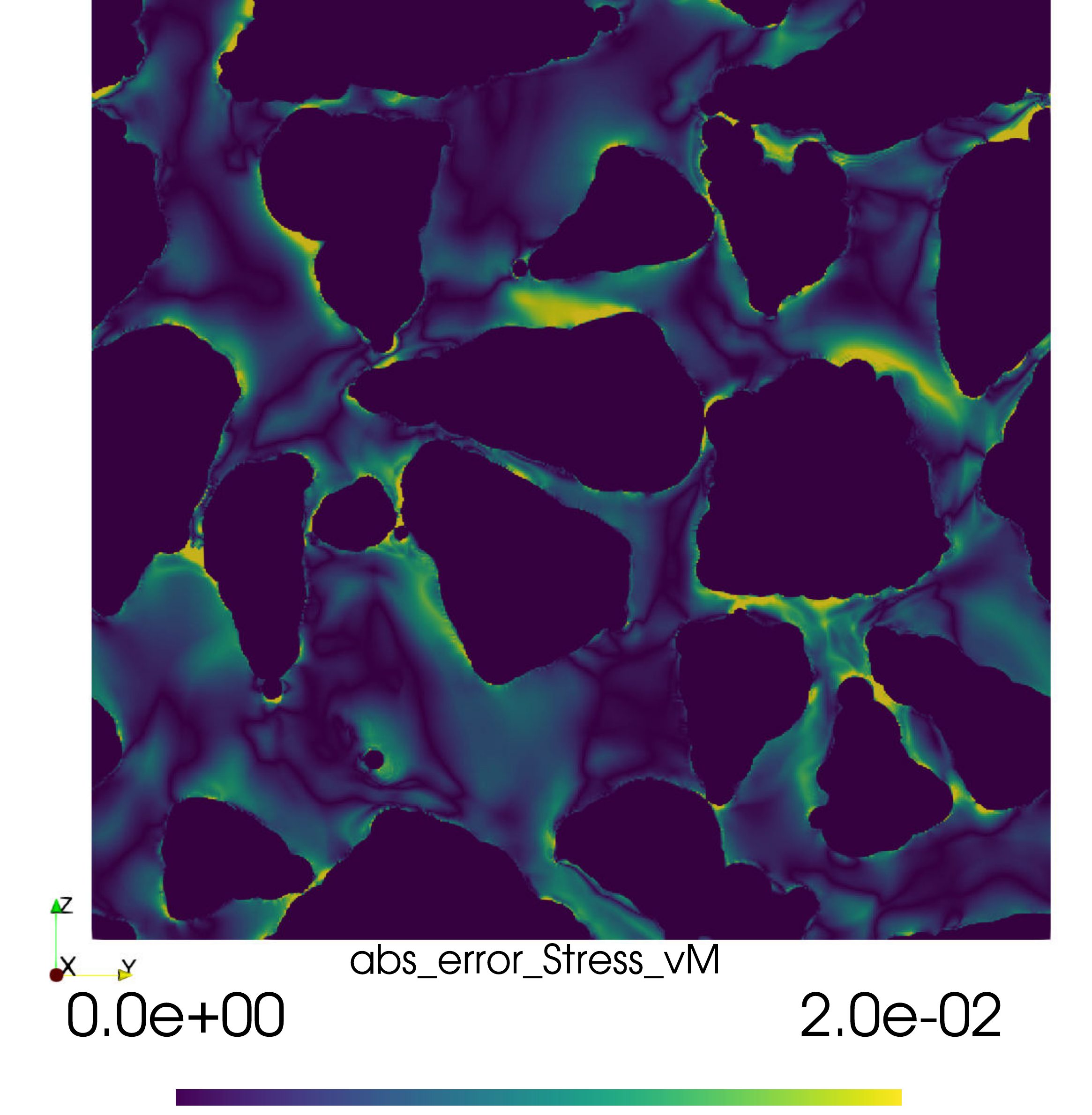}
		\caption{Error}
		\label{fig:grains_absError_stressvM_512}
	\end{subfigure}
	\caption{Equivalent stress $\sigma^{\texttt{ev}}$ at the mid-plane of the microstructures.}
	\label{fig:universality_microstructures_plane}
\end{figure}	

\section{Conclusions}
\label{sec:conclusions}

This work was concerned with the use of Fourier neural operators (FNOs) for computational micromechanics. Although deep learning and operator-learning frameworks are available for homogenization tasks, these do not appear to outperform the most efficient discretization-solver combinations available via traditional (non-neural) methods. We attribute this shortcoming to the desire of creating a "{}new"{} tool, repeating mistakes that may be avoided and falling into traps that are unnecessary.\\
Instead, we recommend to generalize the available framework~\cite{Segurado2020CPReview,LebensohnRolletCPReview,FFTReview2020} of solvers based on the fast Fourier transform (FFT) to include FNOs. The work at hand emulated the basic scheme~\cite{MoulinecSuquet1994,MoulinecSuquet1998}, the simplest FFT solver, by an FNO. By this simple trick, we were able to remove most of the limitations of the approximation result~\cite{Bhattacharya2024LearningHomogenization} by Bhattacharya and coworkers on FNOs in homogenization. Rather than relying upon continuity properties of the solution mapping $(\bar{\feps},\C) \mapsto \fu$, we emulate the constructive approximation of the solution to design a recurrent FNO \eqref{eq:FNO_LS_FLSO}, see also You et al.~\cite{you2022learning} and Kelly-Kalidindi~\cite{Kelly2025}. The obtained approximation result stated in Theorem \ref{thm:main} is astonishingly general. It appears that the traditional supervised training procedure serves as the most severe restriction for FNOs in homogenization.\\
Interestingly, the link that we established between FFT solvers and FNO came with a dictionary that permits us to translate between terminology available for both methods. Most importantly, "{}depth"{} of the operator network corresponds to the iteration count required by the FFT solver. Apart from this remark, we made a number of observations:
\begin{enumerate}
	\item Traditionally, FFT methods encounter two different kinds of errors~\cite{Ferrier2023}: the resolution error, a consequence of inaccurate discretization, and the iteration error, i.e., the "{}distance"{} to discrete equilibrium. In our construction of the FNO discussed in section \ref{sec:FNO_result}, both errors are mingled. The iteration error is reduced by increasing the depth of the FNO. There is no "{}classical"{} discretization error because we permit infinitely many Fourier coefficients to appear in the Fourier layer \eqref{eq:FNO_defn_convolution_Fourier}. There is no alternative, however, if general microstructures are to be treated, see the discussion surrounding eqns.~\eqref{eq:FNO_defn_rescaled_stiffness}--\eqref{eq:FNO_defn_vanishing strains}.
	\item Instead of the basic scheme~\cite{MoulinecSuquet1994,MoulinecSuquet1998}, more advanced solution schemes developed for FFT schemes could be used: polarization schemes~\cite{EyreMilton1999,MichelMoulinecSuquet2001,Donval2024Polarization}, fast gradient~\cite{Schneider2017Nesterov,Ernesti2020PhasefieldFracture} and conjugate gradient methods~\cite{ZemanCG2010,BrisardDormieux2010,NonlinearCG2020} or Quasi-Newton strategies~\cite{Chen2019Anderson,QuasiNewton2019,BB2019}. Some of these methods require much less than the $O(\kappa)$ iterations necessary for the basic scheme, where $\kappa = \alpha_+ / \alpha_-$ stands for the material contrast. However, no solver requires less than $O(\sqrt{\kappa})$ iterations. This fact is a consequence of general results on the minimum number of gradient evaluations required by iterative methods for approximating the minimizer of a strongly convex function with Lipschitz gradient, see Nesterov's book~\cite[Thm.~2.1.13]{NesterovBook}. It would be interesting to see whether FFT solvers could be improved by deep learning techniques, in particular whether the $O(\sqrt{\kappa})$ bound could be overcome. Indeed, FNOs do not require to compute the gradient of the objective function exactly, i.e., there is room for improvement.
	\item The FNO which we constructed was used for a fixed discretization scheme in section \ref{sec:computations}. However, a variety of discretization methods is available for FFT methods, including spectral~\cite{MoulinecSuquet1994,MoulinecSuquet1998,Bonnet2007}, finite difference~\cite{Willot2015,StaggeredGrid,Finel2025TET}, finite element~\cite{FFTFEM,FANS,Ladecky2022FEM}, finite volume~\cite{DornSchneider2019,CCMF2021} and more exotic discretizations~\cite{BrisardDormieux2012,BSplinesFFTFromWuhan,Jabs2024Radon}. The FFT solvers were designed under certain conditions on the discretization, i.e., that the symbol of the Eshelby-Green operator \eqref{eq:homogenization_basic_EshelbyGreen} is self-adjoint and non-expansive. Therefore, deep learning could be employed to learn better discretizations, as well.
	\item Composite-voxel methods~\cite{Gelebart2015CompositeVoxels,CompositeVoxelsElastic} serve as an indispensable tool for practical computational micromechanics. These associate an appropriate "{}mixing rule"{} to voxels comprising more than one microstructure phase. It is conceivable that FNOs might profit from such constructions, as well. Conversely, deep-learning techniques could help constructing more advanced "{}mixing rules"{} than the ones currently used.
	\item Our construction \eqref{eq:FNO_result_FNO_construction_finalExpression} was purely qualitative, but could be made \emph{quantitative} with some effort. More precisely, the number of Fourier layers and the number of neural-network layers for the ReLU network could be estimated as a function of the input parameters $\kappa$, $\eps_0$ and $\delta$. However, it is not clear whether this estimate would be optimal. The principal technical difficulty here is the optimality of the power $p$ in the a-priori $L^p$-bound \eqref{eq:homogenization_basic_Lp_bound}.
	\item It is known that FFT solvers may be applied to solve unit-cell problems involving porous materials as long as the microstructure is mechanically stable and a numerically robust discretization is used~\cite{PorousSLE2020}. Thus, FNOs may be constructed to solve such problems, as well. However, the fidelity of such operators depends strongly on specific quantities related to the interface, see Schneider~\cite{PorousSLE2020} for the porous case, and Sab et al.~\cite{Sab2024} for microstructures involving both pores and rigid inclusions. In particular, problems with "{}infinite material contrast"{} may be treated.
	\item The "{}F"{}-part in FNO is responsible for obtaining a uniform bound on the depth, a.k.a. iteration count. It is hard to imagine a uniform approximation result for homogenization problems using the Deep-O-Net~\cite{Lu2021DeepONet} similar to ours which was obtained for FNOs. It thus appears that FNOs are more suited to homogenization problems with their low-regularity solutions.
	\item Traditionally, FNOs were used to solve linear homogenization problems. Of course, the primary problem here is the dependence of the solution to the homogenization problem on the coefficients, which is strongly nonlinear. Still, some of the most interesting questions in computational mechanics involve nonlinear and inelastic constitutive laws. These requirements were taken into consideration when designing FFT methods~\cite{MoulinecSuquet1994,MoulinecSuquet1998}. Via the established link between FFT schemes and FNOs, the latter may be infused with knowledge how to handle inelasticity in an efficient manner.
\end{enumerate}

\section*{Acknowledgements}

BHN and MS acknowledge support from the European Research Council within the Horizon Europe program - project 101040238. MS is grateful to Burigede Liu and Michael Ortiz for inviting him to the IUTAM symposium on data driven mechanics 2025 where he was introduced to FNOs in homogenization. \review{We thank the anonymous referees for taking the time to review the manuscript and for providing constructive feedback.}

\section*{Data availability statement}

The data that support the findings of this study are available from the corresponding author upon reasonable request.

\appendix

\section{Elementary estimates for fixed-point methods}
\label{apx:fixed_point_estimates}

This appendix collects two rather elementary a-priori estimates valid for general fixed-point methods. We state them in this generality because of their inherent simplicity.\\
We start with a simple a-priori estimate for fixed-point methods
\begin{lem}
	For a contraction mapping $F$ on a Banach space $X$ with unique fixed point $x^* \in X$ and contraction constant $\gamma \in (0,1)$, the estimate
	\begin{equation}\label{eq:fixed_point_estimates_simpleAPrioriEstimateFixedPoint}
		\| x^* - x^0 \|_X \leq \frac{ \| F(x^0) - x^0\|_X }{ 1 - \gamma}
	\end{equation}
	holds for any element $x^0\in X$.
\end{lem}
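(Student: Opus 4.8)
The plan is to exploit the fixed-point property $F(x^*) = x^*$ together with the contraction bound in a single triangle-inequality estimate, avoiding any appeal to the convergence of the Picard iteration. First I would write
\begin{equation}\label{eq:fixed_point_estimates_proofsketch_triangle}
	\| x^* - x^0 \|_X \leq \| x^* - F(x^0) \|_X + \| F(x^0) - x^0 \|_X .
\end{equation}
The term $\| x^* - F(x^0) \|_X$ is then rewritten using $x^* = F(x^*)$ as $\| F(x^*) - F(x^0) \|_X$, and the contraction hypothesis gives $\| F(x^*) - F(x^0) \|_X \leq \gamma \, \| x^* - x^0 \|_X$.

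Substituting this bound back into \eqref{eq:fixed_point_estimates_proofsketch_triangle} yields
\begin{equation}\label{eq:fixed_point_estimates_proofsketch_absorb}
	\| x^* - x^0 \|_X \leq \gamma \, \| x^* - x^0 \|_X + \| F(x^0) - x^0 \|_X ,
\end{equation}
and since $\gamma \in (0,1)$ the coefficient $1 - \gamma$ is strictly positive, so the term $\gamma \, \| x^* - x^0 \|_X$ may be absorbed into the left-hand side. Dividing by $1 - \gamma$ gives exactly the claimed estimate \eqref{eq:fixed_point_estimates_simpleAPrioriEstimateFixedPoint}. One should note that $\| x^* - x^0 \|_X$ is finite (the fixed point $x^*$ is assumed to exist as an element of $X$), so the absorption step is legitimate and no circular reasoning about existence is needed.

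There is essentially no obstacle here: the argument is three lines of elementary normed-space manipulation, and the only point requiring a word of care is confirming finiteness of $\| x^* - x^0 \|_X$ before rearranging \eqref{eq:fixed_point_estimates_proofsketch_absorb}, which is immediate from the standing assumption that $x^*, x^0 \in X$. An alternative route — bounding $\| x^* - x^0 \|_X$ by the telescoping sum $\sum_{k \geq 0} \| x^{k+1} - x^k \|_X$ along the Picard iterates $x^{k+1} = F(x^k)$ and summing the geometric series $\sum_{k \geq 0} \gamma^k \| F(x^0) - x^0 \|_X$ — would also work and gives the same constant, but the absorption argument above is shorter and does not require re-deriving convergence of the iteration.
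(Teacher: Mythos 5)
Your argument is correct, and it takes a genuinely different route from the paper. The paper proves the estimate by following the Picard iteration $x^{k+1}=F(x^k)$, writing $x^m-x^0$ as a telescoping sum, bounding it by the geometric series $\sum_k \gamma^k\|F(x^0)-x^0\|_X$, and then passing to the limit $m\to\infty$ using the convergence of the iterates to $x^*$ — i.e., exactly the alternative you sketch in your closing remark. Your absorption argument is shorter and slightly more self-contained: it uses only the \emph{existence} of the fixed point (which the lemma assumes) together with a single application of the triangle inequality and the contraction bound, and it avoids re-invoking the convergence part of the contraction mapping theorem. Both routes yield the same constant $1/(1-\gamma)$, and your remark about finiteness of $\|x^*-x^0\|_X$ correctly disposes of the only point where the absorption step could conceivably go wrong. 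The paper's telescoping version has the minor pedagogical advantage of exhibiting the bound as a limit of the partial-sum estimates that also control the iteration error elsewhere in the paper, but as a proof of this lemma in isolation your version is cleaner.
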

\begin{proof}
	By the proof of the contraction mapping theorem, the inductively defined sequence
	\begin{equation}
		x^{k+1} = F(x^k), \quad k=0,1,2,\ldots,
	\end{equation}
	converges to the unique fixed point $x^*$. Re-writing the difference
	\begin{equation}
		x^{m+1} - x^0 = \sum_{k=0}^m (x^{k+1} - x^k), \quad m > 0,
	\end{equation}
	as a telescopic sum, we estimate
	\begin{equation}
		\| x^{m} - x^0 \|_X \leq  \sum_{k=0}^{m-1} \|x^{k+1} - x^k\|_X \leq \sum_{k=0}^{m-1} \gamma^k \|x^{1} - x^0\|_X = \frac{1 - \gamma^m}{1 - \gamma} \| F(x^0) - x^0 \|_X.
	\end{equation}
	Thus, we obtain the bound
	\begin{equation}
		\begin{split}
			\| x^* - x^0 \|_X &\leq \| x^* - x^m \|_X + \| x^{m} - x^0 \|_X\\
			&\leq \| x^* - x^m \|_X + \frac{1 - \gamma^m}{1 - \gamma} \| F(x^0) - x^0 \|_X \\
			&\rightarrow \frac{1}{1 - \gamma} \| F(x^0) - x^0 \|_X
		\end{split}
	\end{equation}
	for $m \rightarrow \infty$, which was to be shown.
\end{proof}

We turn our attention to an estimate for the fixed points associated to a perturbed fixed-point mapping.
\begin{lem}
	Let $X$ be a Banach space. Suppose that two Lipschitz mappings $F$ and $F_\theta$ are given with respective contraction constants $\gamma \in (0,1)$ and $\gamma_\theta \in (0,1)$. Then the estimate
	\begin{equation}\label{eq:fixed_point_estimates_main_estimate}
		\| x^* - x^*_\theta\|_X \leq \frac{\|F(x^*) - F_\theta(x^*)\|_X}{1 - \gamma_\theta}
	\end{equation}
	holds for the (unique) fixed points
	\begin{equation}
		x^* = F(x^*) \quad \text{and} \quad x^*_\theta = F(x^*_\theta).
	\end{equation}
\end{lem}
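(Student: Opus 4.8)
The plan is to exploit the contraction property of $F_\theta$ directly, via a single application of the triangle inequality combined with the defining fixed-point relations. First I would note that, by the Banach fixed-point theorem, the hypotheses $\gamma, \gamma_\theta \in (0,1)$ guarantee that the fixed points $x^* = F(x^*)$ and $x^*_\theta = F_\theta(x^*_\theta)$ exist and are unique, so the statement is meaningful; moreover $1 - \gamma_\theta > 0$, so dividing by it at the end is legitimate.

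The heart of the argument is the chain
\begin{equation}
	\| x^* - x^*_\theta \|_X = \| F(x^*) - F_\theta(x^*_\theta) \|_X \leq \| F(x^*) - F_\theta(x^*) \|_X + \| F_\theta(x^*) - F_\theta(x^*_\theta) \|_X,
\end{equation}
where the first equality substitutes the two fixed-point identities and the inequality inserts $\pm F_\theta(x^*)$. I then estimate the second summand by the contraction property of $F_\theta$, obtaining $\| F_\theta(x^*) - F_\theta(x^*_\theta) \|_X \leq \gamma_\theta \, \| x^* - x^*_\theta \|_X$. Moving this term to the left-hand side gives $(1 - \gamma_\theta) \, \| x^* - x^*_\theta \|_X \leq \| F(x^*) - F_\theta(x^*) \|_X$, and dividing by the positive factor $1 - \gamma_\theta$ yields the claimed bound \eqref{eq:fixed_point_estimates_main_estimate}.

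There is essentially no obstacle: the only decision is whether to route the triangle inequality through $F_\theta$ or through $F$. Splitting off $F_\theta(x^*)$ (rather than $F(x^*_\theta)$) is precisely what makes the constant $\gamma_\theta$, and not $\gamma$, appear in the denominator; the mirror-image computation, inserting $\pm F(x^*_\theta)$ and using the contraction constant $\gamma$ of $F$, produces the symmetric variant with $1 - \gamma$. This elementary estimate is the abstract engine behind the proximity bound \eqref{eq:FNO_LS_FLNO_proximity_estimate} of Proposition \ref{prop:FNO_LS_FLNO}: there $F$ is the Lippmann-Schwinger operator \eqref{eq:homogenization_basic_LSO2}, $F_\theta$ its neural surrogate \eqref{eq:FNO_LS_FLSO}, and the numerator $\| F(x^*) - F_\theta(x^*) \|_{L^2}$ is then controlled by decomposing the double contraction into the part where Assumption \ref{ass:mtheta}(i) applies and a tail on $\{\|\feps^*\| \geq M\}$ handled through Assumption \ref{ass:mtheta}(ii) and the $L^p$-bound \eqref{eq:homogenization_basic_Lp_bound}.
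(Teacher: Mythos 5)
Your argument is correct and coincides with the paper's own proof: insert $\pm F_\theta(x^*)$, apply the triangle inequality, absorb the term $\gamma_\theta\,\|x^*-x^*_\theta\|_X$ into the left-hand side, and divide by $1-\gamma_\theta>0$. The added remarks on well-posedness of the fixed points and on the symmetric variant with $1-\gamma$ are accurate but not needed.
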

\begin{proof}
	We write
	\begin{equation}
		x^* - x^*_\theta = F(x^*) - F_\theta(x^*_\theta) = F(x^*) - F_\theta(x^*) + F_\theta(x^*) - F_\theta(x^*_\theta)
	\end{equation}
	and estimate
	\begin{equation}
		\|x^* - x^*_\theta\|_X \leq \| F(x^*) - F_\theta(x^*)\|_X + \|F_\theta(x^*) - F_\theta(x^*_\theta)\|_X
		\leq \| F(x^*) - F_\theta(x^*)\|_X + \gamma_\theta \|x^* - x^*_\theta\|_X.
	\end{equation}
	Thus, the inequality 
	\begin{equation}
		(1 - \gamma_\theta) \|x^* - x^*_\theta \|_X \leq \| F(x^*) - F_\theta(x^*)\|_X
	\end{equation}	
	holds, which was to be shown.
\end{proof}

\section{Arguments for the continuous basic scheme}
\label{apx:basic}

\subsection{Estimates for the Eshelby-Green operator}
\label{apx:basic_EshelbyGreen}

The purpose of this appendix is to provide the necessary arguments to show the validity of Lemma \eqref{lem:homogenization_basic_Lp_estimates_EshelbyGreen} which comprises three sub-statements:
\begin{enumerate}
	\item The tensor-valued Fourier multiplier gives rise to a bounded linear operator on $L^p(Y;\Sym{d})$ for exponents $1<p<\infty$, i.e., 
	\begin{equation}\label{eq:homogenization_basic_LpBound_EshelbyGreen_apx}
		\ffGamma \in L(L^p(Y;\Sym{d})), \quad 1 < p < \infty,
	\end{equation}
	\item For the exponent $p=2$, the Eshelby-Green operator \eqref{eq:homogenization_basic_EshelbyGreen} defines an orthogonal projector on the $L^2$-space \eqref{eq:homogenization_setup_L2} onto the (closed) subspace of compatible strains. In particular, the norm equality
	\begin{equation}\label{eq:homogenization_basic_L2Bound_EshelbyGreen_apx}
		\| \ffGamma \|_{L(L^2(Y;\Sym{d}))} = 1
	\end{equation}
	holds.
	\item For every constant $C > 1$, there is an exponent $\check{p}(C) > 2$, s.t., the inequality
	\begin{equation}\label{eq:homogenization_basic_ExplicitLpBound_EshelbyGreen_apx}
		\| \ffGamma \|_{L(L^p(Y;\Sym{d}))} \leq C \quad \text{holds for all} \quad 2 \leq p \leq \check{p}(C).
	\end{equation}
\end{enumerate}
We provide the arguments in the same order as the statements:
\begin{enumerate}
	\item For any $j=1,2,\ldots,d$, we define the Riesz transform $R_j$~\cite{SteinWeiss1971} as a Fourier multiplier acting on any function $f \in L^2(Y)$ via
	\begin{equation}\label{eq:homogenization_basic_lemma_proof_Riesz_defn}
		R_j f (\fx) = -\texttt{i} \sum_{\fxi \in \Z^d \backslash\{\bm{0}\}} \frac{\tilde{\xi}_j}{\|\tilde{\fxi}\|} \hat{f}(\fxi) \, e^{\texttt{i} \, \fx \cdot \tilde{\fxi}}, \quad \fx \in Y,
	\end{equation}
	Then, by the result of Iwaniec-Martin~\cite{IwaniecMartin1996Riesz} and the transference principle~\cite{SteinWeiss1971}, the operator-norm bound
	\begin{equation}\label{eq:homogenization_basic_lemma_proof_Riesz_bound}
		\|R_j\|_{L(L^p(Y))} \leq \cot \left( \frac{\pi}{2p^*} \right) \quad \text{for} \quad 1<p<\infty \quad \text{and} \quad p^* = \max\left( p, \frac{p}{p-1} \right)
	\end{equation}
	is valid. By the definitions \eqref{eq:homogenization_basic_EshelbyGreen_Fourier_coefficients}, \eqref{eq:homogenization_basic_EshelbyGreen_Fourier_coefficients_explicit} and \eqref{eq:homogenization_basic_lemma_proof_Riesz_defn}, we may express the action of the Eshelby-Green operator \eqref{eq:homogenization_basic_EshelbyGreen} in the component form
	\begin{equation}\label{eq:homogenization_basic_lemma_proof_Gamma_via_Riesz}
		\left(\ffGamma:\ftau\right)_{ij} = -\sum_{k=1}^d R_i R_k \tau_{jk} - \sum_{k=1}^d R_j R_k \tau_{ij} - \sum_{k,\ell = 1}^d R_i R_j R_k R_l \tau_{kl}, \quad i,j=1,\ldots,d.
	\end{equation}
	By the $L^p$-boundedness of the Riesz transform, the Eshelby-Green operator \eqref{eq:homogenization_basic_EshelbyGreen} is also $L^p$-bounded, as was to be shown.
	\item We only provide a sketch, because this property is well-known~\cite[Lemma 3.2]{Schneider2015Convergence}. It is straightforward to show that the Eshelby-Green operator \eqref{eq:homogenization_basic_EshelbyGreen} defines a projector. Moreover, the operator is $L^2$-self\review{-}adjoint. In particular, it defines an orthogonal projector. The formula \eqref{eq:homogenization_basic_L2Bound_EshelbyGreen_apx} follows directly, as every orthogonal projector satisfies the Pythagorean Theorem.
	\item This statement is a direct consequence of the Riesz-Thorin interpolation theorem~\cite{SteinWeiss1971}, the $L^2$-bound \eqref{eq:homogenization_basic_L2Bound_EshelbyGreen} and the $L^p$-boundedness \eqref{eq:homogenization_basic_LpBound_EshelbyGreen} for $p>2$. Indeed, fix some $\tilde{p} > 2$. Then, by the Riesz-Thorin interpolation theorem, we have
	\begin{equation}\label{eq:homogenization_basic_lemma_proof_RieszThorinStatement}
		\|\ffGamma \|_{L(L^{p_\theta}(Y;\Sym{d}))} \leq \|\ffGamma \|_{L(L^{2}(Y;\Sym{d}))}^{1-\theta} \|\ffGamma \|_{L(L^{\tilde{p}}(Y;\Sym{d}))}^\theta \equiv \|\ffGamma \|_{L(L^{\tilde{p}}(Y;\Sym{d}))}^\theta
	\end{equation}
	for $0 \leq \theta \leq 1$ and the exponent
	\begin{equation}\label{eq:homogenization_basic_lemma_proof_RieszThorinExponent}
		\frac{1}{p_\theta} = \frac{1 - \theta}{2} + \frac{\theta}{\tilde{p}}.
	\end{equation}
	We assume that the number
	\begin{equation}\label{eq:homogenization_basic_lemma_proof_misc1}
		A_{\tilde{p}} := \|\ffGamma \|_{L(L^{\tilde{p}}(Y;\Sym{d}))}
	\end{equation}
	is greater than the prescribed constant $C > 1$. Otherwise, the validity of the desired statement is immediate. The function
	\begin{equation}\label{eq:homogenization_basic_lemma_proof_misc2}
		f: [0,1] \rightarrow \R, \quad f(\theta) = A_{\tilde{p}}^\theta,
	\end{equation}
	is continuous, strictly monotonically increasing, for instance validated by studying the derivative, and attains the values
	\begin{equation}\label{eq:homogenization_basic_lemma_proof_misc3}
		f(0) = 1 \quad \text{and} \quad f(1) = A_{\tilde{p}}.
	\end{equation}
	Thus, by the intermediate value theorem, there is precisely one value $\check{\theta} \in [0,1]$, s.t. the condition
	\begin{equation}\label{eq:homogenization_basic_lemma_proof_misc4}
		A^{\check{\theta}} \equiv f(\check{\theta}) = C
	\end{equation}
	is satisfied. By the monotonicity of the function $f$, the estimate
	\begin{equation}\label{eq:homogenization_basic_lemma_proof_misc5}
		1 \leq f(\theta) \leq f(\check{\theta}) = C, \quad 0 \leq \theta \leq \check{\theta},
	\end{equation}
	holds, which we may translate into the form
	\begin{equation}\label{eq:homogenization_basic_lemma_proof_misc6}
		\|\ffGamma \|_{L(L^{p_\theta}(Y;\Sym{d}))} \leq C, \quad 0 \leq \theta \leq \check{\theta}.
	\end{equation}
	As the function $[0,1] \ni \theta \mapsto p_\theta \in [2,\tilde{p}]$ is strictly monotonically increasing, the statement \eqref{eq:homogenization_basic_lemma_proof_misc6} implies that the following estimate
	\begin{equation}\label{eq:homogenization_basic_lemma_proof_misc7}
		\|\ffGamma \|_{L(L^{p}(Y;\Sym{d}))} \leq C, \quad p \in [2, p_{\check{\theta}}],
	\end{equation}	
	holds, as claimed.
\end{enumerate}

\subsection{Contraction properties of the standard Lippmann-Schwinger operator}
\label{apx:basic_LSO}

This appendix provides the arguments for the validity of the statements claimed in Theorem \ref{thm:homogenization_basic_contraction_estimates_EshelbyGreen}, more precisely:
\begin{enumerate}
	\item For any prescribed strain $\bar{\feps} \in \Sym{d}$, stiffness field $\C \in \mathcal{M}(\alpha_-,\alpha_+)$ and reference constant $\alpha_0$, the Lippmann-Schwinger operator \eqref{eq:homogenization_basic_LSO1} has precisely one fixed point $\feps^*$ which has the form
	\begin{equation}\label{eq:homogenization_basic_fixed_point}
		\feps^* = \bar{\feps} + \nabla^s \fu,
	\end{equation}
	where the displacement fluctuation $\fu \in H^1_\#(Y)^d$ is the unique solution to the equilibrium equation \eqref{eq:homogenization_setup_equilibrium}. In particular, the fixed point \eqref{eq:homogenization_basic_fixed_point} does not depend on the reference constant $\alpha_0$.
	\item 
	For the choice
	\begin{equation}\label{eq:homogenization_basic_refMat_basic_apx}
		\alpha_0 = \frac{\alpha_+ - \alpha_-}{2},
	\end{equation}
	the Lippmann-Schwinger operator defines an $L^2$-contraction with constant
	\begin{equation}\label{eq:homogenization_basic_contraction_constant_apx}
		\gamma = \frac{\alpha_+ - \alpha_-}{\alpha_+ + \alpha_-} \equiv 1 - \frac{2}{\kappa + 1},
	\end{equation}
	i.e., the estimate
	\begin{equation}\label{eq:homogenization_basic_contraction_estimate_apx}
		\left\| \mathcal{L}(\feps_1; \bar{\feps},\C,\alpha_0) - \mathcal{L}(\feps_2; \bar{\feps},\C,\alpha_0) \right\|_{L^2} \leq \gamma\,\|\feps_1 - \feps_2\|_{L^2}
	\end{equation}
	holds for all $\feps_1, \feps_2 \in L^2(Y;\Sym{d})$.
	\item There are numbers $p \in (2,\infty)$ and $C_p > 0$, depending only on the pair $(\alpha_-,\alpha_+)$, s.t. the estimate
	\begin{equation}\label{eq:homogenization_basic_Lp_bound_apx}
		\|\feps^*\|_{L^p} \leq C_p \, \|\bar{\feps}\|
	\end{equation}
	holds for the fixed point \eqref{eq:homogenization_basic_fixed_point}.
\end{enumerate}
We provide the arguments in the same order:
\begin{enumerate}
	\item The Lippmann-Schwinger operator \eqref{eq:homogenization_basic_LSO2} may be written in the form
	\begin{equation}\label{eq:homogenization_basic_thm_proof_part1_1}
		\ffGamma = \nabla^s \circ \fG \circ \div
	\end{equation}
	with Green's operator
	\begin{equation}\label{eq:homogenization_basic_thm_proof_part1_2}
		\fG \in L(H^{-1}_\#(Y)^d, H^{1}_\#(Y)^d),
	\end{equation}
	the inverse of the bounded linear operator
	\begin{equation}\label{eq:homogenization_basic_thm_proof_part1_3}
		H^{-}_\#(Y)^d \rightarrow H^{-1}_\#(Y)^d, \quad \fu \mapsto \div \nabla^s \fu.
	\end{equation}
	Suppose that $\feps^* \in L^2(Y;\Sym{d})$ is a fixed point of the Lippmann-Schwinger equation		
	\begin{equation}\label{eq:homogenization_basic_thm_proof_part1_4}
		\feps^* = \feps - \frac{1}{\alpha_0}\ffGamma : \left[\C:\feps^* - \alpha_0\,\feps^* \right] \quad \text{with} \quad \ffGamma \equiv \nabla^s\fG \div.
	\end{equation}
	Then, the strain has the form
	\begin{equation}\label{eq:homogenization_basic_thm_proof_part1_5}
		\feps^* = \bar{\feps} + \nabla^s \fu^*
	\end{equation}
	with the displacement fluctuation
	\begin{equation}\label{eq:homogenization_basic_thm_proof_part1_6}
		\fu^* = - \frac{1}{\alpha_0} \fG \div \left[\C:\feps^* - \alpha_0\,\feps^* \right].
	\end{equation}
	The latter equation may be expanded into the form
	\begin{equation}\label{eq:homogenization_basic_thm_proof_part1_7}
		\fu^* = - \frac{1}{\alpha_0} \fG \div \C:\feps^* + \fG \div \,\feps^*.
	\end{equation}
	We observe
	\begin{equation}\label{eq:homogenization_basic_thm_proof_part1_8}
		\div \,\feps^* = \div \,\left[ \bar{\feps} + \nabla^s \fu^* \right] = \div \nabla^s \fu^*
	\end{equation}
	because the macroscopic strain $\bar{\feps}$ is homogeneous, i.e., constant. By definition of Green's operator, we are thus led to the identity
	\begin{equation}\label{eq:homogenization_basic_thm_proof_part1_9}
		\fG \div \,\feps^* = \fu^*.
	\end{equation}
	In view of the expression \eqref{eq:homogenization_basic_thm_proof_part1_7}, we deduce
	\begin{equation}\label{eq:homogenization_basic_thm_proof_part1_10}
		\fu^* = - \frac{1}{\alpha_0} \fG \div \C:\feps^* + \fG \div \,\feps^* = - \frac{1}{\alpha_0} \fG \div \C:\feps^* + \fu^*,
	\end{equation}
	i.e., the validity of the equation
	\begin{equation}\label{eq:homogenization_basic_thm_proof_part1_11}
		\bm{0} = \fG \div \C:\feps^* \equiv \fG \div \C:\left[ \bar{\feps} + \nabla^s \fu^*\right].
	\end{equation}
	As Green's operator is a bounded linear isomorphism, the displacement fluctuation satisfies the equilibrium condition \eqref{eq:homogenization_setup_equilibrium}, as claimed.
	\item As the Eshelby-Green operator \eqref{eq:homogenization_basic_L2Bound_EshelbyGreen} is non-expansive \eqref{eq:homogenization_basic_EshelbyGreen} , the representation formula
	\begin{equation}\label{eq:homogenization_basic_thm_proof_part2_1}
		\mathcal{L}(\feps_1; \bar{\feps},\C,\alpha_0) - \mathcal{L}(\feps_2; \bar{\feps},\C,\alpha_0) = - \frac{1}{\alpha_0}\ffGamma : (\C - \C^0):\left[\feps_1 - \feps_2\right], \quad \feps_1, \feps_2 \in L^2(Y;\Sym{d}),
	\end{equation}
	involving the Lippmann-Schwinger operator \eqref{eq:homogenization_basic_LSO2} implies the estimate
	\begin{equation}\label{eq:homogenization_basic_thm_proof_part2_2}
		\left\| \mathcal{L}(\feps_1; \bar{\feps},\C,\alpha_0) - \mathcal{L}(\feps_2; \bar{\feps},\C,\alpha_0) \right\|_{L^2} \leq \left\| \frac{1}{\alpha_0} (\C - \C^0):\left[\feps_1 - \feps_2\right] \right\|_{L^2}, \quad \feps_1, \feps_2 \in L^2(Y;\Sym{d}).
	\end{equation}
	Due to the assumed lower and upper bounds \eqref{eq:homogenization_setup_microstructures}
	\begin{equation}\label{eq:homogenization_basic_thm_proof_part2_3}
		\alpha_- \, \|\fxi\|^2 \leq \fxi : \C(\fx) : \fxi \leq \alpha_+ \, \|\fxi\|^2, \quad \fxi \in \Sym{d},
	\end{equation}
	valid for a.e. $\fx \in Y$, the inequality
	\begin{equation}\label{eq:homogenization_basic_thm_proof_part2_4}
		\left\| \frac{1}{\alpha_0} (\C(\fx) - \C^0): \fxi\right\| \leq  \max\left( \frac{|\alpha_- - \alpha_0|}{\alpha_0}, \frac{|\alpha_+ - \alpha_0|}{\alpha_0} \right) \|\fxi\|, \quad \fxi \in \Sym{d},
	\end{equation}
	follows from eigenvalue considerations w.r.t. the Frobenius norm for a.e. $\fx \in Y$. For the special case
	\begin{equation}\label{eq:homogenization_basic_thm_proof_part2_5}
		\alpha_0 = \frac{\alpha_- + \alpha_+}{2}, \quad \text{i.e.}, \quad \max\left( \frac{|\alpha_- - \alpha_0|}{\alpha_0}, \frac{|\alpha_+ - \alpha_0|}{\alpha_0} \right) = \frac{\alpha_+ - \alpha_-}{\alpha_+ + \alpha_-},
	\end{equation}
	we are led to the bound
	\begin{equation}\label{eq:homogenization_basic_thm_proof_part2_6}
		\left\| \frac{1}{\alpha_0} (\C(\fx) - \C^0): \fxi\right\| \leq   \frac{\alpha_+ - \alpha_-}{\alpha_+ + \alpha_-} \|\fxi\|, \quad \fxi \in \Sym{d},
	\end{equation}		
	Applied to the previously established inequality \eqref{eq:homogenization_basic_thm_proof_part2_2}, we conclude the desired contraction estimate \eqref{eq:homogenization_basic_contraction_estimate_apx}.
	\item We define the constant
	\begin{equation}\label{eq:homogenization_basic_thm_proof_part3_1}
		C = \frac{1 - \frac{1}{\kappa + 1}}{1 - \frac{2}{\kappa + 1}} > 1.
	\end{equation}
	By Lemma \ref{lem:homogenization_basic_Lp_estimates_EshelbyGreen}, part 3, there is an exponent $\check{p} > 2$, s.t. the estimate
	\begin{equation}\label{eq:homogenization_basic_thm_proof_part3_2}
		\| \ffGamma \|_{L(L^{\check{p}}(Y;\Sym{d}))} \leq C
	\end{equation}
	holds for the Eshelby-Green operator. We will show that the Lippmann-Schwinger operator \eqref{eq:homogenization_basic_LSO2}
	\begin{equation}\label{eq:homogenization_basic_thm_proof_part3_3}
		\mathcal{L}(\feps; \bar{\feps},\C,\alpha_0) = \bar{\feps} - \frac{1}{\alpha_0}\ffGamma : (\C - \C^0):\feps, \quad \feps \in L^{\check{p}}(Y;\Sym{d}),
	\end{equation}
	and the choice \eqref{eq:homogenization_basic_refMat_basic} defines an $L^{\check{p}}$-contraction with contraction constant
	\begin{equation}\label{eq:homogenization_basic_thm_proof_part3_4}
		\check{\gamma} = 1 - \frac{1}{\kappa + 1}.
	\end{equation}
	We estimate, for $\feps_1,\feps_2 \in L^{\check{p}}(Y;\Sym{d})$, 
	\begin{equation}\label{eq:homogenization_basic_thm_proof_part3_5}
		\begin{split}
			\left\| \mathcal{L}(\feps_1; \bar{\feps},\C,\alpha_0) - \mathcal{L}(\feps_2; \bar{\feps},\C,\alpha_0) \right\|_{L^{\check{p}}} 
			&= \left\|  \ffGamma : \frac{\C - \C^0}{\alpha_0}:\left[\feps_1 - \feps_2 \right] \right\|_{L^{\check{p}}}\\
			&\leq C \left\|  \frac{\C - \C^0}{\alpha_0}:\left[\feps_1 - \feps_2 \right] \right\|_{L^{\check{p}}}\\
			&\leq C \gamma \left\| \feps_1 - \feps_2 \right\|_{L^{\check{p}}}\\
			&\leq \check{\gamma} \left\| \feps_1 - \feps_2 \right\|_{L^{\check{p}}}\\
		\end{split}
	\end{equation}
	where we used the $L^{\check{p}}$-bound \eqref{eq:homogenization_basic_thm_proof_part3_2} and the contraction estimate \eqref{eq:homogenization_basic_thm_proof_part2_6}. The last line follows from the definition $C = \check{\gamma} / \gamma$, see eq. \eqref{eq:homogenization_basic_thm_proof_part3_1}. Thus, the operator \eqref{eq:homogenization_basic_thm_proof_part3_3} is an $L^{\check{p}}$-contraction with contraction constant $\tilde{\gamma} \in (0,1)$. Then, we invoke the a-priori estimate
	\eqref{eq:fixed_point_estimates_simpleAPrioriEstimateFixedPoint}
	\begin{equation}\label{eq:homogenization_basic_thm_proof_part3_6}
		\| x^* - x^0 \|_X \leq \frac{ \| F(x^0) - x^0\|_X }{ 1 - \check{\gamma}}
	\end{equation}
	for $X=L^{\check{p}}(Y;\Sym{d})$ and $x^0 = \bar{\feps}$, i.e., we obtain the bound
	\begin{equation}\label{eq:homogenization_basic_thm_proof_part3_7}
			\| \feps^* - \bar{\feps} \|_{L^{\check{p}}} \leq \frac{1}{ 1 - \check{\gamma}} \, \left\|  \ffGamma : \frac{\C - \C^0}{\alpha_0}:\bar{\feps} \right\|_{L^{\check{p}}} \leq \frac{\check{\gamma}}{ 1 - \check{\gamma}} \, \|\bar{\feps}\|.
	\end{equation}
	In particular, we are led to the estimate
	\begin{equation}\label{eq:homogenization_basic_thm_proof_part3_8}
		\| \feps^* \|_{L^{\check{p}}}  \leq \| \bar{\feps} \|_{L^{\check{p}}} + \| \feps^* - \bar{\feps} \|_{L^{\check{p}}}
		 \leq \left( 1 + \frac{\check{\gamma}}{ 1 - \check{\gamma}} \right) \| \bar{\feps} \|_{L^{\check{p}}} \equiv \frac{1}{ 1 - \check{\gamma}}\| \bar{\feps} \|,
	\end{equation}
	which we may also write in the form
	\begin{equation}\label{eq:homogenization_basic_thm_proof_part3_9}
		\| \feps^* \|_{L^{\check{p}}} \leq (1 + \kappa)\,\| \bar{\feps} \|.
	\end{equation}
\end{enumerate}

\section{Contraction estimates for the Lippmann-Schwinger neural operator}
\label{apx:LSNO}

The purpose of this appendix is to provide the arguments for the validity of Prop.~\ref{prop:FNO_LS_FLNO}. More precisely,
under the assumption \ref{ass:mtheta}, we fix a field
\begin{equation}\label{eq:FNO_LS_StiffnessSpace_apx}
	\C \in \mathcal{M}_Y(\alpha_-,\alpha_+)
\end{equation}
of linear elastic stiffness tensors and a macroscopic strain $\bar{\feps} \in \Sym{d}$. Then, the following properties hold:
\begin{enumerate}
	\item The neural operator \eqref{eq:FNO_LS_FLSO} is well-defined and satisfies the estimate
	\begin{equation}\label{eq:FNO_LS_FLNO_estimate_apx}
		\left\| \mathcal{L}_\theta(\feps_1; \bar{\feps},\C,\alpha_0) - \mathcal{L}_\theta(\feps_2; \bar{\feps},\C,\alpha_0) \right\|_{L^2} \leq (\gamma_0 + \delta) \|\feps_1 - \feps_2\|_{L^2}
	\end{equation}
	in case the quantity
	\begin{equation}\label{eq:FNO_LS_FLNO_contraction_constant_apx}
		\gamma_0 = \max \left( \left| \frac{\alpha_+ - \alpha_0}{\alpha_0}\right|, \left| \frac{\alpha_- - \alpha_0}{\alpha_0}\right| \right)
	\end{equation}
	does not exceed unity.
	\item In case the quantity
	\begin{equation}\label{eq:FNO_LS_FLNO_combined_contraction_constant_apx}
		\gamma_\theta = \gamma_0 + \delta
	\end{equation}
	is less than unity, the unique fixed point $\feps_{\theta}^*$ of the contractive neural operator \eqref{eq:FNO_LS_FLSO} satisfies the proximity estimate
	\begin{equation}\label{eq:FNO_LS_FLNO_proximity_estimate_apx}
		\| \feps^* - \feps^*_\theta\|_{L^2} \leq \frac{1}{1 - \gamma_\theta} \left[ \delta\,\| \feps^* \|_{L^2} + C \, \frac{\|\feps^* \|_{L^p}^{\frac{p}{2}}}{M^{\frac{p-2}{2}}} \right]
	\end{equation}
	to the (unique) fixed point $\feps^*$ of the Lippmann-Schwinger operator \eqref{eq:homogenization_basic_LSO2} which resides in an $L^p$ space with an exponent $p>2$ that depends only on the material contrast $\kappa = \alpha_+ / \alpha_-$, see the statement \eqref{eq:homogenization_basic_Lp_bound}.
\end{enumerate}
%The arguments are provided in order.
\begin{enumerate}
	\item \underline{Well-posedness:}
	We want to show that the action of the operator \eqref{eq:FNO_LS_FLSO}
	\begin{equation}\label{eq:FNO_LS_prop_proof_part1_1}
		\mathcal{L}_\theta(\feps; \bar{\feps},\C,\alpha_0) = \bar{\feps} - \ffGamma : \Multiplication_\theta \left( \frac{1}{\alpha_0}(\C - \C^0), \feps \right)
	\end{equation}
	leads to an element of the $L^2$-space for every input $\feps \in L^2(Y;\Sym{d})$. By the triangle inequality and the non-expansivity \eqref{eq:homogenization_basic_L2Bound_EshelbyGreen} of the Eshelby-Green operator \eqref{eq:homogenization_basic_EshelbyGreen}, we observe
	\begin{equation}\label{eq:FNO_LS_prop_proof_part1_2}
		\begin{split}
			\left\| \mathcal{L}_\theta(\feps; \bar{\feps},\C,\alpha_0)\right\| _{L^2}
			& \leq \|\bar{\feps}\| + \left\| \ffGamma : \Multiplication_\theta \left( \frac{1}{\alpha_0}(\C - \C^0), \feps \right) \right\|_{L^2}\\
			& \leq \|\bar{\feps}\| + \left\| \Multiplication_\theta \left( \frac{1}{\alpha_0}(\C - \C^0), \feps \right) \right\|_{L^2}.
		\end{split}
	\end{equation}
	Thus is suffices to show that the field
	\begin{equation}\label{eq:FNO_LS_prop_proof_part1_3}
		\Multiplication_\theta \left( \mathds{T}^0 , \feps \right)
	\end{equation}
	defines an element of the $L^2$-space, where we introduced the tensor field
	\begin{equation}\label{eq:FNO_LS_prop_proof_part1_4}
		\mathds{T}^0 = \frac{1}{\alpha_0}(\C - \C^0).
	\end{equation}
	By assumption \eqref{eq:FNO_LS_FLNO_contraction_constant}, the field \eqref{eq:FNO_LS_prop_proof_part1_4} does not exceed unity in spectral norm for almost every point. Thus, we may use the estimate \eqref{eq:ass_mtheta_smallInputEstimate} to derive the bound
	\begin{equation}\label{eq:FNO_LS_prop_proof_part1_5}
		\begin{split}
		\left\| \Multiplication_\theta \left( \mathds{T}^0(\fx) , \feps(\fx) \right)\right\| 
		&\leq \left\|\Multiplication_\theta \left( \mathds{T}^0(\fx) , \feps(\fx) \right) - \mathds{T}^0(\fx):\feps(\fx)\right\| + \left\|\mathds{T}^0(\fx):\feps(\fx)\right\|\\
		&\leq (1 + \delta)\left\|\feps(\fx)\right\|
		\end{split}
	\end{equation}
	for almost every point $\fx \in Y$, s.t. the estimate
	\begin{equation}\label{eq:FNO_LS_prop_proof_part1_6}
		\| \feps(\fx) \| \leq M
	\end{equation}
	is satisfied. On the other hand, in case the bound
	\begin{equation}\label{eq:FNO_LS_prop_proof_part1_7}
		\| \feps(\fx) \| \geq M
	\end{equation}
	holds, we may use the estimate \eqref{eq:ass_mtheta_largeInputEstimate} in the form
	\begin{equation}\label{eq:FNO_LS_prop_proof_part1_8}
		\left\| \Multiplication_\theta \left( \mathds{T}^0(\fx) , \feps(\fx) \right)\right\| \leq \left\| \Multiplication_\theta \left( \mathds{T}^0(\fx) , \feps(\fx) \right) - \mathds{T}^0(\fx) :\feps(\fx) \right\| + \|\mathds{T}^0(\fx) :\feps(\fx)\| 
		\leq (C +1 )\left\|\feps(\fx)\right\|.
	\end{equation}
	Thus, we obtain the combined estimate
	\begin{equation}\label{eq:FNO_LS_prop_proof_part1_9}
		\left\| \Multiplication_\theta \left( \mathds{T}^0(\fx) , \feps(\fx) \right)\right\| 
		\leq \left( 1 + \max\left( \delta, C\right) \right) \left\|\feps(\fx)\right\|,
	\end{equation}
	which implies the inequality
	\begin{equation}\label{eq:FNO_LS_prop_proof_part1_10}
		\left\| \Multiplication_\theta \left( \mathds{T}^0 , \feps \right)\right\|_{L^2}
		\leq \left( 1 + \max\left( \delta, C\right) \right) \left\|\feps\right\|_{L^2},
	\end{equation}
	that implies the claimed well-posedness.\\		
	\underline{Lipschitz estimate:}
	We consider the tensor
	\begin{equation}\label{eq:FNO_LS_prop_proof_part1_11}
		\mathds{T}^0 = \frac{1}{\alpha_0}(\C - \C^0),
	\end{equation}
	which satisfies the estimate
	\begin{equation}\label{eq:FNO_LS_prop_proof_part1_12}
		\|\mathds{T}^0(\fx)\| \leq \gamma_0, \quad \text{a.e. }\fx \in Y,
	\end{equation}		
	in spectral norm \eqref{eq:ass_mtheta_smallInputEstimate_prereq}. Thus, by assumption \eqref{eq:ass_mtheta_smallInputEstimate}, we obtain the estimate
	\begin{equation}\label{eq:FNO_LS_prop_proof_part1_13}
		\|\Multiplication_\theta(\mathds{T}^0(\fx),\feps_1(\fx)) - \Multiplication_\theta (\mathds{T}^0(\fx),\feps_2(\fx))\| \leq (\gamma_0 + \delta) \, \| \feps_1(\fx) - \feps_2(\fx)\|
	\end{equation}
	for almost every point $\fx \in Y$. As the Eshelby-Green operator is an $L^2$-orthogonal projector, in particular non-expansive \eqref{eq:homogenization_basic_L2Bound_EshelbyGreen}, the representation
	\begin{equation}\label{eq:FNO_LS_prop_proof_part1_14}
		\mathcal{L}_\theta(\feps_1; \bar{\feps},\C,\alpha_0) - \mathcal{L}_\theta(\feps_2; \bar{\feps},\C,\alpha_0) = - \ffGamma : \left[ \Multiplication_\theta \left( \frac{1}{\alpha_0}(\C - \C^0), \feps_1 \right) - \Multiplication_\theta \left( \frac{1}{\alpha_0}(\C - \C^0), \feps_2 \right) \right]
	\end{equation}
	leads to the estimate
	\begin{equation}\label{eq:FNO_LS_prop_proof_part1_15}
		\begin{split}
		\|\mathcal{L}_\theta(\feps_1; \bar{\feps},\C,\alpha_0) - \mathcal{L}_\theta(\feps_2; \bar{\feps},\C,\alpha_0)\|_{L^2} &= \left\| \ffGamma : \left[ \Multiplication_\theta \left( \mathds{T}^0, \feps_1 \right) - \Multiplication_\theta \left( \mathds{T}^0, \feps_2 \right) \right] \right\|_{L^2}\\
		&\leq \left\| \Multiplication_\theta \left( \mathds{T}^0, \feps_1 \right) - \Multiplication_\theta \left( \mathds{T}^0, \feps_2 \right) \right\|_{L^2}\\
		&\leq (\gamma_0 + \delta) \, \left\| \feps_1 - \feps_2 \right\|_{L^2},
		\end{split}
	\end{equation}
	where we made use of the bound \eqref{eq:FNO_LS_prop_proof_part1_13}. Thus, the estimate \eqref{eq:FNO_LS_FLNO_estimate_apx} is established.
\item By the general perturbation estimate on contraction mappings \eqref{eq:fixed_point_estimates_main_estimate}, the estimate
	\begin{equation}\label{eq:FNO_LS_prop_proof_part2_1}
		\| \feps^* - \feps^*_\theta\|_{L^2} \leq \frac{\| \mathcal{L}(\bar{\feps},\C,\alpha_0; \feps^*) - \mathcal{L}_\theta(\feps^*; \bar{\feps},\C,\alpha_0) \|_{L^2}}{1 - \gamma_\theta}
	\end{equation}
	holds. By the representation
	\begin{equation}\label{eq:FNO_LS_prop_proof_part2_2}
		\mathcal{L}(\bar{\feps},\C,\alpha_0; \feps^*) - \mathcal{L}_\theta(\feps^*; \bar{\feps},\C,\alpha_0) = - \ffGamma : \left[ \mathds{T}^0:\feps^* - \Multiplication_\theta \left(\mathds{T}^0, \feps^* \right) \right],
	\end{equation}
	the estimate
	\begin{equation}\label{eq:FNO_LS_prop_proof_part2_3}
		\left\| \mathcal{L}(\bar{\feps},\C,\alpha_0; \feps^*) - \mathcal{L}_\theta(\feps^*; \bar{\feps},\C,\alpha_0) \right\|_{L^2} \leq \left\| \mathds{T}^0:\feps^* - \Multiplication_\theta \left(\mathds{T}^0, \feps^* \right) \right\|_{L^2}
	\end{equation}
	follows. We introduce the subsets
	\begin{equation}\label{eq:FNO_LS_prop_proof_part2_4}
		Y_{\leq M} = \left\{ \fx \in Y \,\middle|\, \|\feps^*(\fx)\| \leq M \right\} \quad \text{and} \quad Y_{>M} = \left\{ \fx \in Y \,\middle|\, \|\feps^*(\fx)\| > M \right\}
	\end{equation}
	of the unit cell. On the one hand, the inequality
	\begin{equation}\label{eq:FNO_LS_prop_proof_part2_5}
		\left\| \mathds{T}^0(\fx):\feps^*(\fx) - \Multiplication_\theta \left(\mathds{T}^0(\fx), \feps^*(\fx) \right) \right\| \leq \delta\,\| \feps^*(\fx) \|, \quad \text{a.e. } \fx \in Y_{\leq M},
	\end{equation}
	follows by assumption \eqref{eq:ass_mtheta_smallInputEstimate}, which leads to the estimate
	\begin{equation}\label{eq:FNO_LS_prop_proof_part2_6}
		\left(\frac{1}{|Y|}\int_{Y_{\leq M}} \left\| \mathds{T}^0(\fx):\feps^*(\fx) - \Multiplication_\theta \left(\mathds{T}^0(\fx), \feps^*(\fx) \right) \right\|^2 \, d\fx\right)^{\frac{1}{2}} \leq \delta\,\| \feps^* \|_{L^2}.
	\end{equation}
	On the other hand, the bound \eqref{eq:ass_mtheta_largeInputEstimate}
	\begin{equation}\label{eq:FNO_LS_prop_proof_part2_7}
		\left\| \mathds{T}^0(\fx):\feps^*(\fx) - \Multiplication_\theta \left(\mathds{T}^0(\fx), \feps^*(\fx) \right) \right\| \leq C\,\| \feps^*(\fx) \|, \quad \text{a.e. } \fx \in Y_{> M},
	\end{equation}
	implies the estimate
	\begin{equation}\label{eq:FNO_LS_prop_proof_part2_8}
		\frac{1}{|Y|}\int_{Y_{> M}} \left\| \mathds{T}^0(\fx):\feps^*(\fx) - \Multiplication_\theta \left(\mathds{T}^0(\fx), \feps^*(\fx) \right) \right\|^2 \, d\fx \leq C^2\, \frac{1}{|Y|}\int_{Y_{> M}} \left\| \feps^*(\fx) \right\|^2 \, d\fx.
	\end{equation}
	The chain of inequalities
	\begin{equation}\label{eq:FNO_LS_prop_proof_part2_9}
		M^{p-2} \frac{1}{|Y|}\int_{Y_{> M}} \left\| \feps^* \right\|^2 \, d\fx \leq \frac{1}{|Y|}\int_{Y_{> M}} \left\| \feps^* \right\|^p \, d\fx \leq \frac{1}{|Y|}\int_{Y} \left\| \feps^* \right\|^p \, d\fx = \|\feps^* \|_{L^p}^p
	\end{equation}
	leads to the estimate
	\begin{equation}\label{eq:FNO_LS_prop_proof_part2_10}
		\left(\frac{1}{|Y|}\int_{Y_{> M}} \left\| \feps^* \right\|^2 \, d\fx\right)^{\frac{1}{2}} \leq \frac{\|\feps^* \|_{L^p}^{\frac{p}{2}}}{M^{\frac{p}{2}-1}}
	\end{equation}
	Combining the bounds \eqref{eq:FNO_LS_prop_proof_part2_6} and \eqref{eq:FNO_LS_prop_proof_part2_10}, we are led to the inequality
	\begin{equation}\label{eq:FNO_LS_prop_proof_part2_11}
		\left\| \mathds{T}^0:\feps^* - \Multiplication_\theta \left(\mathds{T}^0, \feps^* \right) \right\|_{L^2} \leq \delta\,\| \feps^* \|_{L^2} + C \, \frac{\|\feps^* \|_{L^p}^{\frac{p}{2}}}{M^{\frac{p}{2}-1}},
	\end{equation}	
	which we combine with the estimate \eqref{eq:FNO_LS_prop_proof_part2_1} to obtain the claimed result
	\begin{equation}\label{eq:FNO_LS_prop_proof_part2_12}
		(1 - \gamma_\theta) \| \feps^* - \feps^*_\theta\|_{L^2} \leq \delta\,\| \feps^* \|_{L^2} + C \, \frac{\|\feps^* \|_{L^p}^{\frac{p}{2}}}{M^{\frac{p-2}{2}}}.
	\end{equation}
\end{enumerate}

\section{Detailed arguments for the uniform expressivity of FNOs}
\label{apx:main_result_arguments}

The purpose of the appendix is to provide a proof for Thm.~\ref{thm:main}.
To analyze the fidelity of the constructed operator \eqref{eq:FNO_result_FNO_construction_finalExpression}, we notice the equivalence \eqref{eq:FNO_result_FNO_output_as_iterate}
\begin{equation}\label{eq:FNO_result_proof1}
	\feps^{K+1}_\theta = \bar{\feps} + \nabla^s \fu \quad \text{with} \quad \fu = \mathcal{F}_\theta(\bar{\feps}, \C)
\end{equation}
with the iteratively defined sequence
\begin{equation}\label{eq:FNO_result_proof2}
	\feps^{k+1}_\theta = \mathcal{L}_\theta\left(\feps^k_\theta; \bar{\feps},\C, \frac{\alpha_- + \alpha_+}{2}\right), \quad \feps^0_\theta = \bar{\feps},
\end{equation}
where $k=0,1,2,\ldots$.
To estimate the error, we use the splitting
\begin{equation}\label{eq:FNO_result_proof3}
	\tilde{\mathcal{F}}_\theta(\bar{\feps}, \C) - \feps^* = \feps^{K+1}_\theta - \feps^* = \feps^{K+1}_\theta - \feps^*_\theta + \feps^*_\theta- \feps^*,
\end{equation}
where $\feps^* \in L^2(Y;\Sym{d})$ denotes the unique fixed point of the Lippmann-Schwinger operator
\begin{equation}\label{eq:FNO_result_proof4}
	\feps^* = \mathcal{L}\left(\bar{\feps},\C, \frac{\alpha_- + \alpha_+}{2};\feps^*\right)
\end{equation}
and $\feps^*_\theta$ refers to the fixed point of the neural Lippmann-Schwinger operator
\begin{equation}\label{eq:FNO_result_proof5}
	\feps^*_\theta = \mathcal{L}_\theta\left(\feps^*_\theta; \bar{\feps},\C, \frac{\alpha_- + \alpha_+}{2}\right)
\end{equation}
which is unique in case the parameter $\delta > 0$ is small enough. To be precise, we select the parameter $\delta$ in the range
\begin{equation}\label{eq:FNO_result_proof6}
	\delta \in \left( 0, \frac{1}{\kappa + 1}\right] \quad \text{with the material contrast} \quad \kappa = \frac{\alpha_+}{\alpha_-}.
\end{equation}
Then, by the estimate \eqref{eq:FNO_LS_FLNO_estimate}, the operator
\begin{equation}\label{eq:FNO_result_proof7}
	\mathcal{L}_\theta\left(\cdot; \bar{\feps},\C, \frac{\alpha_- + \alpha_+}{2}\right) \in L(L^2(Y;\Sym{d}))
\end{equation}
defines a contraction with (possibly non-optimal) contraction constant
\begin{equation}\label{eq:FNO_result_proof8}
	\gamma_\theta = 1 - \frac{1}{\kappa + 1}.
\end{equation}
Moreover, the proximity estimate \eqref{eq:FNO_LS_FLNO_proximity_estimate}
\begin{equation}\label{eq:FNO_result_proof9}
	\| \feps^* - \feps^*_\theta\|_{L^2} \leq \frac{1}{1 - \gamma_\theta} \left[ \delta\,\| \feps^* \|_{L^2} + C \, \frac{\|\feps^* \|_{L^p}^{\frac{p}{2}}}{M^{\frac{p-2}{2}}} \right]
\end{equation}
is satisfied. Taking into account the bound \eqref{eq:homogenization_basic_Lp_bound}	
\begin{equation}\label{eq:FNO_result_proof10}
	\|\feps^*\|_{L^p} \leq C_p \, \|\bar{\feps}\| \leq C_p \, \eps_0,
\end{equation}
we are led to the bound
\begin{equation}\label{eq:FNO_result_proof11}
	\| \feps^* - \feps^*_\theta\|_{L^2} \leq \frac{1}{1 - \gamma_\theta} \left[ \delta\, C_2 \, \eps_0 + C \, \frac{C_p^{\frac{p}{2}} \, \eps_0^{\frac{p}{2}}}{M^{\frac{p-2}{2}}} \right].
\end{equation}
To estimate the iteration error, we use the contraction property of the FNO \eqref{eq:FNO_LS_FLNO_estimate} and the a-priori estimate \eqref{eq:FNO_LS_FLNO_proximity_estimate}
\begin{equation}\label{eq:FNO_result_proof12}
	\| \feps^K_\theta - \feps^*_\theta\|_{L^2} \leq \gamma_\theta^{K+1} \| \feps^0_\theta - \feps^*_\theta\|_{L^2} \leq \frac{\gamma_\theta^{K+1}}{1 - \gamma_{\theta}} \left\| \mathcal{L}_\theta\left(\bar{\feps}; \bar{\feps},\C, \frac{\alpha_- + \alpha_+}{2} \right) - \bar{\feps} \right\|_{L^2}.
\end{equation}
To proceed, we take a look at the definition \eqref{eq:FNO_LS_FLSO},
\begin{equation}\label{eq:FNO_result_proof13}
	\mathcal{L}_\theta\left(\bar{\feps}; \bar{\feps},\C, \frac{\alpha_- + \alpha_+}{2}\right) - \bar{\feps} = - \ffGamma : \Multiplication_\theta \left( \frac{1}{\alpha_0}(\C - \C^0), \bar{\feps} \right),
\end{equation}
and utilize the non-expansivity \eqref{eq:homogenization_basic_L2Bound_EshelbyGreen} of the Eshelby-Green operator
\begin{equation}\label{eq:FNO_result_proof14}
	\left\| \mathcal{L}_\theta\left( \bar{\feps}; \bar{\feps},\C, \frac{\alpha_- + \alpha_+}{2} \right) - \bar{\feps} \right\|_{L^2} \leq \left\| \Multiplication_\theta \left( \frac{1}{\alpha_0}(\C - \C^0), \bar{\feps} \right) \right\|_{L^2} \leq \gamma_\theta \|\bar{\feps}\| \leq \gamma_\theta \eps_0
\end{equation}
to deduce the inequality
\begin{equation}\label{eq:FNO_result_proof15}
	\| \feps^K_\theta - \feps^*_\theta\|_{L^2} \leq \frac{\gamma_\theta^{K+2}}{1 - \gamma_{\theta}} \eps_0.
\end{equation}
where we implicitly assumed the validity of the estimate
\begin{equation}\label{eq:FNO_result_proof16}
	M \geq \eps_0.
\end{equation}
Combining the decomposition \eqref{eq:FNO_result_proof3} with the estimates \eqref{eq:FNO_result_proof11} and \eqref{eq:FNO_result_proof15}, we are thus led to the bound
\begin{equation}\label{eq:FNO_result_proof17}
	\left\| \tilde{\mathcal{F}}_\theta(\bar{\feps}, \C) - \feps^* \right\|_{L^2} 
	\leq \gamma_\theta^{K+2} \, \frac{\eps_0}{1 - \gamma_{\theta}} + \delta\, \frac{C_2 \, \eps_0}{1 - \gamma_\theta} + \frac{1}{M^{\frac{p-2}{2}}}  \frac{C \, C_p^{\frac{p}{2}} \, \eps_0^{\frac{p}{2}}}{1 - \gamma_\theta}.
\end{equation}	
Then, we select the parameters \eqref{eq:FNO_result_parameters} s.t. the inequalities
\begin{align}
	\gamma_\theta^{K+2} \, \frac{\eps_0}{1 - \gamma_{\theta}} &\leq \frac{\delta_{\texttt{target}}}{3}, \label{eq:FNO_result_proof18}\\
	\delta\, \frac{C_2 \, \eps_0}{1 - \gamma_\theta} &\leq \frac{\delta_{\texttt{target}}}{3}, \label{eq:FNO_result_proof19}\\
	\frac{1}{M^{\frac{p-2}{2}}}  \frac{C \, C_p^{\frac{p}{2}} \, \eps_0^{\frac{p}{2}}}{1 - \gamma_\theta} &\leq \frac{\delta_{\texttt{target}}}{3}, \label{eq:FNO_result_proof20}
\end{align}
are satisfied, and the desired estimate \eqref{eq:FNO_result_contributions_result} holds. More precisely, we set
\begin{align}
	K  \,  &= \left\lceil \frac{\ln \left( \frac{\delta_{\texttt{target}}}{3} \frac{1 - \gamma_{\theta}}{\eps_0} \right)}{\ln \gamma_\theta} \right\rceil, \label{eq:FNO_result_proof21}\\
	\delta\,  &= \min \left( \frac{\delta_{\texttt{target}}}{3} \frac{1 - \gamma_\theta}{C_2 \, \eps_0}, \frac{1}{\kappa + 1} \right), \label{eq:FNO_result_proof22}\\
	 M & = \max \left( \sqrt[\frac{2}{p-2}]{\frac{C \, C_p^{\frac{p}{2}} \, \eps_0^{\frac{p}{2}}}{1 - \gamma_\theta} \frac{3}{\delta_{\texttt{target}}}}, \eps_0 \right), \label{eq:FNO_result_proof23}
\end{align}
finishing the argument.

\section{Estimates for the constructed ReLU network approximating the double contraction}
\label{apx:multiplication}

The goal of this Appendix is to provide the mathematical arguments for the validity of Lemma \ref{lem:activations_assumptions}. For this purpose, we state and prove the following estimate which is the main technical tool used in this Appendix.
\begin{lem}\label{lem:activations_scalar_multiplication}
	For all numbers $x,y,c \in [-M,M]$, the following estimate holds:
	\begin{equation}\label{eq:activations_scalar_multiplicationNN_estimate}
		\left|\multiplication_\theta(c,x) - \multiplication_\theta(c,y) - c(x-y) \right| \leq 2M\,\delta |x-y|.
	\end{equation}
\end{lem}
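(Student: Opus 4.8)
The plan is to reduce everything to the $W^{1,\infty}$-estimate on $q_\theta$ furnished by Proposition \ref{prop:activations_square_approximation}. First I would expand the definition \eqref{eq:activations_multiplicationNN} of $\multiplication_\theta$ at the two arguments $(c,x)$ and $(c,y)$ and subtract; the terms $q_\theta\!\left(c/(2M)\right)$ cancel, leaving
\begin{equation}
	\multiplication_\theta(c,x) - \multiplication_\theta(c,y) = 2M^2\left[ q_\theta\!\left(\tfrac{c+x}{2M}\right) - q_\theta\!\left(\tfrac{c+y}{2M}\right) - q_\theta\!\left(\tfrac{x}{2M}\right) + q_\theta\!\left(\tfrac{y}{2M}\right)\right].
\end{equation}
In parallel I would apply the polarization identity \eqref{eq:activations_polarization_identity} to the products $cx$ and $cy$ and subtract, obtaining the exact analogue of this display with $q_\theta$ replaced by the square function $q(t)=t^2$ (again the $c$-only contributions drop out). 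Subtracting these two identities represents the left-hand side of \eqref{eq:activations_scalar_multiplicationNN_estimate} purely in terms of the error $e := q_\theta - q$, evaluated at the four points $\tfrac{c+x}{2M}$, $\tfrac{c+y}{2M}$, $\tfrac{x}{2M}$, $\tfrac{y}{2M}$.

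Next I would observe that, because $x,y,c \in [-M,M]$, all four of these arguments lie in $[-1,1]$, so Proposition \ref{prop:activations_square_approximation}(i) applies and gives $\|e\|_{W^{1,\infty}(-1,1)} \le \delta$; in particular $e$ is $\delta$-Lipschitz on $[-1,1]$. I would then group the four error terms into the two differences $e\!\left(\tfrac{c+x}{2M}\right) - e\!\left(\tfrac{c+y}{2M}\right)$ and $e\!\left(\tfrac{x}{2M}\right) - e\!\left(\tfrac{y}{2M}\right)$, noting that in both cases the arguments differ by exactly $(x-y)/(2M)$, so each difference is bounded by $\delta\,\tfrac{|x-y|}{2M}$ by Lipschitz continuity. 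Adding the two bounds and multiplying by the prefactor $2M^2$ yields $2M^2\cdot 2\cdot \delta\,\tfrac{|x-y|}{2M} = 2M\delta\,|x-y|$, which is precisely the asserted estimate.

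I do not expect a real obstacle: the argument is essentially the telescoping of the polarization identity against its ReLU surrogate. The only points demanding care are bookkeeping — verifying that the $c$-dependent terms cancel when differencing in the second slot, and that the polarization identity is used with the same $M$ so that the two identities match term-by-term before $e$ is isolated — and the deliberate use of the \emph{derivative} bound on $e$ rather than merely the sup-norm bound: it is exactly what upgrades an $O(\delta)$ pointwise error into an error proportional to $|x-y|$, which is what the Lipschitz/contraction conclusions of Lemma \ref{lem:activations_assumptions} subsequently require.
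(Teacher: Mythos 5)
Your proposal is correct and follows essentially the same route as the paper: both expand the polarization-identity definitions at $(c,x)$ and $(c,y)$, cancel the $c$-only terms, and reduce the remaining four-point expression for $e=q_\theta-q$ to the derivative bound $\|q'-q_\theta'\|_{L^\infty}\le\delta$ from Proposition \ref{prop:activations_square_approximation}, yielding $2M^2\cdot 2\cdot\delta\,|x-y|/(2M)=2M\delta|x-y|$. The paper phrases the final step via integrals of the piecewise-constant derivatives over intervals of length $|x-y|/(2M)$, whereas you invoke the $\delta$-Lipschitz property of $e$ directly; these are the same estimate.
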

\begin{proof}
	The exact multiplication of two real numbers is represented by the operator
	\begin{equation}\label{eq:activations_scalar_multiplication_proof1}
		\multiplication: \R \times \R \rightarrow \R, \quad m(a,b) = 2M^2 \left[ q\left( \frac{a+b}{2M} \right) - q\left( \frac{a}{2M}\right) - q\left( \frac{b}{2M} \right) \right].
	\end{equation}
	Both functions $q$ and $q_\theta$, defined in eq. \eqref{eq:activations_square} and \eqref{eq:activations_square_approximation}, are absolutely continuous, more precisely admit the representations
	\begin{equation}\label{eq:activations_scalar_multiplication_proof2}
		q_\theta(x) = \int_0^x q_\theta'(\xi)\, d\xi \quad \text{and} \quad q(x) = \int_0^x q'(\xi)\, d\xi
	\end{equation}
	with piece-wise constant functions $q', q_\theta':[-1,1] \rightarrow \R$. 
	To obtain the estimate \eqref{eq:activations_scalar_multiplicationNN_estimate}, we re-write the terms as follows
	\begin{equation}\label{eq:activations_scalar_multiplication_proof3}
		\begin{split}
			\multiplication_\theta(c,x) - \multiplication_\theta(c,y) - \multiplication(c,x) - \multiplication(c,y)
			&= 2M^2 \left[ 
			q_\theta\left( \frac{x+c}{2M} \right) - q_\theta\left( \frac{x}{2M}\right) - q_\theta\left( \frac{c}{2M} \right)	\right.\\
			&- q_\theta\left( \frac{y+c}{2M} \right) + q_\theta\left( \frac{y}{2M}\right) + q_\theta\left( \frac{c}{2M} \right)	\\
			&-q\left( \frac{x+c}{2M} \right) + q\left( \frac{x}{2M}\right) + q\left( \frac{c}{2M} \right)			\\
			&+\left.q\left( \frac{y+c}{2M} \right) - q\left( \frac{y}{2M}\right) - q\left( \frac{c}{2M} \right) \right]\\
			&= 2M^2 \left[ 
			q_\theta\left( \frac{x+c}{2M} \right) - q_\theta\left( \frac{y+c}{2M} \right) \right.\\
			& + q_\theta\left( \frac{y}{2M}\right) - q_\theta\left( \frac{x}{2M}\right) \\
			&-q\left( \frac{x+c}{2M} \right)  +q\left( \frac{y+c}{2M} \right)	\\
			&-\left.q\left( \frac{y}{2M}\right) + q\left( \frac{x}{2M}\right) \right].	
		\end{split}
	\end{equation}
	By the absolutely continuous representations \eqref{eq:activations_scalar_multiplication_proof2}, we notice the identities
	\begin{align}
			q\left( \frac{x+c}{2M} \right) - q\left( \frac{y+c}{2M} \right) &= \int^{\frac{x+c}{2M}}_{\frac{y+c}{2M}}q_\theta'(\xi)\, d\xi, \label{eq:activations_scalar_multiplication_proof4}\\
			q\left( \frac{x}{2M}\right) - q\left( \frac{y}{2M}\right) &= \int^{\frac{x}{2M}}_{\frac{y}{2M}}q_\theta'(\xi)\, d\xi, \label{eq:activations_scalar_multiplication_proof5}\\
			q\left( \frac{x+c}{2M} \right)  -q\left( \frac{y+c}{2M} \right)	&= \int^{\frac{x+c}{2M}}_{\frac{y+c}{2M}}q'(\xi)\, d\xi, \label{eq:activations_scalar_multiplication_proof6}\\
			q\left( \frac{x}{2M}\right) - q\left( \frac{y}{2M}\right) &= \int^{\frac{x}{2M}}_{\frac{y}{2M}}q'(\xi)\, d\xi. \label{eq:activations_scalar_multiplication_proof7}
	\end{align}
	Inserted into the expression \eqref{eq:activations_scalar_multiplication_proof3}, we thus observe
	\begin{equation}\label{eq:activations_scalar_multiplication_proof8}
		\multiplication_\theta(c,x) - \multiplication_\theta(c,y) - \multiplication(c,x) - \multiplication(c,y)
			= 2M^2 \left[ \int^{\frac{x+c}{2M}}_{\frac{y+c}{2M}}q_\theta'(\xi) - q'(\xi)\, d\xi - \int^{\frac{x}{2M}}_{\frac{y}{2M}}q_\theta'(\xi) - q'(\xi)\, d\xi \right],
	\end{equation}
	which by the bound \eqref{eq:activations_square_approximation_estimate} and the triangle inequality for integrals implies the estimate
	\begin{equation}\label{eq:activations_scalar_multiplication_proof9}
		\left|\multiplication_\theta(c,x) - \multiplication_\theta(c,y) - \multiplication(c,x) - \multiplication(c,y) \right|
			\leq 2M^2 \frac{2\delta|x-y|}{2M} = 2 \, M \, \delta |x-y|,
	\end{equation}
	as claimed.
\end{proof}
With the obtained estimate \eqref{eq:activations_scalar_multiplicationNN_estimate}, we are in the position to provide the arguments for the validity of Lemma \ref{lem:activations_assumptions}. More precisely, The construction \eqref{eq:activations_double_contraction_construction}--\eqref{eq:activations_double_contraction_construction2} satisfies the following properties:
\begin{enumerate}[(i)]
	\item 	The Lipschitz estimate
		\begin{equation}\label{eq:activations_assumptions_lem_Lipschitz_apx}
			\left\| \Multiplication_\theta(\mathds{T},\feps_1) - \Multiplication_\theta(\mathds{T},\feps_2) \right\| \leq \left( \|\mathds{T}\| + M\delta d(d+1)\, \max_{i,j}|T_{ij}| \right) \| \feps_1 - \feps_2\|
		\end{equation}
		holds for all strains $\feps_1, \feps_2 \in \Sym{d}$ and all tensors $\mathds{T} \in L(\Sym{d})$ obeying the constraint
		\begin{equation}\label{eq:activations_assumptions_lem_boundedTensors_apx}
			\| \mathds{T} \| \leq M.
		\end{equation}
	\item The estimate
		\begin{equation}\label{eq:activations_assumptions_lem_multiplication_proximity_small_apx}
			\left\| \Multiplication_\theta(\mathds{T}, \feps) - \mathds{T}:\feps \right\| \leq M\delta d(d+1)\, \max_{i,j}|T_{ij}| \|\feps\|
		\end{equation}
		holds for all tensors $\feps \in \Sym{d}$ and all tensors $\mathds{T} \in L(\Sym{d})$ under the constraints
		\begin{equation}\label{eq:activations_assumptions_lem_boundedStrainsTensors_apx}
			 \| \mathds{T} \| \leq M \quad \text{and} \quad \|\feps\| \leq M.
		\end{equation}
	\item 	The bound 
		\begin{equation}\label{eq:activations_assumptions_lem_multiplication_proximity_large_apx}
			\left\| \Multiplication_\theta(\mathds{T},\feps) - \mathds{T}:\feps \right\| \leq (1 + \delta M d (d+1)) \|\mathds{T}\| \, \|\feps\|
		\end{equation}
		holds for all tensors $\mathds{T} \in L(\Sym{d})$ and $\feps \in \Sym{d}$ satisfying the constraints
			\begin{equation}\label{eq:activations_assumptions_lem_boundedTensors_unbounded strains_apx}
				\| \mathds{T} \| \leq M \quad \text{and} \quad \| \feps\| \geq M.
			\end{equation}
\end{enumerate}
We provide the arguments in order:
\begin{enumerate}[(i)]
	\item 	We start with the estimate
	\begin{equation}\label{eq:activations_assumptions_lem_proof_part1_1}
		\left\| \Multiplication_\theta(\mathds{T},\feps_1) - \Multiplication_\theta(\mathds{T},\feps_2) \right\| \leq
		\left\| \Multiplication_\theta(\mathds{T},\feps_1) - \Multiplication_\theta(\mathds{T},\feps_2) - \mathds{T}:r_\theta(\feps_1) + \mathds{T}:r_\theta(\feps_2) \right\|
		+ \| \mathds{T}:(r_\theta(\feps_1) - r_\theta(\feps_2))\|,
	\end{equation}
	where the restriction \eqref{eq:activations_ridge_function} is applied component-wise. The second term may be estimated via
	\begin{equation}\label{eq:activations_assumptions_lem_proof_part1_2}
		\| \mathds{T}:(r_\theta(\feps_1) - r_\theta(\feps_2))\| \leq \| \mathds{T} \| \|(r_\theta(\feps_1) - r_\theta(\feps_2))\| \leq \| \mathds{T} \| \|\feps_1 - \feps_2\|,
	\end{equation}
	where we used the non-expansivity of the ridge function \eqref{eq:activations_ridge_function}, responsible for the validity of the estimate
	\begin{equation}\label{eq:activations_assumptions_lem_proof_part1_3}
		\| r_\theta(\feps_1) - r_\theta(\feps_2)\| \leq \|\feps_1 - \feps_2\|.
	\end{equation}
	The first term on the right-hand side requires a bit more work. Let us define the quantities
	\begin{equation}\label{eq:activations_assumptions_lem_proof_part1_4}
		c_{ij} = \multiplication_\theta(T_{ij}, r_\theta(\eps_{1,j})) - \multiplication_\theta(T_{ij}, r_\theta(\eps_{2,j})) - T_{ij}r_\theta(\eps_{1,j}) + T_{ij}r_\theta(\eps_{2,j})
	\end{equation}
	for $i,j=1,2,\ldots,K = d (d+1) / 2$, which satisfy, by ineq. \eqref{eq:activations_scalar_multiplicationNN_estimate}, the estimates
	\begin{equation}\label{eq:activations_assumptions_lem_proof_part1_5}
		|c_{ij}| \leq 2M\delta |T_{ij}| |r_\theta(\eps_{1,j}) - r_\theta(\eps_{2,j})|.
	\end{equation}
	Here, we used that both the components $|T_{ij}|$ and the components of the strains $\underline{\eps}_1$ as well as $\underline{\eps}_2$ do not exceed the threshold $M$. Then, we may estimate
	\begin{equation}\label{eq:activations_assumptions_lem_proof_part1_6}
		\begin{split}
			& \phantom{=}\, \left\| \Multiplication_\theta(\mathds{T},\feps_1) - \Multiplication_\theta(\mathds{T},\feps_2) - \mathds{T}:r_\theta(\feps_1) + \mathds{T}:r_\theta(\feps_2) \right\|\\
			& {=} \, \left( \sum_{i=1}^K \left( \sum_{j=1}^K c_{ij} \right)^2\right)^{\frac{1}{2}} \leq \left( \sum_{i=1}^K \left( \sum_{j=1}^K |c_{ij}| \right)^2\right)^{\frac{1}{2}}\\
			& {\leq} \, \left( \sum_{i=1}^K \left( \sum_{j=1}^K 2M\delta |T_{ij}| |r_\theta(\eps_{1,j}) - r_\theta(\eps_{2,j})| \right)^2\right)^{\frac{1}{2}}\\
			& {\leq} \, 2M\delta \, \max_{i,j}|T_{ij}| \left( \sum_{i=1}^K \left( \sum_{j=1}^K |r_\theta(\eps_{1,j}) - r_\theta(\eps_{2,j})| \right)^2\right)^{\frac{1}{2}}\\
			& {\leq} \, 2M\delta \, \max_{i,j}|T_{ij}|K \left( |r_\theta(\eps_{1,j}) - r_\theta(\eps_{2,j})|^2\right)^{\frac{1}{2}},
		\end{split}
	\end{equation}
	where we used the Cauchy-Schwarz inequality in $K$ dimensions for the last step. Thus, we obtain the bound
	\begin{equation}\label{eq:activations_assumptions_lem_proof_part1_7}
		\left\| \Multiplication_\theta(\mathds{T},\feps_1) - \Multiplication_\theta(\mathds{T},\feps_2) - \mathds{T}:r_\theta(\feps_1) + \mathds{T}:r_\theta(\feps_2) \right\| \leq 2M\delta \, \max_{i,j}|T_{ij}|K \| r_\theta(\feps_1) - r_\theta(\feps_2) \|. 
	\end{equation}
	Utilizing the estimate \eqref{eq:activations_assumptions_lem_proof_part1_3} and the previously established result \eqref{eq:activations_assumptions_lem_proof_part1_2} proves the claimed inequality \eqref{eq:activations_assumptions_lem_Lipschitz_apx}.
	\item We take a look at the components
	\begin{equation}\label{eq:activations_assumptions_lem_proof_part2_1}
		c_{ij} = \multiplication_\theta(T_{ij},\eps_j) - T_{ij}\eps_j, \quad i,j=1,2,\ldots, K = \frac{d(d+1)}{2}
	\end{equation}
	Invoking the inequality \eqref{eq:activations_scalar_multiplicationNN_estimate} for $c={T_{ij}}$, $x = \eps_j$ and $y = 0$, we obtain the estimate
	\begin{equation}\label{eq:activations_assumptions_lem_proof_part2_2}
		|c_{ij}| \leq 2M\delta |T_{ij}| |\eps_j|.
	\end{equation}
	Here, we used that the numbers $c$, $\eps_j$ and $y$ do not exceed the threshold $M$, and we also relied upon the fact
	\begin{equation}\label{eq:activations_assumptions_lem_proof_part2_3}
		\multiplication_\theta(x,0) = 0, \quad \text{valid for all} \quad x \in \R,
	\end{equation}
	implied by the construction \eqref{eq:activations_square_approximation_origin}.
	Arguing similarly to the bounds \eqref{eq:activations_assumptions_lem_proof_part1_6} and \eqref{eq:activations_assumptions_lem_proof_part1_7}, we arrive at the desired estimate \eqref{eq:activations_assumptions_lem_multiplication_proximity_small_apx}.
	\item The key idea is to use the triangle inequality
	\begin{equation}\label{eq:activations_assumptions_lem_proof_part3_1}
		\left\| \Multiplication_\theta(\mathds{T},\feps) - \mathds{T}:\feps \right\| \leq 
		\left\| \Multiplication_\theta(\mathds{T},\feps) - \mathds{T}:r_{\theta}(\feps) \right\| + \left\| \mathds{T}:r_{\theta}(\feps) - \mathds{T}:\feps \right\|.
	\end{equation}
	For the first term, similar arguments as for establishing the Lipschitz estimate \eqref{eq:activations_assumptions_lem_Lipschitz} permit us to conclude the bound
	\begin{equation}\label{eq:activations_assumptions_lem_proof_part3_2}
		\left\| \Multiplication_\theta(\mathds{T},\feps) - \mathds{T}:r_{\theta}(\feps) \right\| \leq M\delta d(d+1)\, \max_{i,j}|T_{ij}| \| r_\theta(\feps) \|.
	\end{equation}
	By the elementary estimates
	\begin{equation}\label{eq:activations_assumptions_lem_proof_part3_3}
		\max_{i,j}|T_{ij}| \leq \|\mathds{T}\|
	\end{equation}
	and
	\begin{equation}\label{eq:activations_assumptions_lem_proof_part3_4}
		\| r_\theta(\feps) \| \leq \| \feps \|,
	\end{equation}
	we are led to the inequality
	\begin{equation}\label{eq:activations_assumptions_lem_proof_part3_5}
		\left\| \Multiplication_\theta(\mathds{T},\feps) - \mathds{T}:r_{\theta}(\feps) \right\| \leq M\delta d(d+1)\, \|\mathds{T}\| \| \feps \|.
	\end{equation}
	This insight, together with the simple bound
	\begin{equation}\label{eq:activations_assumptions_lem_proof_part3_6}
		\left\| \mathds{T}:r_{\theta}(\feps) - \mathds{T}:\feps \right\| = \left\| \mathds{T}:\left[r_{\theta}(\feps) - \feps\right] \right\| \leq  \| \mathds{T}\| \left\|r_{\theta}(\feps) - \feps \right\| \leq \| \mathds{T}\| \left\| \feps \right\|
	\end{equation}
	imply the claim.
\end{enumerate}	

\bibliographystyle{ieeetr}
{\footnotesize \bibliography{literature}}

\end{document}